\newtheorem{claim}{Claim}
\newcommand{\poly}{\operatorname{poly}}
\newcommand{\R}{\mathbb{R}}
\newcommand{\N}{\mathbb{N}}
\newcommand{\scX}{\mathcal{X}}
\newcommand{\nil}{\textsc{nil}}
\newcommand{\Ind}{\mathds{1}}
\newcommand{\scD}{\mathcal{D}}
\newcommand{\scY}{\mathcal{Y}}
\newcommand{\tvd}[2]{\operatorname{tvd}(#1,#2)}  % Total Variation Distance
\newcommand{\ED}{\triangle}
\newcommand{\EDR}{\ED^{\!*}}
\newcommand{\beps}{\pmb{\epsilon}}
\newcommand{\epsThr}{\alpha}
\newcommand{\epsApx}{\beta}
\newcommand{\Vr}{R} % set of nodes to rebuild
\newcommand{\AlgoRebuild}{\textsc{delay-apx}}
\newcommand{\AlgoMain}{\textsc{fudy-wc}}
\newcommand{\Instance}{\textsc{inst}}
\newcommand{\greedy}{\textsc{greedy}}
\newcommand{\splfun}{\ensuremath{\rho}} % SPLIT FUNCTION
\newcommand{\spl}{\ensuremath{\sigma}} % DECISION RULE, aka SPLIT
\newcommand{\lab}{\lambda} % LABELING RULE
\newcommand{\cost}{f} % complexity of a rule
\newcommand{\gain}{\ensuremath{G}}
\newcommand{\giniImp}{\ensuremath{I_{\operatorname{gini}}}} % Gini Impurity
\newcommand{\giniGain}{\ensuremath{G_{\operatorname{gini}}}} % Gini Gain
\newcommand{\IG}{\ensuremath{G_{\operatorname{info}}}} % Information Gain
\newcommand{\VarGain}{\ensuremath{G_{\Var}}} % Variance Gain
\newcommand{\Var}{\operatorname{var}}
\newcommand{\Lab}{\mathcal{L}}
\newcommand{\Spl}{\mathcal{S}}
\renewcommand{\lg}{\log}
\newtheorem{theorem}{Theorem}
\newtheorem{definition}{Definition}
\newtheorem{lemma}{Lemma}
\begin{document}
\title{Fully-Dynamic Approximate Decision Trees With Worst-Case Update Time Guarantees}
\author{
   Marco Bressan\\ Department of Computer Science,\\ University of Milan
   \and
   Mauro Sozio\\ Institut Polytechnique de Paris,\\ Télécom Paris
}

\maketitle

\begin{abstract}
    We give the first algorithm that maintains an approximate decision tree over an arbitrary sequence of insertions and deletions of examples, with strong guarantees on the \emph{worst-case} running time per update request. For instance, we show how to maintain a decision tree where every vertex has Gini gain within an additive $\alpha$ of the optimum by performing $O\!\left(\frac{d\,\log^4 n}{\alpha^3}\right)$ elementary operations per update, where $d$ is the number of features and $n$ the maximum size of the active set (the net result of the update requests). We give similar bounds for the information gain and the variance gain. In fact, all these bounds are corollaries of a more general result stated in terms of \emph{decision rules} --- functions that, given a set $S$ of examples, decide whether to split $S$ or predict a label. Decision rules give a unified view of greedy decision tree algorithms regardless of the example and label domains, and lead to a general notion of \emph{$\epsilon$-approximate decision trees} that, for natural decision rules such as those used by ID3 or C4.5, implies the gain approximation guarantees above. The heart of our work provides a deterministic algorithm that, given any decision rule and any $\epsilon > 0$, maintains an $\epsilon$-approximate tree using $O\!\left(\frac{d\, \cost(n)}{n} \poly\frac{h}{\epsilon}\right)$ operations per update, where $\cost(n)$ is the complexity of evaluating the rule over a set of $n$ examples and $h$ is the maximum height of the maintained tree.
\end{abstract}

\section{Introduction}
Decision trees represent a fundamental class of models in machine learning, and are among the most successful tools in data mining~\cite{shalevshwartz2014understanding,top10algo}. Given a feature set $\scX$ (e.g., $\R^d$) and a label set $\scY$ (e.g., $\{1,\ldots,k\}$), a decision tree over $\scX, \scY$ is a rooted binary tree $T$ where each vertex $v$ satisfies the following constraints. If $v$ is internal then it has two children, and it has an associated \emph{split rule} $\spl_v : \scX \to \{0,1\}$ that, for each each $x \in \scX$, specifies to which child $x$ should be sent. If $v$ is a leaf, then it has an associated \emph{labeling rule} $\lab_v : \scX \to \scY$ that for every $x \in \scX$ predicts some $y \in \scY$. For every $x \in \scX$, the prediction $T(x)$ of the tree is the label of the leaf reached by $x$ by following the split rules starting from the root of $T$. The problem of constructing a ``good'' decision tree asks, given a (multi)set $S$ of labeled examples from $\scX \times \scY$, to compute a decision tree $T$ that optimizes some measure of quality.\footnote{Formally, in this work we always consider \emph{multisets} of examples, since we do not care about order. However, for easiness  we may refer to them as to ``sets'' or ``sequences'', for instance by writing $S \in (\scX \times \scY)^*$.} For many such measures the problem turns out to be hard~\cite{HYAFIL197615,shalevshwartz2014understanding}, but practice has shown that greedy construction heuristics such as ID3~\cite{Quinlan86-inductive} or C4.5~\cite{quinlan1993c4} yield decision trees that are good enough; in fact, ``decision tree'' has almost become a synonym for \emph{greedy} decision tree.
The typical greedy heuristic proceeds as follows. Given $S$, one computes a split rule $\spl$ that partitions $S$ into two subsets $(S_0,S_1)$ so to maximize some measure of gain, such as the Gini or information gain (for classification) or the variance gain (for regression). If the resulting gain is below some threshold, say $\alpha=0.1$, then the tree consists of a single vertex $v$ with an associated labeling rule $\lab_v$ --- such as the majority or average label of $S$. If instead the gain is at least $\alpha$, then the tree has a root $r$ with split rule $\spl_r=\spl$ whose subtrees are obtained by recursing on $S_0$ and $S_1$. One can add further stopping conditions, such as the tree reaching a certain height or the number of examples at vertices falling below some threshold.

Owing to the increasing rate at which data is generated, updated, and deleted, there has been growing interest in \emph{fully-dynamic} algorithms. In this setting, the input is a sequence of \emph{update requests} in the form $(s,o)$ where $s$ is a labeled example and $o \in \{\textsc{ins},\textsc{del}\}$ is a request to insert or delete $s$ in the current \emph{active set}  --- the multiset of examples obtained by executing all requests received so far. The goal of the algorithm is to maintain a learning model that is good w.r.t.\ the current active set while performing as few elementary operations as possible between any two consecutive update requests. Dynamic algorithms have been studied for maintaining solutions to facility location and clustering~\cite{DBLP:conf/nips/Cohen-AddadHPSS19,DBLP:conf/esa/HenzingerK20,DBLP:journals/corr/abs-2112-07050, DBLP:conf/www/ChanGS18}, as well as for maintaining approximations of the densest subgraph~\cite{DBLP:conf/stoc/BhattacharyaHNT15,DBLP:conf/www/EpastoLS15, DBLP:conf/stoc/SawlaniW20} and accurate subgraph counts~\cite{DBLP:journals/tkdd/StefaniERU17}.

Perhaps surprisingly, there have been a relatively small number of studies on fully-dynamic algorithms for supervised machine learning problems, with fully-dynamic decision trees being somehow neglected until~\cite{BDS22}. In that work, the authors consider the case $\scY=\{0,1\}$ and aim at maintaining a decision tree that is close to the one produced by greedy algorithms such as ID3 or C4.5, where closeness is measured by what they call \emph{$\beps$-feasibility}. Let us introduce some further notation. Given a decision tree $T$, a set $S$ of labeled examples, and a vertex $v \in V(T)$, let $S(T,v)$ be the subset of $S$ formed by those examples that reach $v$ when following the split rules of $T$. A \emph{gain} is a function $G$ that maps pairs $(S,\spl)$ to non-negative reals. For any $v \in V(T)$ let $\spl^*_v = \arg \max_{\spl} G(S(T,v),\spl)$ be the split rule with maximum gain over $S(T,v)$. Given $\beps=(\alpha,\beta) \in (0,1]^2$, we say $T$ is $\beps$-feasible w.r.t.\ $S$ if every $v \in V(T)$ satisfies what follows: if $G(S(T,v),\spl_v^*) \ge \alpha$ then $v$ is internal and $G(S(T,v),\spl_v) \ge G(S(T,v),\spl_v^*) - \beta$, and if $G(S(T,v),\spl_v^*) =0$ then $v$ is a leaf and $\lab_v$ is a majority label.\footnote{In this work by ``majority label'' we mean the mode, i.e., a label with highest frequency.} This means that the algorithm must find a near-optimal split if its gain is sufficiently large, while it must create a leaf if all splits have zero gain; in all other cases, the algorithm has carte blanche. A dynamic algorithm is $\beps$-feasible if at every time it maintains a decision tree that is $\beps$-feasible w.r.t.\ the current active set. Although defined formally in~\cite{BDS22}, $\beps$-feasibility was targeted implicitly by incremental heuristic decision tree algorithms such as Hoeffding Trees~\cite{Domingos00-HighSpeedStreams}.
%, although they resort to heuristics that might not achieve that goal for continuous features. 

The main result of~\cite{BDS22} is a deterministic algorithm that maintains an $\beps$-feasible tree over arbitrary sequences of insertions and deletions using an amortized number of operations per request in $O\!\left(\frac{d \log^3 n}{\epsilon^2}\right)$, where $n$ is the maximum size of the active set at any time and $\epsilon=\min(\alpha,\beta)$. We shall describe briefly their algorithm. The key observation is that, if a set of examples $S$ incurs at most $\epsilon|S|$ insertions and deletions, then the Gini gain of any split rule $\spl$ over $S$ changes by $O(\epsilon)$. Hence, a vertex $v$ can violate the $\beps$-feasibility only if it has been reached by $\Omega(\epsilon n_v)$ update requests from the last time its subtree $T_v$ was rebuilt, with $n_v= |S(T,v)|$. Therefore, if rebuilding $T_v$ requires $O(d \,h \,n_v \,\log n_v)$ operations where $h$ is the height of the output tree (which is the case for algorithms like ID3 or C4.5), then each one of those $\Omega(\epsilon n_v)$ update requests pays for $\Omega\!\left(\frac{d\, h \,\log n_v}{\epsilon}\right)$ operations. Observing that in an $\beps$-feasible tree $h = O\!\left(\frac{\log n}{ \epsilon}\right)$ leads, with some further arguments, to the bound mentioned above.
Unfortunately, this technique gives only an \emph{amortized} bound.
%This is not an artifact of the analysis: one single update request can trigger the reconstruction of the whole tree incurring a cost of $\Omega (n)$ --- and it is not hard to see that reasonable attempts at spreading this computation over $\Omega(n)$ requests fail. Therefore, it is not obvious that there is a fully-dynamic algorithm for maintaining an $\beps$-feasible tree that uses $O( d \poly \log n)$ operations per update request in the worst case.
This is not an artifact of the analysis: if the current active set has size $n$, within the next $\epsilon n$ update requests the algorithm will recompute the entire tree --- and it is not clear how to ``spread'' this computation over those requests. Therefore, it is not obvious that there exists a fully-dynamic algorithm that maintains an $\beps$-feasible tree while using only $O( d \poly \log n)$ operations per update request.
In this work we give the first such algorithm; in fact, we give an algorithm that yields significantly more general guarantees.
%it is not obvious that a fully-dynamic algorithm exists for maintaining a $\beps$-feasible tree while requiring $O( d \poly \frac{\log n}$$ operations per update request in the worst case. In this work we give the first such algorithm; in fact, we give an algorithm with much more general guarantees.

%$O( d \poly \frac{\log n}{\min(\alpha,\beta)} operations per update request in the worst case. In this work we give the first such algorithm; in fact, we give an algorithm with much more general guarantees.

\subsection{Contributions}

\paragraph{(1) Approximation notions for greedy decision trees.} We introduce a general notion of approximation for decision trees. We start by defining \emph{decision rules}, functions $\splfun$ that send every set $S$ of labeled examples into either a split rule $\spl$ or a labeling rule $\lab$. Any decision rule $\splfun$ defines a greedy algorithm $\greedy_{\splfun}$ that, on input $S$, applies $\splfun$ to construct the root of the decision tree and then proceeds recursively. Popular algorithms such as ID.3 or C4.5 are $\greedy_{\splfun}$ for particular choices of $\splfun$. We often take as an example the decision rule $\splfun$ that considers all split rules in the form $\spl(x)=\Ind_{x_j < t}$, that is, ``is the $j$-th feature of $x$ smaller than $t$?''; if the best such split rule yields Gini gain at least $\alpha$ for some fixed $\alpha > 0$, then $\splfun$ returns that rule, else it returns a majority label. 
Recall $S(T,v)$ from above. Given two (multi)sets $S,S'$, their edit distance or symmetric difference is $\ED(S,S')=|S \setminus S'|+|S'\setminus S|$, and their \emph{relative edit distance} is $\EDR(S,S')=\frac{\ED(S,S')}{\max(|S|,|S'|)}$. We say $T$ is $\epsilon$-approximate w.r.t.\ $\greedy_{\splfun}$ on $S$, or w.r.t.\ $(S,\splfun)$ for short, if for every $v \in V(T)$ there exists $S_v$ such that $\EDR(S(T,v),S_v) \le \epsilon$ and that $\splfun(S_v)$ is precisely $\spl_v$ if $v$ is internal and $\lab_v$ if $v$ is a leaf. That is, the decision to make $v$ internal or leaf, and the corresponding rule $\spl_v$ or $\lab_v$, are the output of $\splfun$ on a set that is close to $S(T,v)$.
This notion of approximation is independent of the gain and we can show that, as a special case, it yields the $\beps$-feasibility of~\cite{BDS22}.

\paragraph{(2) A fully-dynamic algorithm for approximate decision trees.}
Equipped with the notions of decision rule $\splfun$ and of $\epsilon$-approximation, we consider what we call the \emph{dynamic $\epsilon$-approximate decision tree problem}. Let $U$ be a sequence of update requests. The problem asks to maintain for all $i=1,2,\ldots$ a decision tree $T^i$ that is $\epsilon$-approximate w.r.t.\ $(S^i,\splfun)$ where $S^i$ is the active set defined by the first $i$ requests in $U$. We present a deterministic algorithm for this problem, \AlgoMain, that uses 
$O\!\left( h_{\splfun}(\epsilon,n)^2 \, \log n \cdot \left(h_{\splfun}(\epsilon,n) + \frac{d \log n}{\epsilon} + \frac{\cost_{\splfun}(n)}{\epsilon n}\right) \!\right)$
operations per update request, where $h_{\splfun}(\epsilon,n)$ is the maximum height of any $\epsilon$-approximate tree w.r.t.\ $(S,\splfun)$ if $|S|\le n$, $\cost_{\splfun}(n)$ is the cost of computing $\splfun$ on a set of $n$ examples, and $n=\max_{i \ge 1}|S^i|$. To appreciate this bound, suppose $\cost_{\splfun}(n)=O(d n \log n)$ and $\splfun$ produces $\gamma$-balanced splits (ones where each subset contains a fraction at least $\gamma$ of the examples). Then we can prove that the bound above is in $O\!\left(\frac{d\, \log^4 n}{\epsilon^3}\right)$.
Our algorithm \AlgoMain\ is substantially different from the algorithm of~\cite{BDS22}. It is based on solving, simultaneously at each vertex of the tree, a relaxed version of the problem, which we call \emph{delayed approximate decision tree problem}. The input to this relaxed problem consists of a set of examples $S$ and a sequence $U$ of $\epsilon |S|$ update requests, and the goal is to compute a decision tree that is $\epsilon$-approximate w.r.t.\ $(S',\splfun)$, where $S'$ is the active set defined by $S$ and $U$, using as few operations as possible for each update request. We construct an algorithm that solves this problem while using $O\!\left( h \, \log n \cdot \left(h + \frac{d \log n}{\epsilon} + \frac{\cost_{\splfun}(n)}{\epsilon n}\right) \right)$ operations per request, where $h \le h_{\splfun}(\epsilon,n)$. From a technical point of view, this is the heart of our work.

\paragraph{(3) Max-gain approximation of popular trees in polylog update time.}
As an application of our results, we give fully-dynamic algorithms with worst-case update time $O\!\left(d \cdot \poly\frac{\log(n)}{\epsilon}\right)$ for maintaining a decision tree where every vertex has gain within an additive $\epsilon$ of the optimum, in the sense of $\beps$-feasibility. Let $\scD$ be the set of all split rules in the form $\Ind_{x_j < t}$ or $\Ind_{x_j = t}$, and let $G$ be a gain function.
%We say $\splfun$ is a max-$G$ decision rule if, whenever $\splfun(S) \in \scD$, then $\splfun(S)=\arg\max_{\spl \in \scD}G(S,\spl)$; and we say $\splfun$ has threshold $\alpha$ if $\splfun(S) \in \scD$ if and only if $\max_{\spl \in \scD}G(S,\spl) \ge \alpha$. \cite{BDS22} consider precisely this kind of rules in the special case where $G$ is the Gini gain and $\scD$ is the set of split rules in the form $\Ind_{x_j < t}$.
We show that, if $G$ is the Gini gain, then we can maintain an $\beps$-feasible tree using $O\!\left(\frac{d\,\log^4 n}{\epsilon^3}\right)$ operations per update request, where $\epsilon=\min(\alpha,\beta)$. This bound is only a factor $\frac{\log n}{\epsilon}$ higher than the one of~\cite{BDS22}, and is for the worst case rather than just amortized. For the information gain and the variance gain we obtain bounds of $O\!\left(\frac{d\,\log^7 n}{\epsilon^3}\right)$ and $O\!\left(\frac{d\,c^2\,\log^4 n}{\epsilon^3}\right)$ respectively, where for the variance gain we assume $\scY = [-c,c]$ for $c\in\R$. To prove these bounds we analyze \emph{max-gain threshold decision rules} $\splfun$ --- ones that, whenever $\splfun(S) \in \scD$, then $\splfun(S)=\arg\max_{\spl \in \scD}G(S,\spl)$, and that $\splfun(S) \in \scD$ if and only if $\max_{\spl \in \scD}G(S,\spl) \ge \alpha$ for some fixed $\alpha > 0$. We show that, for a certain $\epsilon$ function of $\beps$, a tree that is $\epsilon$-approximate w.r.t.\ a max-gain threshold rule is also $\beps$-feasible. Coupling this with our guarantees for $\epsilon$-approximate trees yields the bounds above.
%To obtain these bounds we combine our $\epsilon$-approximation results with several bounds on the \emph{smoothness} of gain measures. For example we show that, if $G$ is the information gain, then $|G(S,\spl)-G(S',\spl)| \le 60 \, \log\max(|S|,|S'|)$ for any $S,S',\spl$.

\subsection{Related work}
Except~\cite{BDS22}, the only existing dynamic algorithms for decision trees are incremental: they receive a stream of labeled examples and maintain a decision tree that performs well compared to the tree built on the examples seen so far. The first such algorithms were Hoeffding trees~\cite{Domingos00-HighSpeedStreams}, which spurred a line of research on trees that adapt to so-called concept drifts~\cite{Domingos01-MiningTimeSeries,Gama03-HighSpeedStreams,Manapragada18,Das19,ijcai2020-177,Haug22,Jin03-Streaming,Rutkowski13-MiningStreams}; see~\cite{Manapragada2020} for a survey. Unfortunately, all those algorithms assume the examples are i.i.d., which allows them to compute splits that have nearly-maximum gain with high confidence. Moreover, none of those algorithms supports deletion, and they have a worst-case update time as large as $\Omega(n)$ when $\scY=\R$.
\\[10pt]

\subsection{Organisation of the manuscript}
Section~\ref{sec:prelim} pins down definitions and notation. Section~\ref{sec:dyn} introduces the dynamic approximate decision tree problem and reduces it to the delayed approximate decision tree problem, which is then studied in Section~\ref{sec:delayed}; these two sections contain all our main results. Section~\ref{sec:smooth} proves several properties of common gain functions and related decision rules needed by our main claims. All missing parts can be found in the Appendix.

\section{Preliminaries}\label{sec:prelim}
We assume $\scX=\scX_1 \times \ldots \times \scX_d$ where each $\scX_j$ is a totally ordered set such that $t<t'$ can be evaluated in time $O(1)$ for every $t,t' \in \scX_j$; this defines a total order on $\scX$ where $x<x'$ can be evaluated in time $O(d)$ for all $x,x' \in \scX$.
Any $x \in \scX$ is an \emph{unlabeled example}, and any $s=(x,y) \in \scX \times \scY$ is a \emph{labeled example}. Any $S=(s_1,\ldots,s_n) \in (\scX \times \scY)^*$ is a multiset, or simply set, of labeled examples; we always assume $|S| \ge 1$, which does not affect our results since obviously on any $S$ of bounded size one can compute any desired decision tree in time $O(1)$.
An \emph{update request} is a pair $(s,o)$ where $o \in \{\textsc{ins},\textsc{del}\}$. For any $S\in(\scX \times \scY)^*$, possibly empty, and any sequence of update requests $U$, the \emph{active set} determined by $S$ and $U$, denoted $S+U$, is the set of labeled examples obtained by applying $U$ to $S$ in the obvious way. For $s \in \scX \times \scY$ define $S+s=S+((s,\textsc{ins}))$. 
\\[5pt]
\noindent\textbf{Split rules, labeling rules, and decision trees.}
We assume that for every $x \in \scX$ and every split rule $\spl$ or labeling rule $\lab$ one can compute $\spl(x)$ and $\lab(x)$ in time $O(1)$. For every $S\in(\scX \times \scY)^*$ we let $\spl(S)=(S_0,S_1)$ where $S_0=\{(x,y) \in S : \spl(x)=0\}$ and $S_1=\{(x,y) \in S : \spl(x)=1\}$. Clearly one can compute $\spl(S)$ in time $O(|S|)$.
Let $T$ be a decision tree over $\scX,\scY$. For $x \in\scX$ we denote by $P(T,x)$ the path whose first vertex is the root of $T$ and where each internal vertex $v$ is followed by its own left child if $\spl_v(x)=0$ and its own right child if $\spl_v(x)=1$. We say $x$ \emph{reaches} $v \in V(T)$ if $v \in P(T,x)$. We let $T(x)=\lab_v(x)$ where $v$ is the only leaf in $P(T,x)$. Clearly one can compute $P(T,x)$ and $T(x)$ in time $O(h(T))$ where $h(T)$ is the height of $T$. All these definitions extend naturally to labeled examples and update requests.
All decision trees $T$ in this work are pointer-based: every $v \in V(T)$ is represented by a data structure holding pointers to $v$'s parent and children (if any). We denote by $D$ a generic associative array data structure that supports insertion, lookup, and deletion in time $O(\log |D|)$, where $|D|$ is the total number of entries in $D$, as well as enumeration in time $O(|D|)$; this can be fulfilled by a self-balancing search tree. Each vertex $v$ of $T$ points to such an array $D(T,v)$ that stores some set of labeled examples by mapping each example to the number of its occurrences. We may keep additional counters or structures at $v$, such as the size of the set in $D(T,v)$.
\\[5pt]
\noindent\textbf{Decision rules and greedy algorithms.}
Let $\Spl$ be a family of split rules and $\Lab$ a family of labeling rules. A \emph{decision rule} is a map:
\begin{align}
    \splfun : (\scX \times \scY)^* \to \Spl \cup \Lab
\end{align}
We say $\splfun$ is $\gamma$-balanced if $\min(|S_0|,|S_1|) \ge \gamma |S|$ whenever $\splfun(S)\in\Spl$.
We denote by $\cost_{\splfun} : \R_{\ge 0} \to \R_{\ge 0}$ an upper bound on the cost of computing $\splfun(S)$ as a function of $|S|$. We assume $\cost_{\splfun}$ is twice differentiable and $\cost_{\splfun}', \cost_{\splfun}'' \ge 0$; this implies $\cost_{\splfun}$ is superadditive and $\cost_{\splfun}(n)=\Omega(n)$, so in time $\cost_{\splfun}(|S|)$ one can compute both $\splfun(S)$ and $\spl(S)$. The greedy algorithm $\greedy_{\splfun}$ computes a decision tree $T=\greedy_{\splfun}(S)$ as follows. If $\splfun(S) \in \Lab$, then the root $r$ of $T$ is a leaf and $\lab_r = \splfun(S)$. Otherwise $r$ is internal, $\spl_r = \splfun(S)$, and the subtrees of $r$ are $\greedy_{\splfun}(S_0)$ and $\greedy_{\splfun}(S_1)$ where $(S_0,S_1)=\spl_r(S)$ and $\max(|S_0|,|S_1|) \le |S|-1$. Moreover, $S(T,v)$ is stored in $D(T,v)$. Note that $|V(T)|=O(|S|)$. 
\\[5pt]
\noindent\textbf{Gain functions, max-gain rules, and threshold rules.}
Let $\Spl$ be a family of split rules. A \emph{gain} is a function $\gain : (\scX \times \scY)^* \times \Spl \to \R_{\ge 0}$; intuitively $\gain(S,\spl)$ measures the quality of $\spl(S)$. We say a decision rule $\splfun : (\scX \times \scY)^* \to \Spl \cup \Lab$ is a max-$\gain$ rule if:
\begin{align}
    \splfun(S) \in \Spl \implies \splfun(S) \in \arg \max_{\spl \in \Spl} \gain(S,\spl) \quad \forall\,S \in (\scX \times \scY)^*
\end{align}
We say $\splfun$ has \emph{threshold} $\alpha > 0$ if:
\begin{align}
    \splfun(S) \in \Spl \; \iff \; \max_{\spl \in \Spl} \gain(S,\spl) \ge \alpha \quad \forall\,S \in (\scX \times \scY)^*
\end{align}
Let $g : (\scX \times \scY)^* \to \R_{\ge 0}$. A gain $G$ is a \emph{conditional} $g$-gain if:
\begin{align}
    G(S,\spl) = g(S) - \left( \frac{|S_0|}{|S|} g(S_0) +  \frac{|S_1|}{|S|} g(S_1) \right) \quad \forall\,S \in (\scX \times \scY)^*, \spl \in \Spl
\end{align}
where $(S_0,S_1)=\spl(S)$. In this work we consider the Gini gain $\giniGain$, the information gain $\IG$, and the variance gain $\VarGain$, which are the conditional $g$-gains for $g$ being respectively the Gini impurity, the entropy, and the variance of the labels --- see Appendix~\ref{apx:gains}.

\section{Dynamic approximate decision trees}\label{sec:dyn}
A \emph{dynamic decision tree algorithm} receives a sequence of update requests $\{(s_i,o_i)\}_{i \ge 1}$ and satisfies the following constraint: for all $i \ge 1$, after $(s_i,o_i)$ arrives and before $(s_{i+1},o_{i+1})$ arrives, there exists a decision tree $T^i$ such that for every $x \in \scX$ the algorithm can compute $T^i(x)$ in time $O(h(T^i))$.\footnote{Our algorithms satisfy the $O(h(T))$ constraint naturally, but if needed one can always relax it by allowing for, say, additional $d$ or $\poly\log(i)$ factors.} We also say that the algorithm \emph{maintains} $T^i$ at time $i$. This section describes a dynamic decision tree algorithm that maintains a tree close to the one that $\greedy_{\splfun}$ would produce on the current active set. To start with, we shall formalize what one means by ``close''.
\begin{definition}\label{def:approx_tree}
Let $S \in (\scX \times \scY)^*$ and let $\splfun$ be a decision rule. A decision tree $T$ is $\epsilon$-approximate w.r.t.\ $(S,\splfun)$ if for every $v \in V(T)$ there exists $S_v\in (\scX \times \scY)^*$ such that $\EDR(S(T,v),S_v) \le \epsilon$ and $\splfun(S(T,v)) = \splfun(S_v)$.
\end{definition}
\noindent The following is the central problem of this work.
\begin{definition}\label{def:main_problem}
The \emph{dynamic approximate decision tree problem} asks, given in input a decision rule $\splfun$, a real $\epsilon\in(0,1)$, and a sequence of update requests $\{(s_i,o_i)\}_{i \ge 1}$, to maintain for all $i \ge 1$ a tree $T^i$ that is $\epsilon$-approximate w.r.t.\ $(S^i,\splfun)$ where $S^i$ is the active set determined by $(s^1,o^1),\ldots,(s^i,o^i)$.
\end{definition}
\noindent Let $A$ be any algorithm for the dynamic approximate decision tree problem. The number of \emph{operations per update request} performed by $A$ is the maximum number of elementary operations $A$ performs between any two consecutive update requests, as well as before the first one and after the last one. Let $h_{\splfun}(\epsilon,n)$ be the maximum height of any decision tree that is $\epsilon$-approximate w.r.t.\ $(S,\splfun)$ for $|S| \le n$; one can prove that $h_{\splfun}(\epsilon,n)=O\!\left(\frac{\log n}{\epsilon}\right)$ for several natural $\splfun$. Our main result is:
\begin{theorem}\label{thm:main_h}
There is an algorithm \AlgoMain\ for the dynamic approximate decision tree problem that uses a number of operations per update request in
$$O\!\left( h_{\splfun}(\epsilon,n)^2 \, \log n \cdot \left(h_{\splfun}(\epsilon,n) + \frac{d \log n}{\epsilon} + \frac{\cost_{\splfun}(n)}{\epsilon n}\right) \right)$$ where $n=\max_{i \ge 1}|S^i|$.
\end{theorem}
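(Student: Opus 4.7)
The plan is to reduce the dynamic problem to the \emph{delayed approximate decision tree problem} solved in Section~\ref{sec:delayed}, whose input is a starting set $S$ together with a stream of at most $\epsilon |S|$ update requests and whose output is a tree that is $\epsilon$-approximate w.r.t.\ the current active set throughout the stream. Taking for granted the algorithm \AlgoRebuild, which by hypothesis solves this delayed problem with $O\!\left( h \log n \cdot (h + d \log n / \epsilon + \cost_{\splfun}(n)/(\epsilon n)) \right)$ operations per request and $h \le h_{\splfun}(\epsilon, n)$, the main algorithm \AlgoMain\ will run one copy of \AlgoRebuild\ at every vertex of the maintained tree.

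The invariant I would maintain is that every vertex $v$ of the current tree $T^i$ carries a live delayed instance $\Instance_v$ initialised on the set $S(T, v)$ held at $v$ at the moment of its creation, with a remaining budget proportional to $\epsilon |S(T, v)|$. As long as $\Instance_v$'s budget has not expired, the delayed guarantee ensures that the subtree rooted at $v$ is $\epsilon$-approximate w.r.t.\ $(S(T^i, v), \splfun)$, which locally-to-globally yields the $\epsilon$-approximation required by Definition~\ref{def:approx_tree}. To process an update $(s_i, o_i)$ I walk the root-to-leaf path $P(T^i, s_i)$, whose length is at most $h_{\splfun}(\epsilon, n) + 1$ by definition of $h_{\splfun}$, and dispatch the request to $\Instance_v$ at every vertex $v$ on the path; each such instance then pays its own delayed per-request cost. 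Summing the delayed bound over $O(h_{\splfun}(\epsilon, n))$ vertices, and using that $\cost_{\splfun}(m)/m$ is non-decreasing in $m$ by the convexity assumption on $\cost_{\splfun}$ so that each per-vertex term is dominated by the value at $n$, produces exactly the bound of Theorem~\ref{thm:main_h}: one extra factor $h_{\splfun}(\epsilon, n)$ multiplied into the delayed bound.

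When $\Instance_v$ is about to hit its budget, \AlgoRebuild\ must by its guarantee have already prepared a fresh subtree $T'_v$ that is $\epsilon$-approximate for the current $S(T^i, v)$. I swap $T'_v$ in, discard the old subtree, and spawn a fresh delayed instance at every vertex of $T'_v$; the setup is a single traversal of $T'_v$ and is absorbed into the construction cost already gradualised by \AlgoRebuild. The main obstacle is \emph{cascading}: a rebuild at a high vertex invalidates every in-progress delayed instance strictly below it, and one must argue that this does not spoil the worst-case per-request cost. The argument relies on two observations: (i) each update touches only one root-to-leaf path and hence only $O(h_{\splfun}(\epsilon, n))$ delayed instances in a single request, independently of past rebuilds, so cascades only shrink the set of live instances; and (ii) the "aging" of a vertex $v$ is controlled by the aging of each of its ancestors, since an update hits $v$ only after hitting every ancestor, which in turn means that $\EDR(S(T, v), S(T^i, v))$ is bounded by $\EDR(S(T, u), S(T^i, u))$ for any ancestor $u$. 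The delicate part is calibrating the budgets so that an ancestor's rebuild always completes in time to prevent any descendant from ever violating $\epsilon$-approximation, and this is precisely what the Section~\ref{sec:delayed} analysis of \AlgoRebuild\ is designed to establish.
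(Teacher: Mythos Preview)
Your architecture matches the paper's: reduce to the delayed problem, run one instance of \AlgoRebuild\ per vertex, route each update down $P(T,s_i)$, and multiply the delayed per-request cost by the path length $h_{\splfun}(\epsilon,n)$. The time bound follows exactly as you say (Lemma~\ref{lem:main:time_blackbox} in the paper).

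However, your correctness argument has two genuine gaps. First, you mischaracterise the delayed problem: by Definition~\ref{def:delayed_approx} the delayed instance produces an $\epsilon$-approximate tree only \emph{at the end}, after all $\epsilon|S|$ requests have been consumed; it says nothing about the subtree rooted at $v$ ``throughout the stream''. During the stream, the subtree actually displayed at $v$ is the \emph{old} one, and its approximation quality must be argued separately. The paper handles this by running the delayed instance with parameter $\frac{\epsilon}{2}$ rather than $\epsilon$: the freshly-installed subtree is $\frac{\epsilon}{2}$-approximate, and it can then absorb up to $\frac{\epsilon}{2}|S_v|$ further updates (exactly the budget of the next delayed instance) before exceeding $\epsilon$. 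You omit this halving, without which the invariant fails.

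Second, your observation (ii) is false. An update reaching a descendant $w$ also reaches every ancestor $u$, so the \emph{absolute} edit count at $w$ is bounded by that at $u$; but $|S(T,w)|$ may be arbitrarily smaller than $|S(T,u)|$, so $\EDR$ at $w$ can vastly exceed $\EDR$ at $u$. Hence an ancestor's rebuild does \emph{not} protect its descendants from becoming stale. The paper's argument does not use any such monotonicity: each vertex $w$ is protected by its \emph{own} instance with budget $\frac{\epsilon}{2}|S_w|$, and if $w$ goes stale first it is $T_w$, not some ancestor's subtree, that gets replaced (after which the path walk breaks, so there is no cascading to worry about). The ``local-to-global'' composition you invoke is also unnecessary: Definition~\ref{def:approx_tree} is already a per-vertex condition, so it suffices to argue independently for each $v$ that $\EDR(S(T^i,v),S_v)\le\epsilon$ where $S_v$ is the set on which $v$'s rule was computed.
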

\noindent Note that $h_{\splfun}(\epsilon,n)$ is the worst-case cost of \emph{predicting} the label of an example in an $\epsilon$-approximate tree built on $n$ examples. Therefore, if for instance $\cost_{\splfun}(n)=O(d\, n \log n)$ --- which holds for decision rules like those used by ID3 or C4.5, see Section~\ref{sec:complexity} --- then our algorithm has a cost per update that is at most $O\!\left(h_{\splfun}(\epsilon,n) \log n \cdot \left(h_{\splfun}(\epsilon,n) + \frac{d \log n}{\epsilon} \right)\! \right)$ times the worst-case cost of predicting a label.

By combining Theorem~\ref{thm:main_h} with the fact that $h_{\splfun}(\epsilon,n)=O\!\left(\frac{\log n}{\gamma-2\epsilon}\right)$ for all $\gamma$-balanced decision rules $\splfun$ (Lemma~\ref{lem:balanced_approx}) we obtain:
\begin{theorem}\label{thm:main}
There is an algorithm \AlgoMain\ for the dynamic approximate decision tree problem that, if $\splfun$ is $\gamma$-balanced and $\epsilon \in \big(0, \frac{\gamma}{3}\big]$, uses a number of operations per update request in
$$O\!\left( \frac{\log^3 n}{\epsilon^3} \cdot \left(d \log n + \frac{\cost_{\splfun}(n)}{n}\right) \right)$$
where $n=\max_{i \ge 1}|S^i|$.
\end{theorem}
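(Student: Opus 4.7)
The plan is to derive Theorem~\ref{thm:main} as a direct corollary of Theorem~\ref{thm:main_h} by plugging in a height bound for $\gamma$-balanced decision rules. The work has essentially been done already: Theorem~\ref{thm:main_h} gives a per-update cost expressed in terms of the quantity $h_{\splfun}(\epsilon,n)$, and the hypothesis of Theorem~\ref{thm:main} (that $\splfun$ is $\gamma$-balanced and $\epsilon \le \gamma/3$) is exactly designed to allow an application of Lemma~\ref{lem:balanced_approx}, which bounds $h_{\splfun}(\epsilon,n)=O\!\left(\frac{\log n}{\gamma-2\epsilon}\right)$.

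First, I would observe that the constraint $\epsilon \le \gamma/3$ gives $\gamma - 2\epsilon \ge \gamma/3 \ge \epsilon$, so Lemma~\ref{lem:balanced_approx} yields $h_{\splfun}(\epsilon,n) = O\!\left(\frac{\log n}{\epsilon}\right)$. Next I would substitute this into the expression from Theorem~\ref{thm:main_h}: the prefactor $h_{\splfun}(\epsilon,n)^2 \log n$ becomes $O\!\left(\frac{\log^3 n}{\epsilon^2}\right)$, while the parenthesised factor
\begin{equation*}
h_{\splfun}(\epsilon,n) + \frac{d \log n}{\epsilon} + \frac{\cost_{\splfun}(n)}{\epsilon n}
\end{equation*}
simplifies because $h_{\splfun}(\epsilon,n) = O\!\left(\frac{\log n}{\epsilon}\right)$ is dominated by $\frac{d \log n}{\epsilon}$ (as $d\ge 1$), leaving
\begin{equation*}
O\!\left(\frac{d \log n}{\epsilon} + \frac{\cost_{\splfun}(n)}{\epsilon n}\right) = O\!\left(\frac{1}{\epsilon}\left(d\log n + \frac{\cost_{\splfun}(n)}{n}\right)\right).
\end{equation*}
Multiplying the two gives the claimed bound $O\!\left(\frac{\log^3 n}{\epsilon^3}\left(d\log n + \frac{\cost_{\splfun}(n)}{n}\right)\right)$.

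There is no real technical obstacle here, since all the substance lives in Theorem~\ref{thm:main_h} and Lemma~\ref{lem:balanced_approx}. The only point worth double-checking is that the algorithm \AlgoMain\ remains well-defined and that its guarantees from Theorem~\ref{thm:main_h} continue to hold under the assumption that $\splfun$ is $\gamma$-balanced; this is automatic because Theorem~\ref{thm:main_h} makes no assumption on $\splfun$ beyond those needed to define $\cost_{\splfun}$ and $h_{\splfun}$. Thus the entire proof is a one-line algebraic combination of the two prior results, with the condition $\epsilon \le \gamma/3$ serving exactly to keep $\gamma-2\epsilon$ a constant fraction of $\gamma$ and, in particular, at least $\epsilon$, so that $\frac{1}{\gamma-2\epsilon}$ can be replaced by $\frac{1}{\epsilon}$ up to constants.
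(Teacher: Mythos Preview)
Your proposal is correct and is precisely the paper's approach: the paper derives Theorem~\ref{thm:main} in one line by combining Theorem~\ref{thm:main_h} with the height bound $h_{\splfun}(\epsilon,n)=O\!\left(\frac{\log n}{\gamma-2\epsilon}\right)$ from Lemma~\ref{lem:balanced_approx}. Your write-up merely spells out the algebra (in particular the use of $\epsilon\le\gamma/3$ to get $\gamma-2\epsilon\ge\epsilon$ and the absorption of the $h_{\splfun}$ term into $\frac{d\log n}{\epsilon}$) more explicitly than the paper does.
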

\noindent If again $\cost_{\splfun}(n)=O(d\, n \log n)$, then the bound of Theorem~\ref{thm:main} is in $O\!\left(\frac{d \log^4 n}{\epsilon^3}\right)$. For popular decision rules this provides the first fully-dynamic algorithm that maintains a tree with near-optimal gain at every vertex using $O\!\left(d \poly\frac{ \log n}{\epsilon}\right)$ operations per update request, see Section~\ref{sec:appl}.

\subsection{Applications}\label{sec:appl}
By leveraging Theorem~\ref{thm:main} we generalise the guarantees of~\cite{BDS22} while strengthening them from amortized to worst-case. To this end, we first rewrite the definition of $\beps$-feasibility, generalising it to arbitrary domains, arbitrary gains, and to both classification and regression trees. For the moment we ignore their ``pruning thresholds'' that force vertices at a prescribed depth or with few enough examples to be leaves; we show below that such constraints can be satisfied at no additional cost.
\begin{definition}\label{def:eps_feasibility}
Let $\beps=(\epsThr,\epsApx)\in (0,1]^2$, let $\Spl$ be a set of split rules, let $G$ be a gain function, and let $S\in(\scX\times\scY)^*$. A decision tree $T$ is $\beps$-feasible w.r.t.\ $(\Spl,G,S)$ if for every $v \in V(T)$:
\begin{enumerate}\itemsep0pt
\item if $G(S(T,v)) = 0$ then $v$ is a leaf, and if $G(S(T,v)) \ge \epsThr$ then $v$ is an internal node
\item if $v$ is internal then $G(S(T,v),\spl_v) \ge \max_{\spl \in \Spl}G(S(T,v),\spl) - \epsApx$
\item if $v$ is a leaf then $\lab_v$ is a majority label or the average of the labels of $S(T,v)$.
\end{enumerate}
\end{definition}
\noindent An algorithm maintains an $\beps$-feasible decision tree if for all $i \ge 1$ it maintains a tree $T^i$ that is  $\beps$-feasible w.r.t.\ $(\Spl,G,S^i)$.

Next, we show that $\epsilon$-approximation subsumes $\beps$-feasibility, in the following sense.
\begin{theorem}\label{thm:eps_to_beps}
    Let $\Spl$ be a set of split rules, let $G\in\{\giniGain,\IG,\VarGain\}$, and let $\beps=(\alpha,\beta) \in (0,1]^2$. Let $\splfun : (\scX\times\scY)^* \to \scD \cup \Lab$ be a max-$G$ decision rule with threshold $\frac{\alpha}{2}$ that assigns majority/average labels, and define:
    \begin{align*}
        \epsilon = \left\{
        \begin{array}{ll}
            \frac{\min(\alpha,\beta)}{100} & G=\giniGain \\[5pt]
            \frac{\min(\alpha,\beta)}{130 \log n} ~~& G=\IG \\[5pt]
            \frac{\min(\alpha,\beta)}{80c^2} & G=\VarGain, \; c = \sup_{y \in \scY} |y|/2
        \end{array}
        \right.
    \end{align*}
    Then, for all $S \in (\scX\times\scY)^*$, every decision tree $T$ that is $\epsilon$-approximate w.r.t.\ $(S,\splfun)$ is $\beps$-feasible w.r.t.\ $(\Spl,G,S)$.
\end{theorem}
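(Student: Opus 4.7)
}
The approach is to derive each clause of $\beps$-feasibility from the $\epsilon$-approximation property via a smoothness argument for the gain $G$. The key ingredient I plan to borrow from Section~\ref{sec:smooth} is a bound of the form $|G(S,\spl) - G(S',\spl)| \le \delta_G(\epsilon)$ for every $\spl \in \Spl$ whenever $\EDR(S,S') \le \epsilon$, where I expect $\delta_G(\epsilon) = O(\epsilon)$ for $\giniGain$, $\delta_G(\epsilon) = O(\epsilon \log n)$ for $\IG$ (the extra logarithm coming from the unboundedness of entropy's derivative near pure distributions), and $\delta_G(\epsilon) = O(c^2 \epsilon)$ for $\VarGain$ (the $c^2$ reflecting the range of the labels). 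The constants $100$, $130\log n$, and $80c^2$ in the statement are tuned precisely so that $\delta_G(\epsilon) \le \min(\alpha,\beta)/2$.

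Fix $v \in V(T)$ and let $S_v$ be the witness set given by Definition~\ref{def:approx_tree}, so $\EDR(S(T,v), S_v) \le \epsilon$ and the rule installed at $v$ coincides with $\splfun(S_v)$. If $v$ is internal, then $\spl_v = \splfun(S_v) \in \Spl$ maximises $G(S_v,\cdot)$ and attains value at least $\alpha/2$ by the threshold property. Applying smoothness to both $\spl_v$ and to an optimal split $\spl^\star$ for $S(T,v)$ gives
\[
    G(S(T,v), \spl_v) \ge G(S_v, \spl_v) - \delta_G(\epsilon) \ge G(S_v, \spl^\star) - \delta_G(\epsilon) \ge \max_{\spl \in \Spl} G(S(T,v),\spl) - 2\delta_G(\epsilon),
\]
which yields clause~(2) of Definition~\ref{def:eps_feasibility} whenever $2\delta_G(\epsilon) \le \beta$. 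If instead $v$ is a leaf, then $\max_{\spl} G(S_v,\spl) < \alpha/2$ and smoothness gives $\max_{\spl} G(S(T,v), \spl) < \alpha/2 + \delta_G(\epsilon) \le \alpha$, which is the upper half of clause~(1). The lower half ($\max_{\spl} G(S(T,v),\spl) = 0 \Rightarrow v$ leaf) follows as the contrapositive: were $v$ internal, max gain on $S_v$ would be $\ge \alpha/2$, whence max gain on $S(T,v)$ would be $\ge \alpha/2 - \delta_G(\epsilon) > 0$.

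It remains to establish clause~(3), the matching of the leaf label. Because $\splfun$ assigns majority or average labels, $\lab_v = \splfun(S_v)$ is a majority or average label of $S_v$, and we need it to be one for $S(T,v)$. For $\VarGain$ this reduces to the $O(c\epsilon)$-Lipschitzness of the mean under $\EDR$-perturbations, which the constant $80$ comfortably absorbs. For $\giniGain$ and $\IG$ the argument is more delicate since the majority label is not smooth in general; the plan is to exploit the low-gain condition on $S_v$ to argue that any dominant class has a margin commensurate with $\alpha$ that survives an $\epsilon$-perturbation, relying on a fine-grained stability statement that I expect to be part of the machinery of Section~\ref{sec:smooth}. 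The main obstacle I foresee lies precisely here: unlike the gain, majority labels can flip under arbitrarily small perturbations, so the argument must use the smoothness modulus of the specific gain together with the threshold $\alpha/2$ to rule this out, and verifying that the tight constants in the statement indeed suffice is the delicate bookkeeping step of the proof.
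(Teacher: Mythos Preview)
Your treatment of clauses~(1) and~(2) is essentially the paper's own argument: invoke the smoothness bound of Theorem~\ref{thm:smooth_gains} (this is exactly your $\delta_G$) twice to transfer optimality between $S_v$ and $S(T,v)$, and observe that the constants are chosen so that $2\delta_G(\epsilon)<\min(\alpha,\beta)$. The chain of inequalities you wrote for clause~(2) is Lemma~\ref{lem:gain_approx} verbatim, and your contrapositive for the lower half of clause~(1) is equivalent to the paper's direct argument.

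Where you diverge is clause~(3), and there your plan has a genuine gap. Low gain on $S_v$ gives you no control over the margin of the majority label: take $S_v$ with labels split $50/50$ and features independent of labels --- every split has gain~$0$, yet an $\epsilon$-perturbation can flip the majority. So the ``fine-grained stability'' you hope to extract from Section~\ref{sec:smooth} does not exist, and the constants in the statement were never tuned for this purpose. The same objection applies to the average label in the $\VarGain$ case: Lipschitzness of the mean gives you an \emph{approximate} average, but clause~(3) asks for the exact average of $S(T,v)$.

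The paper simply does not derive clause~(3) from $\epsilon$-approximation. Its proof of Theorem~\ref{thm:eps_to_beps} establishes clauses~(1) and~(2) only; the leaf-label requirement is handled at the algorithmic level in the proof of Theorem~\ref{thm:fudy_eps_feasible_full}, where each leaf maintains the exact active set $S(T,v)$ in $D(T,v)$ and recomputes the majority/average label directly from it. So you should drop the attempt to squeeze clause~(3) out of the approximation guarantee and instead note that the algorithm enforces it by construction.
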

\begin{proof}
    Suppose first $G=\giniGain$. Let $v \in V(T)$. By definition of $\epsilon$-approximation (Definition~\ref{def:approx_tree}) there exists $S_v$ such that $\EDR(S(T,v),S_v) \le \epsilon$ and that the decision taken at $v$ is $\splfun(S_v) \in \Spl$. Let $S_v$ be any such set. Let $\spl_v$ and $\spl_v^*$ be split rules with maximum gain on respectively $S_v$ and $S(T,v)$:
    \begin{align}
        \spl_v &= \arg\max_{\spl \in \Spl}G(S_v,\spl)
        \\
        \spl_v^* &= \arg\max_{\spl \in \Spl}G(S(T,v),\spl)
    \end{align}
    Suppose $G(S(T,v),\spl_v^*) \ge \alpha$. By Theorem~\ref{thm:smooth_gains}, and by the choice of $\epsilon$ and the definition of $\spl_v$ and $\spl_v^*$:
    \begin{align}
        G(S_v,\spl_v) \ge G(S_v,\spl_v^*) \ge G(S(T,v),\spl_v^*) - 48 \epsilon > G(S(T,v),\spl_v^*) - \frac{\alpha}{2} \ge \frac{\alpha}{2}
    \end{align}
    Similarly, if $G(S(T,v),\spl_v^*) = 0$ then:
    \begin{align}
        G(S_v,\spl_v) \le G(S(T,v),\spl_v) + 48 \epsilon < G(S(T,v),\spl_v) + \frac{\alpha}{2} \le G(S(T,v),\spl_v) + \frac{\alpha}{2}
    \end{align}
    and the rightmost expression equals $\frac{\alpha}{2}$. Since $\splfun$ has threshold $\frac{\alpha}{2}$, this proves that $v$ is internal if $G(S(T,v),\spl_v^*) \ge \alpha$ and $v$ is a leaf if $G(S(T,v),\spl_v^*) = 0$. Finally, if $v$ is internal then by Lemma~\ref{lem:gain_approx}:
    \begin{align}
        G(S(T,v),\spl_v) \ge G(S(T,v),\spl_v^*) - 96\epsilon > G(S(T,v),\spl_v^*) - \beta    
    \end{align}
    Since the facts above hold for any choice of $S_v$, we conclude that $T$ is $\beps$-feasible as desired.

    The proof for $G=\IG$ is similar. If $G(S(T,v),\spl_v^*) \ge \alpha$ then again by Theorem~\ref{thm:smooth_gains} and the definition of $\spl_v$ and $\spl_v^*$:
    \begin{align}
        G(S_v,\spl_v) \ge G(S_v,\spl_v^*) \ge G(S(T,v),\spl_v^*) - 60 \epsilon \log n > G(S(T,v),\spl_v^*) - \frac{\alpha}{2} \ge \frac{\alpha}{2}
    \end{align}
    Similarly, if $G(S(T,v),\spl_v^*) = 0$ then:
    \begin{align}
        G(S_v,\spl_v) \le G(S(T,v),\spl_v) + 60 \epsilon \log n < G(S(T,v),\spl_v) + \frac{\alpha}{2} \le G(S(T,v),\spl_v) + \frac{\alpha}{2}
    \end{align}
    Finally, if $v$ is internal then by Lemma~\ref{lem:gain_approx}:
    \begin{align}
        G(S(T,v),\spl_v) \ge G(S(T,v),\spl_v^*) - \epsilon \cdot 120 \log n > G(S(T,v),\spl_v^*) - \beta    
    \end{align}

    The proof for $G=\VarGain$ is completely analogous.
\end{proof}

In the special case where $\scY=\{0,1\}$, $\Spl$ is the set of split rules in the form $\Ind_{x_j < t}$, $G=\giniGain$, and $\lab_v$ is a majority label,~\cite{BDS22} maintain an $\beps$-feasible decision tree using an \emph{amortized} $O\!\left(\frac{d \log^3 n}{\min(\alpha,\beta)^2}\right)$ operations per update. We prove:
\begin{theorem}\label{thm:fudy_eps_feasible}
Let $\beps=(\alpha,\beta) \in (0,1]^2$, let $\Spl$ be the set of split rules in the form $\Ind_{x_j < t}$ or $\Ind_{x_j = t}$, and let $G\in\{\giniGain,\IG,\VarGain\}$. There is an algorithm that maintains an $\beps$-feasible decision tree using a worst-case number of operations per update request in:
\begin{align*}
    \begin{array}{ll}
        O\!\left( \frac{d \log^4 n}{\min(\alpha,\beta)^3} \right) & \text{if }G=\giniGain \\[6pt]
        O\!\left( \frac{d \log^7 n}{\min(\alpha,\beta)^3} \right) & \text{if }G=\IG \\[6pt]
        O\!\left( \frac{d \, c^6 \log^4 n}{\min(\alpha,\beta)^3} \right) & \text{if }G=\VarGain \text{, } c=\sup_{y \in \scY} |y|
    \end{array}
\end{align*}
\end{theorem}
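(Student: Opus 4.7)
The plan is to combine Theorem~\ref{thm:main} with Theorem~\ref{thm:eps_to_beps}. For each $G \in \{\giniGain, \IG, \VarGain\}$, I would fix a max-$G$ decision rule $\splfun$ over $\Spl$ with threshold $\alpha/2$ whose labelling branch outputs the majority label (for $G\in\{\giniGain,\IG\}$) or the average label (for $G=\VarGain$). Theorem~\ref{thm:eps_to_beps} then guarantees that any decision tree which is $\epsilon$-approximate w.r.t.\ $(S,\splfun)$, for the $\epsilon$ prescribed in that theorem (depending on $G$ and $\beps$), is automatically $\beps$-feasible w.r.t.\ $(\Spl,G,S)$. Hence, if \AlgoMain\ is fed $\splfun$ and that $\epsilon$, at every time it maintains a $\beps$-feasible tree, and the remaining task is to plug the relevant quantities into the complexity bound of Theorem~\ref{thm:main}.

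Next, I would bound the ingredients entering the complexity formula. Since $\Spl$ consists of axis-aligned predicates of the form $\Ind_{x_j<t}$ or $\Ind_{x_j=t}$, one can compute $\arg\max_{\spl\in\Spl} G(S,\spl)$ in $O(d\,n\log n)$ time: for each of the $d$ coordinates, sort $S$ by that coordinate and sweep through the at most $n$ candidate thresholds while maintaining the counts or running sums needed to evaluate $G$ incrementally (this works uniformly for Gini, information, and variance gain). Hence $\cost_{\splfun}(n)=O(d\,n\log n)$, so $\cost_{\splfun}(n)/n=O(d\log n)$, and the bound of Theorem~\ref{thm:main} becomes
\[
O\!\left(\frac{\log^3 n}{\epsilon^3}\left(d\log n + \frac{\cost_{\splfun}(n)}{n}\right)\right) \;=\; O\!\left(\frac{d\log^4 n}{\epsilon^3}\right).
\]
Substituting the three values of $\epsilon$ given by Theorem~\ref{thm:eps_to_beps} yields the three bounds in the statement: $\epsilon=\Theta(\min(\alpha,\beta))$ for $\giniGain$; $\epsilon=\Theta(\min(\alpha,\beta)/\log n)$ for $\IG$, contributing an extra $\log^3 n$ factor; and $\epsilon=\Theta(\min(\alpha,\beta)/c^2)$ for $\VarGain$, contributing an extra $c^6$ factor.

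The main obstacle is the $\gamma$-balanced hypothesis of Theorem~\ref{thm:main}: a pure max-$G$ rule need not produce balanced splits, since its argmax can be arbitrarily skewed, so Theorem~\ref{thm:main} does not apply out of the box. I see two natural routes. Option (a) is to invoke the more general Theorem~\ref{thm:main_h} instead and prove separately that for the $\splfun$ chosen above one has $h_{\splfun}(\epsilon,n)=O(\log n/\epsilon)$; this should follow from a potential-function argument on the weighted impurity $\sum_{\ell}|S(T,\ell)|\,g(S(T,\ell))$ summed over leaves, exploiting that every internal vertex of an $\epsilon$-approximate tree has $G\ge \alpha/2 - O(\epsilon)$ (by the smoothness lemmas of Section~\ref{sec:smooth}) and that the pointwise impurity $g$ is bounded (by $1/2$ for Gini, by $O(\log n)$ for entropy, by $c^2$ for variance). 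Option (b) is to define $\splfun$ so as to break ties among argmax splits in favour of balanced ones, and argue via a smoothness lemma that within the $O(\epsilon)$ slack tolerated by $\epsilon$-approximation a $\gamma$-balanced candidate is always available. In either case, once the height bound is in place the remaining computations are mechanical substitutions into Theorem~\ref{thm:main} or~\ref{thm:main_h}, and I would expect Option (a) to be the cleaner route.
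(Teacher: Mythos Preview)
Your overall plan---choose a max-$G$ threshold rule $\splfun$ with threshold $\alpha/2$, invoke Theorem~\ref{thm:eps_to_beps} for $\beps$-feasibility, bound $\cost_{\splfun}(n)=O(d\,n\log n)$, and plug into Theorem~\ref{thm:main}---is exactly the paper's route. The discrepancy is in what you call the ``main obstacle'': you assert that a max-$G$ threshold rule need not be $\gamma$-balanced and therefore propose detours through Theorem~\ref{thm:main_h} or tie-breaking. This is a misconception. The paper shows (Lemma~\ref{lem:bal_G} and Theorem~\ref{thm:balanced_rules}) that any max-$G$ rule with threshold $\alpha/2$ is automatically $\gamma$-balanced, with $\gamma=\Theta(\alpha)$ for $\giniGain$, $\gamma=\Theta(\alpha/\log n)$ for $\IG$, and $\gamma=\Theta(\alpha/c^2)$ for $\VarGain$. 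The reason is simple: a conditional $g$-gain satisfies $G(S,\spl)\le 4\,\frac{\min(|S_0|,|S_1|)}{|S|}\,f(|S|)$, so a split that achieves gain at least $\alpha/2$ cannot be arbitrarily skewed. With the values of $\epsilon$ from Theorem~\ref{thm:eps_to_beps} one then checks $\epsilon\le\gamma/3$ and $\gamma-2\epsilon=\Theta(\epsilon)$, so Theorem~\ref{thm:main} applies directly.

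Regarding your Option~(a): the potential argument you sketch bounds the \emph{total} weighted gain across all splits, which controls the number of internal vertices but not the height along a single root-to-leaf path; to get a height bound you would still need that the subset sizes shrink geometrically, i.e., balancedness. Option~(b) is unnecessary for the same reason. So the fix is not to work around the balancedness hypothesis but to observe that it already holds---this is precisely the content of Section~\ref{sec:smooth} (Theorem~\ref{thm:balanced_rules}), after which the proof is the mechanical substitution you outlined.
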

\noindent The full proof of Theorem~\ref{thm:fudy_eps_feasible} is deferred to Appendix~\ref{apx:approx}. To get an intuition, suppose $G=\giniGain$. We let $\splfun$ be a max-$G$ decision rule with threshold $\frac{\alpha}{2}$, which we show to be $\Theta(\gamma)$-balanced. Next we set $\epsilon=\Theta(\min(\alpha,\beta))$ so that $\epsilon \le \frac{\gamma}{3}$ and to satisfy the hypotheses of Theorem~\ref{thm:eps_to_beps}. At this point, by Theorem~\ref{thm:main} there is an algorithm that maintains an $\epsilon$-approximate tree using $O\!\left( \frac{d \log^4 n}{\min(\alpha,\beta)^3} \right)$ operations per update request, and by Theorem~\ref{thm:eps_to_beps} the maintained tree is $\beps$-feasible, too.

\paragraph{Remark: adding pruning thresholds.} The original definition of $\beps$-feasibility makes $v$ a leaf also when $|S(T,v)| \le k^*$ or $v$ has depth $h^*$, where $k^*,h^* \in \N$ are given in input. We can include these constraints without altering the bounds of Theorem~\ref{thm:main_h} and Theorem~\ref{thm:main}. For $|S(T,v)| \le k^*$, define $\splfun$ so that $\spl \in \Spl$ if and only if $G(S(T,v)) \ge \epsThr$ \emph{and} $|S(T,v)| > k^*$. For the the depth of $v$, define an ``enriched'' decision function $\hat\splfun$ that takes in input a pair $(S,\zeta)$ where $\zeta \in \N$. One then lets $\hat\splfun(S,\zeta)=\splfun(S)$ if $\zeta>0$, and $\hat\splfun(S,\zeta)$ be the majority/average label of $S$ if $\zeta=0$. Then, $\greedy_{\hat\splfun}$ at $v$ computes $\hat\splfun(S(T,v),\zeta)$ where $\zeta$ equals $h^*$ minus the depth of $v$.

\subsection{Reduction to the delayed approximate decision tree problem}
At the heart of our algorithm \AlgoMain\ lies a reduction to what we we call the \emph{delayed approximate decision tree problem}. Devising an algorithm for that problem is the main technical contribution of our work and is done in Section~\ref{sec:delayed}; in the rest of this section we define the problem and prove how our algorithm relies on it.
\begin{definition}\label{def:delayed_approx}
The \emph{delayed approximate decision tree problem} is as follows. The input is $(\splfun,\epsilon,n,S,U)$ where $\splfun$ is a decision rule, $\epsilon \in (0,1)$, $n \in \N$, $S \in (\scX\times\scY)^n$, and $U$ is a sequence of $\epsilon n$ update requests. Both $S$ and $U$ are given as iterators with $O(d)$ access time per element. If $S'=S+U$ then the output is a decision tree $T$ such that:
\begin{itemize}\itemsep0pt
    \item $T$ is $\epsilon$-approximate w.r.t.\ $(S',\splfun)$
    \item each $v \in V(T)$ keeps an associative array $D(T,v)$ that stores $S'(T,v)$
\end{itemize}
\end{definition}
\noindent The number of operations per update request performed by an algorithm for this problem is the maximum number of elementary operations performed between any two requests pulled from $U$, as well as before pulling the first one and after pulling the last one.

Now suppose we have an algorithm \AlgoRebuild\ for the delayed approximate decision tree problem that performs at most $\tau=\tau(\splfun,\epsilon,n)$ operations per update request. Then \AlgoMain\ maintains a decision tree $T$ as follows. For each vertex $v$ in $T$, \AlgoMain\ stores in an associative array $D(T,v)$ the active set $S_v$ on which the subtree $T_v$ rooted at $v$ was rebuilt the last time (what this means will be clear in a moment). 
For every $v$ \AlgoMain\ also runs an instance of \AlgoRebuild$\big(\splfun,\frac{\epsilon}{2},\frac{\epsilon}{2}|S_v|,S_v,U_v\big)$, denoted by \Instance$(v)$, where the set $S_v$ is fed using an iterator over $D(T,v)$. Now, every time an update request reaches $v$, \AlgoMain\ appends that request to $U_v$ and makes \Instance$(v)$ progress until \Instance$(v)$ pulls that request; by assumption, this happens within $\tau$ operations. In this way, as soon as $\frac{\epsilon}{2} |S_v|$ update requests reach $v$, \Instance$(v)$ returns a pointer to a new tree $T_v'$ that is $\frac{\epsilon}{2}$-approximate w.r.t.\ $S_v+U_v$, which is precisely the current active set at $v$. At that point \AlgoMain\ replaces $T_v$ with $T_v'$ by updating the pointer at $v$'s parent. This keeps the whole tree $\epsilon$-approximate: every $v$ satisfies the definition of $\frac{\epsilon}{2}$-approximation immediately after its subtree is replaced, and afterwards it satisfies the definition of $\epsilon$-approximation until the next $\frac{\epsilon}{2} |S_v|$ update requests reach $v$ --- at which point $T_v$ is replaced again. The pseudocode of \AlgoMain\ is in Algorithm~\ref{algo:main}.
\begin{algorithm}[h!]\small
\caption{\AlgoMain}
\label{algo:main}
\begin{algorithmic}[1]
\State \textbf{Input:} $\splfun$, $\epsilon \in (0,1)$, and a sequence of update requests $\{(s_i,o_i)\}_{i \ge 1}$
\State $T = $ the trivial tree on one leaf $v$ with arbitrary $\lab_v$ \label{main:line:T0} and \Instance$(v)=\nil$
\For{each $i=1,2,\ldots$}\label{main:line:for}
\State compute $P(T,s_i)$
\For{each $v$ in $P(T,s_i)$ in order of increasing distance from the root}\label{main:line:forP}
\If{\Instance$(v)=\nil$}
\State initialize an empty iterator $U(v)$
\State start an instance \Instance$(v)$ of \AlgoRebuild$\big(\splfun,\frac{\epsilon}{2},|D(T,v)|,D(T,v),U(v)\big)$
\EndIf
\State append $(s_i,o_i)$ to $U(v)$
\State \textbf{execute} \Instance$(v)$ until it pulls the next element from $U(v)$  \label{line:main_tau}
\If{\Instance$(v)$ returns a pointer to a tree $T_v'$}
\State set \Instance$(v)=\nil$ and replace $T_v$ with $T_v'$ \label{main:line:replace}
\State \textbf{break} the loop of line~\ref{main:line:forP}
\EndIf
\EndFor
\EndFor
\end{algorithmic}
\end{algorithm}

\subsection{Guarantees}
The next two lemmas bind the guarantees of \AlgoMain\ to those of \AlgoRebuild. Together with Theorem~\ref{thm:delayed_full} (Section~\ref{sec:delayed}) they give Theorem~\ref{thm:main_h}. Let $T^0$ be the tree computed at line~\ref{main:line:T0}, and for all $i \ge 1$ let $T^i$ be the tree $T$ at the end of the $i$-th iteration of the loop at line~\ref{main:line:for}. 

\begin{lemma}\label{lem:main_correct}
If \AlgoRebuild\ is an algorithm for the delayed approximate decision tree problem, then $T^i$ is $\epsilon$-approximate w.r.t.\ $(S^i,\splfun)$ for all $i \ge 0$.
\end{lemma}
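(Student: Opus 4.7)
The plan is to avoid an explicit induction on $i$ and instead prove a per-vertex claim: for every $i \ge 0$ and every $v \in V(T^i)$, exhibit a witness set $S_v^\star$ satisfying $\EDR(S^i(T^i, v), S_v^\star) \le \epsilon$ such that $\splfun(S_v^\star)$ is the rule installed at $v$, which is exactly what Definition~\ref{def:approx_tree} requires. The case $i=0$ is trivial since $T^0$ is a single leaf over an empty (or bounded-size) active set. For the general case, I would first pinpoint the iteration $t_v \le i$ at which the subtree containing $v$ was installed by line~\ref{main:line:replace}, writing $R_v = S^{t_v}(T^{t_v}, v)$ for the active set at $v$ immediately after that installation. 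If $t_v \ge 1$, the subtree containing $v$ was produced by an \AlgoRebuild\ call with parameter $\tfrac{\epsilon}{2}$ on input $R_v$, so Definition~\ref{def:delayed_approx} furnishes a set $\tilde S_v$ with $\EDR(R_v, \tilde S_v) \le \tfrac{\epsilon}{2}$ and $\splfun(\tilde S_v)$ equal to the rule at $v$; this $\tilde S_v$ will be the witness $S_v^\star$.

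The core step is then to bound the drift $\ED(S^i(T^i, v), R_v)$. Because $v \in V(T^i)$, no strict ancestor of $v$ was replaced during iterations $t_v{+}1,\ldots,i$, since a replacement higher up would have removed the entire subtree, $v$ included. Hence the splits on the root-to-$v$ path are frozen during this window and $S^i(T^i, v) = R_v + U_v$, where $U_v$ is the sequence of updates routed to $v$ in that window. Furthermore, \Instance$(v)$ is spawned to consume exactly $\tfrac{\epsilon}{2}|R_v|$ update requests, at which point it returns a replacement subtree that removes $v$; the survival of $v$ in $T^i$ therefore forces $|U_v| < \tfrac{\epsilon}{2}|R_v|$. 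Combining this with the guarantee from the previous paragraph via the triangle inequality yields
\begin{align*}
    \ED(S^i(T^i, v), \tilde S_v) \;\le\; |U_v| + \tfrac{\epsilon}{2}\,\max(|R_v|, |\tilde S_v|) \;<\; \epsilon\,\max(|R_v|, |\tilde S_v|),
\end{align*}
and dividing by $\max(|S^i(T^i, v)|, |\tilde S_v|)$, which differs from $\max(|R_v|, |\tilde S_v|)$ only by a $1 \pm \tfrac{\epsilon}{2}$ factor, gives $\EDR(S^i(T^i, v), \tilde S_v) \le \epsilon$ up to a routine calibration of the constant.

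The main obstacle I anticipate is the bookkeeping behind the inequality $|U_v| < \tfrac{\epsilon}{2}|R_v|$. This rests on two algorithmic invariants that must be pinned down: first, the \emph{break} immediately after line~\ref{main:line:replace} guarantees that at most one rebuild completes per iteration, and that the descendants of the completing vertex are swapped out wholesale, so no update is silently dropped from any $U_v$ that outlives the iteration; second, every ancestor of $v$ on the path $P(T, s_i)$ is visited by the inner \emph{for} loop before $v$ itself, so a rebuild completing at an ancestor preempts $v$'s own \Instance\ cleanly. A small case analysis — distinguishing whether the update reaches $v$, reaches a strict ancestor of $v$, or triggers a rebuild that erases $v$ — closes these invariants, after which the algebraic combination above finishes the proof.
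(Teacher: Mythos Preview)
Your approach is essentially the same as the paper's: both arguments fix a vertex $v$, locate the iteration at which the subtree containing $v$ was installed, invoke the $\tfrac{\epsilon}{2}$-approximation guarantee of \AlgoRebuild\ at that moment, and then bound the subsequent drift by the fact that $v$'s own \Instance\ has not yet returned. The paper additionally spells out an inductive invariant you leave implicit --- that $D(T,v)$ really stores $S^{t_v}(T^{t_v},v)$ at the time \Instance$(v)$ is spawned --- and, like you, it glosses over the fact that $\tfrac{\epsilon}{2}+\tfrac{\epsilon}{2}$ in $\EDR$ yields $\tfrac{2\epsilon}{2-\epsilon}$ rather than exactly $\epsilon$; both are the minor loose ends you flagged.
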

\begin{proof}
First, we prove that the arrays $D(T,v)$ are updated correctly. For each $v \in V(T^i)$ let $i_v \le i$ be the most recent iteration where $v$ is in a subtree created by \AlgoRebuild, or $i_v=0$ if no such iteration exists. We claim that, for all $i \ge 1$ and all $z \in V(T^i)$, $D(T^i,z)$ stores precisely $S^{i_z}(T^i,z)$. This is trivial for $i=0$, so suppose the claim holds for $T^{i-1}$ and let $z \in V(T^i)$. If $i_z < i$ then the claim holds for $z$ since $D(T^i,z)=D(T^{i-1},z)$. Otherwise $T^i_z$ is a subtree of a tree $T^i_v$ that has been returned by \AlgoRebuild\ at iteration $i$, and by induction \AlgoRebuild\ has been given in input $D(T^{i-1},v)$ and all subsequent updates that reached $v$. By construction of \AlgoRebuild\ this implies that $D(T^i,z)$ stores precisely $S^{i_z}(T^i,z)$.

Now consider a generic iteration of \AlgoMain\ where line~\ref{main:line:replace} is executed. By the claim above and by construction of \AlgoMain, $\Instance(v)$ was given in input a set of examples and a sequence of updates whose corresponding active set is precisely $S^i(T^i,v)$. Thus $T^i_v$ is $\frac{\epsilon}{2}$-approximate w.r.t.\ $(S^i(T^i,v),\splfun)$. Now let $j > i$. If less than $\frac{\epsilon}{2} |S^i(T^i,v)|$ requests have reached $v$ between iteration $i+1$ and the end of iteration $j$, then $T^j_v$ is $\epsilon$-approximate w.r.t.\ $(S^j(T^j,v),\splfun)$; and before the $\frac{\epsilon}{2} |S^i(T^i,v)|$-th such request reaches $v$, the subtree $T^i_v$ will be replaced, making it $\frac{\epsilon}{2}$-approximate again.
\end{proof}

\begin{lemma}\label{lem:main:time_blackbox}
If \AlgoRebuild\ performs at most $\tau$ operations per update request, then \AlgoMain\ performs $O\!\left(\tau \cdot h\right)$ operations per update request where $h=\max_{i \ge 1} h(T^i)$.
\end{lemma}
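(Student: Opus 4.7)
The plan is to carry out a straightforward per-iteration accounting: I bound the work done by \AlgoMain\ during a single pass of the \textbf{for} loop at line~\ref{main:line:for}, and show that everything is charged either to computing the root-to-leaf path $P(T,s_i)$ or to a single inner-loop iteration that drives the corresponding \Instance$(v)$.

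First I would bound the cost of computing $P(T,s_i)$ by $O(h(T^{i-1})) = O(h)$ operations, by descending one vertex at a time from the root of $T$ and evaluating the stored split rule in constant time at each internal node (using the preliminaries). Then I would observe that the inner loop at line~\ref{main:line:forP} performs at most $|P(T,s_i)| \le h + 1$ iterations, since it enumerates vertices along this path and either exhausts it or breaks at line~\ref{main:line:replace}.

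Next I would account for each inner iteration at a vertex $v$ as $O(\tau)$, by breaking it into four pieces that are each $O(1)$ or $\tau$-bounded: (i) initializing \Instance$(v)$ when needed, which is $O(1)$ because Definition~\ref{def:delayed_approx} passes $D(T,v)$ and $U(v)$ as iterators, so instantiation only stores references and does not read any element; (ii) appending $(s_i,o_i)$ to $U(v)$, which is $O(1)$ for a queue-like iterator; (iii) executing \Instance$(v)$ until it pulls the next element of $U(v)$, which by the hypothesis on \AlgoRebuild\ costs at most $\tau$ operations --- and this bound by definition uniformly covers the work before the first pull (just after instantiation), between consecutive pulls, and after the last pull; and (iv) the possible replacement of $T_v$ by $T_v'$, which is $O(1)$ since it amounts to flipping a single parent pointer in the pointer-based tree $T$ and resetting \Instance$(v)$ to $\nil$. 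Summing across the $O(h)$ inner iterations, and adding the $O(h)$ path-computation cost (which is absorbed since $\tau \ge 1$), yields $O(\tau\, h)$ operations per update request.

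There is no substantive technical obstacle: the lemma is pure bookkeeping. The one delicate point that would deserve to be spelled out explicitly is the justification that instantiating \AlgoRebuild\ and swapping in a new subtree are both $O(1)$; this is legitimate precisely because of the interface conventions fixed in the preliminaries (pointer-based trees, so a whole subtree is installed by flipping one pointer) and in Definition~\ref{def:delayed_approx} (inputs $S$ and $U$ given as iterators, so no eager copy of $D(T,v)$ is made at instantiation, all reads being charged later to the $\tau$ budget). Once these conventions are invoked, the four-way decomposition above gives the bound immediately.
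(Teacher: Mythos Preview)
Your proposal is correct and is precisely the bookkeeping the paper has in mind; the paper's own proof is the single word ``Straightforward.'' Your explicit justification that instantiation and subtree replacement are $O(1)$ via the iterator and pointer-based tree conventions is a useful elaboration, but the argument is the same.
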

\begin{proof}
Straightforward.
\end{proof}

\section{Delayed construction of an approximate decision tree}
\label{sec:delayed}
Recall Definition~\ref{def:delayed_approx}, and let $n=|S|$ and $\epsilon n = |U|$. This section proves:
\begin{theorem}\label{thm:delayed_full}
There is an algorithm \AlgoRebuild\ for the delayed approximate decision tree problem that uses a number of operations per update request in
$$O\!\left( h \, \log n \cdot \left(h + \frac{d \log n}{\epsilon} + \frac{\cost_{\splfun}(m)}{\epsilon n}\right) \right)$$
where $m \le n(1+\epsilon)$ is the maximum size of the active set obtained from applying any prefix of $U$ to $S$ and $h \le h_{\splfun}(\epsilon,m)$ is the maximum height of the tree held by the algorithm.
\end{theorem}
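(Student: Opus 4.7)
The plan is to build $T$ top-down in a BFS-like order, interleaving computation with the pulling of update requests from $U$. At any instant the algorithm maintains a partial tree, an associative array $D(T,v)$ at every vertex, and, for every vertex $v$ on the current frontier, a partial invocation of $\splfun$ on the snapshot $S_v$ of $D(T,v)$ taken when $v$ became active. When an update $(s,o)$ is pulled from $U$, the algorithm routes it from the root through the completed portion of $T$, spends $O(\log n)$ updating $D(T,w)$ at every ancestor $w$ on the path, and executes a fixed chunk of $\tau_w$ elementary operations on any pending $\splfun$-computation at $w$. When the computation at $v$ completes, its output becomes $\spl_v$ or $\lab_v$; the snapshot $S_v$ is then partitioned \emph{incrementally} into the two child arrays, again in small chunks charged to later descending updates, after which the children become active with their own snapshots.

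For correctness I would verify Definition~\ref{def:approx_tree} using the snapshot $S_v$ itself as the witness: by construction the rule stored at $v$ equals $\splfun(S_v)$, so it is enough to show $\EDR(S'(T,v),S_v) \le \epsilon$. The symmetric difference between the two sets is exactly the multiset of update requests that descend into $v$ between the moment its snapshot is taken and the end of $U$. The scheduling invariant I would enforce is $\tau_v = \Theta(\cost_\splfun(|S_v|)/(\epsilon |S_v|))$, which ensures that both the $\splfun$ invocation and the subsequent lazy partitioning at $v$ complete after at most $\epsilon |S_v|/2$ descending updates; any later updates are absorbed by the children, which by then have already been activated, so $v$ itself never sees an edit distance larger than $\epsilon|S_v|$. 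This local guarantee at every vertex produces a globally $\epsilon$-approximate tree.

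For the running-time analysis I would account separately for (a) routing and associative-array maintenance, costing $O(h\log n)$ per update; (b) advancing the $\splfun$-computations along the descending path, which by superadditivity of $\cost_\splfun$ (implying $\cost_\splfun(|S_v|)/|S_v| \le \cost_\splfun(m)/m$) costs $O(\cost_\splfun(m)/(\epsilon n))$ per vertex on the path, hence $O(h \cdot \cost_\splfun(m)/(\epsilon n))$ in total; and (c) the lazy partitioning of each snapshot into its children, whose $O(d|S_v|\log|S_v|)$ cost, spread over the $\Theta(\epsilon |S_v|)$ descending updates, contributes $O(d\log n/\epsilon)$ per vertex on the path and $O(hd\log n/\epsilon)$ across the path. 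An additional $O(\log n)$ factor arises from maintaining the per-vertex queues and progress counters. Combining these contributions gives exactly $O\!\left(h\log n \cdot \big(h + d\log n/\epsilon + \cost_\splfun(m)/(\epsilon n)\big)\right)$.

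The hard part will be turning this natural amortized picture into a genuine worst-case guarantee. A heavy vertex $v$ carries a monolithic cost $\cost_\splfun(|S_v|)$ that dwarfs the per-update budget, so both its $\splfun$-computation \emph{and} the subsequent partitioning must be sliced into tiny chunks and dispersed across every descending update, while simultaneously keeping every ancestor's and every descendant's pending work evenly spread. The delicate point is proving that the sum of pending chunks on every root-to-leaf descending path is covered by the $O(h\log n)$ outer factor at every instant; I would handle this with a token/potential argument in which each descending update deposits enough tokens to pay for all the chunks on its path, and then verify that the total token deposit per update matches the bound of the theorem.
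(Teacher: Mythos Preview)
Your plan is a genuinely different route from the paper's, and as written it has a real gap that is \emph{not} the amortized-versus-worst-case issue you flag at the end.

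\textbf{Budget exhaustion along a root-to-leaf path.} In your scheme the $\splfun$-call at $v$ cannot even begin until the parent's call and partitioning have completed, and you require $\Theta(\epsilon|S_v|)$ updates \emph{reaching $v$} to drive that computation. Along a path $v_0,\dots,v_h$ to which all updates route, these phases are strictly sequential, so the number of updates consumed is $\Theta\!\big(\epsilon\sum_i |S_{v_i}|\big)$. For a $\gamma$-balanced rule with $|S_{v_i}|\approx(1-\gamma)^i n$ this is $\Theta(\epsilon n/\gamma)$, but $|U|=\epsilon n$; whenever $\gamma$ is bounded away from $1$ the construction stalls after $O(1)$ levels and the returned tree is incomplete. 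Inflating each chunk by a factor $h$ would restore completion but would put an extra $h$ on the $\cost_{\splfun}(m)/(\epsilon n)$ term, overshooting the stated bound.

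\textbf{Drift at small vertices.} Your correctness claim that ``$v$ itself never sees an edit distance larger than $\epsilon|S_v|$'' is also false: updates arriving after $v$'s children activate still pass through $v$ and still contribute to $\ED(S'(T,v),S_v)$. Take the small child of the root with $|S_v|\approx\gamma n$ and route all of $U$ to it; then $\EDR(S'(T,v),S_v)=\Theta(\epsilon/\gamma)$, which for the regimes of interest ($\gamma=\Theta(\epsilon)$) is $\Theta(1)\gg\epsilon$. A one-shot snapshot per vertex cannot give Definition~\ref{def:approx_tree}; some vertices must be revisited.

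The paper sidesteps both obstacles by a completely different decomposition: it first builds the \emph{entire} tree $\greedy_{\splfun}(S)$ while the first half of $U$ is pulled, and then over $\ell=\Theta(\log n)$ rounds with geometrically shrinking update windows it applies the previous window's updates and \emph{rebuilds} only those subtrees whose root has drifted by more than $\frac{\epsilon}{4\log n}$ of its size. The $\frac{1}{\log n}$ in the threshold is exactly what caps the cumulative drift over $\ell$ rounds at $\epsilon$ (this is Lemma~\ref{lem:rebuild_correct}), and it is the true source of the outer $\log n$ factor---not bookkeeping overhead. The fact that the rebuilt roots in each round form an antichain is what keeps the rebuild cost proportional to a single $h$ rather than $h^2$.
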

\noindent Recalling the examples of Section~\ref{sec:dyn}, if $h_{\splfun}(\epsilon,n)=O\!\left(\frac{\log n}{\epsilon}\right)$ and $\cost_{\splfun}(m)=O(d\, m \log n)$ the bound of Theorem~\ref{thm:delayed_full} is in $O\!\left(\frac{d \log^3 n}{\epsilon^2}\right)$. 
The rest of this section describes \AlgoRebuild\ (Algorithm~\ref{algo:rebuild}) and proves two theorems, Theorem~\ref{thm:rebuild_correct} and Theorem~\ref{thm:delayed_time}, that yield immediately Theorem~\ref{thm:delayed_full}.

\subsection{Overview of the algorithm}
\AlgoRebuild\ works in rounds, and at each round it halves the number of requests it pulls from $U$. Without loss of generality assume $\epsilon n = 2^{\ell-1}$ for some $\ell \in \N$ (otherwise just replace $\epsilon$ with an appropriate $\epsilon' \in (\frac{\epsilon}{2},\epsilon)$), and define:
\begin{align}
    t_i &= \left\{
        \begin{array}{ll}
             \epsilon n(1-2^{-i}) \quad & i=0,\ldots,\ell-1  \\
             \epsilon n & i \ge \ell
        \end{array}
    \right.
\end{align}
For all $i,j \ge 0$ define $U_i^j = U[t_i+1,\ldots,t_j]$; and to simplify the notation let $U_i=U_i^{\ell}$ and $U^j=U_0^{j}$. Note that, for all $i = 1,\ldots,\ell-1$, the sequence $U^i$ contains all but the last $2^{-i}|U|$ requests of $U$, while $U^{\ell}=U$. Finally, for all $i \ge 0$ let $S^i = S+U^i$. Thus $S^i$ is the active set obtained by applying to $S$ all but the last $2^{-i}|U|$ requests of $U$; in particular, $S^0=S$ and $S^{\ell}=S+U$.

Let us give the intuition of \AlgoRebuild, assuming for simplicity $\cost_{\splfun}(n)= O(d\,n\log n )$, in which case $\greedy_{\splfun}$ runs in time $O(d\, h\, n \log n)$ on an $n$-element input --- see Lemma~\ref{lem:greedy}. To begin, while the requests in $U^1$ are pulled, compute $T^0=\greedy_{\splfun}(S)$. Since this takes $O(d\,h\, n \log n)$ operations and $|U^1|=\frac{\epsilon n}{2}$, we use $O\!\left(\frac{d\, h\, \log n}{\epsilon}\right)$ operations per request.
Now suppose that, for some $i \ge 1$, before pulling any request of $U^{i+1}$ we know $T^{i-1}=\greedy_{\splfun}(S^{i-1})$; by the argument above this holds for $i=1$. By the properties of $\greedy_{\splfun}$, every $v\in V(T^{i-1})$ keeps an associative array $D(T^{i-1},v)$ which contains $S^{i-1}(T^{i-1},v)$. While the requests of $U_i^{i+1}$ are pulled, update the associative arrays of $T^{i-1}$ using the requests of $U_{i-1}^i$, and mark every vertex $v$ that is reached by more than $\epsilon |D(T^{i-1},v)|$ of those requests --- those are the $v$ that could violate the $\epsilon$-approximation.
Finally, take all marked vertices and rebuild their subtrees with $\greedy_{\splfun}$, obtaining $T^i$. Since each request in $U_{i-1}^i$ reaches at most $h$ vertices, and marked vertices have sets of size at most $\epsilon$ times the number of the requests that reached them, the total size of the sets at marked vertices is at most $\frac{h \, |U_{i-1}^i|}{\epsilon} = h n 2^{-i}$.
Thus, rebuilding the subtrees requires $O(d \, h^2\, 2^{-i} n \log n)$ operations in total, which, since $|U_i^{i+1}| = \epsilon n 2^{-(i+1)}$, means $O\!\left(\frac{d\, h^2\, \log n}{\epsilon}\right)$ operations per request. Unfortunately this does not work yet, since we did not keep the invariant $T^{i}=\greedy_{\splfun}(S^i)$. However, by slightly decreasing the approximation parameter to $\frac{\epsilon}{\lg_2 n}$ one can guarantee that if $T^{i-1}$ is $\epsilon$-approximate w.r.t.\ $(S^{i-1},\splfun)$ then $T^{i}$ is $\epsilon$-approximate w.r.t.\ $(S^{i},\splfun)$. The final bound is obtained by expliciting the dependence on $\cost_{\splfun}$, refining the analysis to turn the $h^2$ into an $h$, and taking into account the cost of maintaining some ancillary data structures.

\begin{algorithm}[h!]\small
\caption{\AlgoRebuild}
\label{algo:rebuild}
\begin{algorithmic}[1]
\State \textbf{Input:} decision rule $\splfun$, $\epsilon \in (0,1)$, $n \in \N$, set $S$ of examples, sequence $U$ of $\epsilon |S|$ update requests
\State let $\ell=\lg_2(\epsilon n)+1$
\State let $\tau$ be as in Theorem~\ref{thm:delayed_full}
\State \textbf{execute} what follows, using at most $\tau$ operations for each request pulled from in $U_0^1$:
\State\hspace{\algorithmicindent} $T=\greedy_{\splfun}(S)$  \label{line:T0}
\For{$i=1,\ldots,\ell$}\label{line:for_so}
\textbf{execute} what follows, using at most $\tau$ operations for each request pulled from $U_i^{i+1}$:
\State initialize a set $\Vr_i = \emptyset$
\For{each $(s,o) \in U_{i-1}^i$} \label{line:update_loop}
\State compute the path $P(T,s)$ of $s$ in $T$
\For{each vertex $v \in P(T,s)$} \label{line:update_path}
\State update $D(T,v)$ with $(s,o)$, and increment $\Delta_i(v)$ \label{line:update}
\If{$\Delta_i(v) \ge \frac{\epsilon}{4 \lg_2 n} \, n_{i-1}(v)$}\label{line:cond_b}
\State $\Vr_i = \Vr_i \cup \{v\}$
\EndIf
\EndFor
\EndFor
\State let $\Vr_i^*$ be the subset of maximal\footnotemark\ vertices in $\Vr_i$\label{line:clean_Vi}
\For{every $v \in \Vr_i^*$} \label{line:rebuild_for}
\State $T_v=\greedy_{\splfun}(D(T,v))$ \label{line:rebuild}
\EndFor \label{line:rebuild_for_end}
\EndFor
\State\Return a pointer to $T$
\end{algorithmic}
\end{algorithm}
\footnotetext{Having no proper ancestor in $\Vr_i$.}

\subsection{Setup of the analysis}
Let us pin down some necessary notation. For all $i=0,\ldots,\ell$, we call \emph{round $i$} the $i$-th execution of the \textbf{for} loop (for $i=0$ this is undefined), and denote by $T^i$ the tree $T$ held by \AlgoRebuild\ just before the beginning of round $i+1$.
We let $h=\max_{i\in\{0,\ldots,\ell\}}h(T^i)$.

\AlgoRebuild\ keeps track of two counters, $n_i(v)$ and $\Delta_i(v)$. Let us define them formally; we show below how to update them efficiently. Let $i \in \{0,\ldots,\ell\}$.  For every $v \in V(T^i)$:
\begin{align}
    n_i(v) = |S^i(T^i,v)|
\end{align}
Note that, while $T^i$ is the tree held just \emph{after} round $i$, the set $S^i$ takes into account only the updates received \emph{before} round $i$; thus $S^i(T^i,v)$ is \emph{not} the active set at $v$ in $T^i$ at the end of round $i$ (that would be $S^{i+1}(T^i,v)$). 
Moreover, for every $v \in V(T^i)$:
\begin{align}
    \Delta_i(v) = \left| \{(s,o) \in U_{i-1}^i \;:\; v \in P(T^i,x) \} \right|
\end{align}
In words, $\Delta_i(v)$ is the number of requests of $U_{i-1}^i$ that reach $v$.
Our analysis also needs the number of requests that have reached $v$ since $v$ was created. To this end, define:
\begin{align}
    c_i(v) &= \left\{
    \begin{array}{ll}
         0 & \text{ if $i=0$ or $\Vr_i^*$ contains $v$ or an ancestor of $v$} \\
         c_{i-1}(v)+\Delta_i(v) & \text{ otherwise}
    \end{array}
    \right.
\end{align}
Note that:
\begin{align}
    c_i(v)-c_{i-1}(v) \le \Delta_i(v), \qquad \forall i \ge 1 \label{eq:ci_Di}    
\end{align}
Our proofs use the following facts.
\begin{lemma}\label{lem:path}
If $v \in \Vr_i^*$ then $S^i(T^i,v) = S^i(T^{i-1},v)$.
\end{lemma}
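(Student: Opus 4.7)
The plan is to show that the path of split rules from the root of the tree down to $v$ is identical in $T^{i-1}$ and in $T^i$; once we have that, both $S^i(T^{i-1},v)$ and $S^i(T^i,v)$ consist of exactly those examples in the common active set $S^i$ whose features satisfy those shared split rules, so the two sets coincide.

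The first step is to recall exactly how $T^i$ is produced from $T^{i-1}$ in round $i$: lines~\ref{line:rebuild_for}--\ref{line:rebuild_for_end} walk over the maximal marked vertices $u\in \Vr_i^*$ and replace the subtree rooted at $u$ by $\greedy_{\splfun}(D(T,u))$, while all vertices not lying strictly below some $u\in \Vr_i^*$ are untouched (in particular, their associated split/labeling rules, children pointers from their parents, and position in the tree are preserved). Now let $v\in \Vr_i^*$ and let $w$ be any proper ancestor of $v$ in $T^{i-1}$. By maximality of $v$ in $\Vr_i$ we have $w\notin \Vr_i$, hence $w\notin \Vr_i^*$; moreover $w$ is not a descendant of any $u\in \Vr_i^*$ either, because any such $u$ would be a proper ancestor of $v$ lying in $\Vr_i$, contradicting maximality. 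Consequently $w$ is one of the untouched vertices, and its split rule $\spl_w$ is the same in $T^{i-1}$ and in $T^i$. The same reasoning applies to $v$'s parent's pointer to $v$, so $v$ occupies the same position below the same ancestors in both trees.

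The second step is to translate this structural identity into the set identity. For any labeled example $(x,y)$ in $S^i$, membership in $S^i(T,v)$ depends only on whether $x$ follows, in $T$, the sequence of branch choices leading to position $v$; this in turn is determined entirely by the split rules $\{\spl_w\}$ at the proper ancestors $w$ of $v$. Since those split rules agree between $T^{i-1}$ and $T^i$ by the first step, we get $\{(x,y)\in S^i : v\in P(T^{i-1},x)\}=\{(x,y)\in S^i : v\in P(T^i,x)\}$, which is exactly $S^i(T^{i-1},v)=S^i(T^i,v)$.

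There is no real obstacle here: the only subtle point is to confirm that the subtree replacement is truly localized, i.e.\ that maximality of $v$ in $\Vr_i$ rules out both the case of an ancestor being in $\Vr_i^*$ and the case of an ancestor being destroyed by being inside a larger rebuilt subtree. Both are immediate from the definition of $\Vr_i^*$ as the set of maximal elements of $\Vr_i$. With that in hand the lemma follows without any quantitative estimate.
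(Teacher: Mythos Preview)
Your proof is correct and follows essentially the same approach as the paper's: both argue that maximality of $v$ in $\Vr_i$ forces every proper ancestor of $v$ to lie outside $\Vr_i^*$ and hence be untouched by the rebuild loop, so the root-to-$v$ path (and its split rules) is identical in $T^{i-1}$ and $T^i$. The paper compresses this into a single sentence, whereas you spell out both sub-cases (an ancestor being in $\Vr_i^*$ itself, and an ancestor lying inside some other rebuilt subtree) and make the translation from ``same split rules'' to ``same reaching set'' explicit; none of this is a different idea, just a more careful unpacking of the same one.
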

\begin{proof}
By construction $\Vr_i^*$ contains no proper ancestor of $v$, hence all such ancestors are untouched by the loop of line~\ref{line:rebuild_for}.
\end{proof}
\begin{lemma}\label{lem:n_i}
For every $i=1,\ldots,\ell$ every $v \in \Vr_i^*$ satisfies $n_i(v) \le n_{i-1}(v) + \Delta_i(v)$.
\end{lemma}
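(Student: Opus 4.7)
The plan is to use Lemma~\ref{lem:path} to replace $T^i$ with $T^{i-1}$ in the definition of $n_i(v)$, and then track how the active set at $v$ can grow as the updates in $U_{i-1}^i$ are applied. First I would write
\begin{align*}
    n_i(v) = |S^i(T^i,v)| = |S^i(T^{i-1},v)|,
\end{align*}
where the second equality is Lemma~\ref{lem:path}. Since $v \in V(T^{i-1})$ (every vertex in $\Vr_i^*$ existed in the tree held at the beginning of round $i$), and $S^i = S^{i-1} + U_{i-1}^i$, one can obtain $S^i(T^{i-1},v)$ from $S^{i-1}(T^{i-1},v)$ by applying in order the subsequence $U_{i-1}^i(v)$ of requests in $U_{i-1}^i$ whose example reaches $v$ in $T^{i-1}$: insertions can only increase the size of the stored multiset by one, deletions can only decrease it, so
\begin{align*}
    |S^i(T^{i-1},v)| \le |S^{i-1}(T^{i-1},v)| + |U_{i-1}^i(v)| = n_{i-1}(v) + |U_{i-1}^i(v)|.
\end{align*}

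The key step is then to observe that $|U_{i-1}^i(v)| = \Delta_i(v)$. The subtlety is that $\Delta_i(v)$ is defined via the paths $P(T^i,x)$ in the \emph{post-rebuild} tree, while the sequence $U_{i-1}^i(v)$ just defined uses the paths in the pre-rebuild tree $T^{i-1}$. However, because $v \in \Vr_i^*$ is maximal in $\Vr_i$, no proper ancestor of $v$ lies in $\Vr_i^*$, and therefore none of the vertices on the root-to-$v$ path of $T^{i-1}$ has its subtree replaced during round $i$. Hence this path coincides with the corresponding root-to-$v$ path in $T^i$, so an example $x$ reaches $v$ in $T^{i-1}$ if and only if it reaches $v$ in $T^i$. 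This identifies $|U_{i-1}^i(v)|$ with $\Delta_i(v)$ and yields the inequality $n_i(v) \le n_{i-1}(v) + \Delta_i(v)$.

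The only real obstacle is this mild bookkeeping issue about whether $\Delta_i(v)$ is the ``right'' count for paths in $T^{i-1}$ versus $T^i$; once the maximality of $v$ in $\Vr_i^*$ is invoked, it reduces to a one-line observation. Everything else is a trivial monotonicity argument about how edit operations affect multiset cardinality.
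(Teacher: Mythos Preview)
Your proof is correct and follows the same approach as the paper's, which runs the chain $n_i(v)=|S^i(T^i,v)|=|S^i(T^{i-1},v)|\le |S^{i-1}(T^{i-1},v)|+\Delta_i(v)=n_{i-1}(v)+\Delta_i(v)$ and simply calls the inequality step ``straightforward.'' Your additional care in reconciling the $T^{i}$ versus $T^{i-1}$ paths in the definition of $\Delta_i(v)$ via the maximality of $v$ in $\Vr_i^*$ is a valid (and arguably clarifying) expansion of that step.
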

\begin{proof}
By Lemma~\ref{lem:path} $|S^i(T^i,v)|=|S^i(T^{i-1},v)|$, hence:
\begin{align}
    n_{i}(v) &= |S^i(T^i,v)| \label{eq:n_Ti}
    \\&= |S^i(T^{i-1},v)| \label{eq:n_Ti1}
    \\&\le |S^{i-1}(T^{i-1},v)| + \Delta_i(v) \label{eq:n_Ti2}
    \\&= n_{i-1}(v) + \Delta_i(v) \label{eq:n_Ti3}
\end{align}
where~\eqref{eq:n_Ti2} is straightforward and~\eqref{eq:n_Ti},\eqref{eq:n_Ti3} use the definition of $n_i$.
\end{proof}

The next two subsections prove respectively Theorem~\ref{thm:rebuild_correct} and Theorem~\ref{thm:delayed_time}, which form the two parts of Theorem~\ref{thm:delayed_full} (correctness of \AlgoRebuild\ and bound on the number of operations per update request).
The correctness is given directly by Theorem~\ref{thm:rebuild_correct}. For the number of operations, Theorem~\ref{thm:delayed_time} yields a bound of
$O\!\left( h \, \log n \cdot \left(h + \frac{d \log n}{\epsilon} + \frac{\cost_{\splfun}(m)}{\epsilon n}\right) \right)$ where $h = \max_{i=0,\ldots,\ell}h(T_i)$ and $m=\max_{i=0,\ldots,\ell}|S^i|$. Since by Theorem~\ref{thm:rebuild_correct} every $T_i$ is $\epsilon$-approximate w.r.t. $(S^i,\splfun)$ and $|S^i| \le m$, then by definition $h(T_i) \le h_{\splfun}(\epsilon,m)$ for all $i$. Moreover clearly $m \le |S|+|U| = (1+\epsilon)$.

\subsection{Correctness of \AlgoRebuild}
\begin{lemma}\label{lem:rebuild_correct}
Let $i \in \{0,\ldots,\ell\}$. Then every $v \in V(T^i)$ satisfies $c_i(v) \le \epsilon\, n_{i_v}(v)$ where $i_v \in \{0,\ldots,i\}$ is the last round where $v$ was created.
\end{lemma}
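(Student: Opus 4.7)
The plan is to argue by induction on $i$, using the threshold test on line~\ref{line:cond_b} to control per-round growth of $\Delta_i(v)$ and the size $n_{i-1}(v)$.

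For the base case $i=0$, the definition gives $c_0(v)=0$, so the inequality is immediate. For the inductive step, fix $v \in V(T^i)$. If $i_v = i$, then by definition $c_i(v)=0$ and we are done. Otherwise $i_v < i$, so $v \in V(T^{i-1})$ and $c_i(v) = c_{i-1}(v) + \Delta_i(v)$. I first claim that $v \notin \Vr_j$ for every round $j \in \{i_v+1,\ldots,i\}$. Indeed, because $v$ persists across these rounds, neither $v$ nor any ancestor of $v$ belongs to $\Vr_j^*$; but if $v$ were in $\Vr_j$ while no ancestor of $v$ were in $\Vr_j$, then $v$ would itself be maximal in $\Vr_j$, a contradiction. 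Consequently the test at line~\ref{line:cond_b} fails throughout, i.e.\ $\Delta_j(v) < \frac{\epsilon}{4\log_2 n}\, n_{j-1}(v)$ for all such $j$.

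Next I would extend Lemma~\ref{lem:n_i} to this situation: whenever no proper ancestor of $v$ was rebuilt between rounds $j-1$ and $j$, we still have $S^j(T^j,v)=S^j(T^{j-1},v)$ because the split rules along the root-to-$v$ path are unchanged, and hence $n_j(v) \le n_{j-1}(v) + \Delta_j(v)$. Combining this with the threshold bound gives the multiplicative recurrence $n_j(v) \le n_{j-1}(v)\bigl(1+\tfrac{\epsilon}{4\log_2 n}\bigr)$. Iterating for at most $i - i_v \le \ell - 1 \le \log_2 n$ steps and applying $(1+x)^{1/x}\le e$ yields
\begin{align}
n_{j-1}(v) \;\le\; n_{i_v}(v)\Bigl(1+\tfrac{\epsilon}{4\log_2 n}\Bigr)^{\log_2 n} \;\le\; e^{\epsilon/4}\, n_{i_v}(v)
\end{align}
for every $i_v < j \le i$.

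Finally, summing the per-round bounds,
\begin{align}
c_i(v) \;=\; \sum_{j=i_v+1}^{i} \Delta_j(v) \;<\; \frac{\epsilon}{4\log_2 n}\sum_{j=i_v+1}^{i} n_{j-1}(v) \;\le\; \frac{\epsilon \cdot \ell}{4\log_2 n}\, e^{\epsilon/4}\, n_{i_v}(v).
\end{align}
Since $\ell \le \log_2 n + 1$ and $\epsilon \le 1$, the right-hand side is bounded by $\epsilon\, n_{i_v}(v)$ (with constant-factor slack), which closes the induction.

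The only real obstacle is ensuring that the bound on $n_{j-1}(v)$ does not blow up over the roughly $\log n$ rounds during which $v$ may survive. This is exactly why the threshold on line~\ref{line:cond_b} carries the $\tfrac{1}{\log_2 n}$ factor: the geometric growth rate $1+\tfrac{\epsilon}{4\log_2 n}$, raised to the $\log_2 n$ power, remains bounded by a small constant $e^{\epsilon/4}$, so the accumulated $\Delta_j$'s telescope to at most $\epsilon\, n_{i_v}(v)$.
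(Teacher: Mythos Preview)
Your proof is correct and follows essentially the same approach as the paper: bound each $\Delta_j(v)$ via the failed threshold test at line~\ref{line:cond_b}, propagate the resulting multiplicative recurrence on $n_j(v)$ over the at most $\ell \le \log_2 n + 1$ rounds since $v$ was created, and sum to get $c_i(v) < \epsilon\, n_{i_v}(v)$. You are in fact slightly more careful than the paper in explicitly noting that the conclusion of Lemma~\ref{lem:n_i} (stated only for $v\in\Vr_i^*$) extends to any $v$ whose root-to-$v$ path was not rebuilt; the ``induction'' framing is harmless but unnecessary, since the inductive hypothesis is never invoked.
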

\begin{proof}
The proof is trivial for $i=0$ since $c_0(v)=0$. Let then $i\ge 1$. Since $v$ was built at round $i_v$ then $c_{i_v}(v)=0$; using a telescoping sum and applying~\eqref{eq:ci_Di},
\begin{align}
    c_{i}(v) = c_{i}(v) - c_{i_v}(v) = \sum_{j=i_v+1}^{i} \left(c_{j}(v) - c_{j-1}(v)\right) \le \sum_{j=i_v+1}^{i} \Delta_j(v) \label{eq:civ_sum}
\end{align}
We shall then bound $\Delta_j(v)$. By definition of $i_v$, for all $j \in \{i_v+1,\ldots,i\}$ at round $j$ the condition of line~\ref{line:cond_b} fails, hence:
\begin{align}
    \Delta_j(v) < \frac{\epsilon}{4 \lg_2 n} n_{j-1}(v) \label{eq:c_ti}
\end{align}
In this case, since $n_{j}(v) \le n_{j-1}(v) + \Delta_j(v)$ by Lemma~\ref{lem:n_i}, we have:
\begin{align}
    n_{j}(v) \le n_{j-1}(v)\left(1+ \frac{\epsilon}{4 \lg_2 n}\right) \label{eq:n_ti}
\end{align}
By iterating~\eqref{eq:n_ti} we conclude that for every $j=i_v+1,\ldots,i$:
\begin{align}
    n_{j}(v) \le \left(1+ \frac{\epsilon}{4 \lg_2 n}\right)^{j-i_v} n_{i_v}(v)
\end{align}
and~\eqref{eq:c_ti} then implies for every $j=i_v+1,\ldots,i$:
\begin{align}
    \Delta_j(v) \le n_{i_v}(v)\cdot \frac{\epsilon}{4 \lg_2 n} \left(1+ \frac{\epsilon}{4 \lg_2 n}\right)^{j-1-i_v} 
\end{align}
Plugging this bound in~\eqref{eq:civ_sum} and noting that $i \le 2 \lg_2 (\epsilon n) \le 2 \lg_2 n$, we obtain:
\begin{align}
    c_{i}(v) &\le n_{i_v}(v) \cdot \frac{\epsilon}{4 \lg_2 n}  \sum_{j=i_v+1}^{i} \left(1+\frac{\epsilon}{4 \lg_2 n}\right)^{j-1-i_v} 
    \\ &\le n_{i_v}(v) \cdot \frac{\epsilon i}{4 \lg_2 n} \left(1+ \frac{\epsilon}{4 \lg_2 n}\right)^{i} 
    \\ &\le n_{i_v}(v) \cdot \frac{\epsilon}{2} \left(1+ \frac{\epsilon}{4 \lg_2 n}\right)^{2 \lg_2 n}
    \\ &< n_{i_v}(v) \cdot \epsilon\, \frac{e^{1/2}}{2}
\end{align}
which is at most $\epsilon \cdot n_{i_v}(v)$, as claimed.
\end{proof}

\begin{theorem}\label{thm:rebuild_correct}
$T^i$ satisfies the constraints of the delayed approximate decision problem w.r.t.\ $S^i,\splfun$ for all $i \in \{0,\ldots,\ell\}$. Hence, in particular, the tree returned by \AlgoRebuild\ is $\epsilon$-approximate w.r.t.\ $(S+U,\splfun)$.
\end{theorem}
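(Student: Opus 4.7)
The plan is to establish by induction on $i \in \{0,\ldots,\ell\}$ two claims at the end of round $i$: (a) for every $v \in V(T^i)$ the associative array $D(T^i,v)$ stores precisely $S^i(T^i,v)$, and (b) $T^i$ is $\epsilon$-approximate w.r.t.\ $(S^i,\splfun)$. The second conclusion of the theorem then follows by taking $i=\ell$, since $S^\ell = S+U$.

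The base case $i=0$ is immediate from the properties of $\greedy_{\splfun}(S)$ invoked on line~\ref{line:T0}: the arrays are built with the correct subsets, and taking $S_v = S^0(T^0,v)$ makes the edit distance zero while the rule at $v$ is $\splfun(S_v)$ by construction. For the inductive step, assume (a) and (b) hold for $T^{i-1}$. For (a), the loop at line~\ref{line:update} pushes each request of $U_{i-1}^i$ along its path in $T^{i-1}$; by the inductive hypothesis this turns $D(T^{i-1},v)$ into $S^i(T^{i-1},v)$ at every vertex $v$ that survives round $i$. For vertices outside all rebuilt subtrees, Lemma~\ref{lem:path} (applied to the nearest ancestor in $\Vr_i^*$, iterated across earlier rounds when necessary) gives $S^i(T^{i-1},v) = S^i(T^i,v)$. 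For vertices that are descendants of some $w \in \Vr_i^*$, the call $\greedy_{\splfun}(D(T,w))$ on line~\ref{line:rebuild} receives the correct input $S^i(T^i,w)$, and greedy's own invariant then yields $D(T^i,v) = S^i(T^i,v)$ at every descendant.

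For (b), fix $v \in V(T^i)$ and let $i_v \le i$ be the most recent round in which $v$ was created (so either $i_v = 0$, or $v$ is a descendant of some $w \in \Vr_{i_v}^*$ built by greedy on round $i_v$). Set $S_v := S^{i_v}(T^{i_v},v)$; by claim (a) at round $i_v$, this set equals the input on which $\greedy_{\splfun}$ fixed the rule now stored at $v$, so the rule at $v$ is precisely $\splfun(S_v)$ as required by Definition~\ref{def:approx_tree}. Since no ancestor of $v$ is rebuilt between rounds $i_v$ and $i$, the split rules along the root-to-$v$ path are identical in $T^{i_v}$ and $T^i$; hence each update in $U_{i_v}^i$ that reaches $v$ changes the symmetric difference between $S^{i_v}(T^{i_v},v)$ and $S^i(T^i,v)$ by at most one, giving $\ED(S^i(T^i,v), S_v) \le c_i(v)$. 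Lemma~\ref{lem:rebuild_correct} then yields $c_i(v) \le \epsilon \cdot n_{i_v}(v) = \epsilon \cdot |S_v|$, from which $\EDR(S^i(T^i,v), S_v) \le \epsilon$ follows since $|S_v| \le \max(|S^i(T^i,v)|, |S_v|)$.

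The main obstacle is the bookkeeping around $c_i(v)$: one must verify that the counter implicitly built from the $\Delta_j(v)$'s really upper-bounds the multiset symmetric difference $\ED(S^i(T^i,v), S_v)$, and that Lemma~\ref{lem:path} propagates the ``unchanged root-to-$v$ path'' property correctly when iterated across many rounds (so that the subsets one compares are truly the ones seen by the same split rules). Once these facts are in hand, the theorem reduces to a clean induction together with a single application of Lemma~\ref{lem:rebuild_correct}, which converts the geometric growth of $n_j(v)$ into the required relative-edit-distance guarantee.
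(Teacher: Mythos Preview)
Your proof is correct and follows essentially the same approach as the paper: induct on $i$, maintain the invariant that $D(T^i,v)$ stores $S^i(T^i,v)$, take $S_v = S^{i_v}(T^{i_v},v)$ as the witness set for Definition~\ref{def:approx_tree}, and invoke Lemma~\ref{lem:rebuild_correct} to bound the relative edit distance. One small wording issue: your appeal to Lemma~\ref{lem:path} for vertices \emph{outside} all rebuilt subtrees is phrased oddly (such a vertex has no ancestor in $\Vr_i^*$ to which the lemma could be applied), but the claim $S^i(T^{i-1},v) = S^i(T^i,v)$ for those vertices is immediate anyway since their root-to-$v$ path is untouched by the rebuilds --- which is exactly the reasoning behind Lemma~\ref{lem:path}.
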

\begin{proof}
First, we show that $T^i$ is $\epsilon$-approximate w.r.t.\ $(S^i,\splfun)$ for all $i \in \{0,\ldots,\ell\}$. For $i=0$ the claim is trivial since $T^0=\greedy_{\splfun}(S^0)$. Let then $i \ge 1$, let $v \in V(T^i)$, and let $i_v \in \{0,\ldots,i\}$ be the last round where $v$ was created (i.e., where $v$ or some ancestor of $v$ was in $\Vr_{i_v}^*$).
%, and denote $S_v = S^{i_v}(T^{i_v},v)$.
By construction, the split rule at $v$ in $T^i$ is $\splfun(S^{i_v}(T^{i_v},v))$; and by Lemma~\ref{lem:rebuild_correct} $c_i(v) \le \epsilon \, |S^{i_v}(T^{i_v},v)|$, so $\EDR(S^i(T^i,v),S^{i_v}(T^{i_v},v)) \le \epsilon$. Thus $T^i$ is $\epsilon$-approximate w.r.t.\ $(S^i,\splfun)$, as claimed.

Next, we show that for every $v \in V(T^i)$ the array $D(T^i,v)$ stores exactly $S^i(T^i,v)$. This is true for $i=0$ by definition of $\greedy_{\splfun}$. Now let $i \ge 1$ and suppose the claim is true for $i-1$. Because of line~\ref{line:update}, at the end of the loop of line~\ref{line:for_so} each $v \in V(T^{i-1})$ satisfies that $D(T^{i-1},v)$ stores $S^i(T^{i-1},v)$. By definition of $\greedy_{\splfun}$, then, after the rebuilds at line~\ref{line:rebuild} $D(T^i,v)$ stores $S^i(T^i,v)$ for all $v \in V(T^i)$, as claimed.

For the second claim just note that $S+U=S^{\ell}$ and \AlgoRebuild\ returns $T^{\ell}$.
\end{proof}

\subsection{Performance of \AlgoRebuild}
\begin{lemma}\label{lem:counters}
\AlgoRebuild\ can be implemented so that, at every round $i \ge 1$:
\begin{enumerate}\itemsep0pt
    \item each iteration of the loop at line~\ref{line:update_loop} takes $O(d \, h \log n)$ operations
    \item line~\ref{line:clean_Vi} takes $O(|\Vr_i|\, h \log n)$ operations
    \item line~\ref{line:rebuild_for} can enumerate $\Vr_i^*$ in $O(1)$ per element
\end{enumerate}
\end{lemma}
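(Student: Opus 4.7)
The plan is to describe, for each of the three items, a concrete implementation and then bound its cost using the standard $O(\log n)$ lookup/insert/delete cost of the associative array $D$ and the fact that comparisons between labeled examples take time $O(d)$. The main prerequisite is to augment each vertex $v \in V(T)$ with a pair of integer counters holding the current values of $n_{i-1}(v)$ and $\Delta_i(v)$: the first one is initialised to $|D(T,v)|$ when $v$ is created (during a \greedy$_\splfun$ rebuild) and is frozen during round $i$; the second is reset to $0$ at the start of round $i$. With these counters, the predicate of line~\ref{line:cond_b} is checkable in $O(1)$ time at $v$.

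For item 1, each iteration of the loop at line~\ref{line:update_loop} processes a single request $(s,o)$ as follows. Reading $s$ from the iterator costs $O(d)$. Computing the path $P(T,s)$ costs $O(h)$, since along the path only a sequence of $O(1)$-time split evaluations is needed. For each vertex $v \in P(T,s)$ we then perform $O(1)$ counter operations, one update in $D(T,v)$, and one conditional insertion into the auxiliary set storing $\Vr_i$; the dominant cost is the update of $D(T,v)$, which is $O(d\log n)$ because the $O(\log n)$ rebalancing operations each invoke a comparison between labeled examples of cost $O(d)$. The set $\Vr_i$ is also maintained as a self-balancing search tree keyed by vertex identifier, so each insertion costs $O(\log n)$ which is dominated by the $O(d\log n)$ term. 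Summing over the $O(h)$ vertices on the path gives $O(d\, h \log n)$ per iteration.

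For item 2, I would compute $\Vr_i^*$ in one pass over $\Vr_i$: for each $v \in \Vr_i$ walk from $v$ toward the root, at each ancestor $u$ query the membership table of $\Vr_i$; if some strict ancestor of $v$ is found in $\Vr_i$ then $v$ is discarded, otherwise $v$ is appended to a plain list (or array) $\Vr_i^*$. The walk visits at most $h$ vertices and each membership query costs $O(\log |\Vr_i|) = O(\log n)$, so processing one $v$ costs $O(h\log n)$ and the total is $O(|\Vr_i|\, h \log n)$. Item 3 is then immediate, because $\Vr_i^*$ is exactly the list produced in item 2 and can be enumerated in $O(1)$ per element.

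The only subtle point is that the counter $n_{i-1}(v)$ that is used on line~\ref{line:cond_b} must be the size of $S^{i-1}(T^{i-1},v)$ and \emph{not} the currently live size $|D(T,v)|$, which is being modified concurrently by the loop of line~\ref{line:update_loop}. This is the main bookkeeping obstacle and I would handle it by storing $n_{i-1}(v)$ in a separate field at $v$, refreshed at the beginning of each round (for new vertices, it is set when the vertex is emitted by \greedy$_{\splfun}$ on line~\ref{line:rebuild}; for surviving vertices, one can store the size at creation plus an accumulator incremented once per round). No additional asymptotic cost is incurred.
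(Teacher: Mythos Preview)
Your proposal is correct and follows essentially the same approach as the paper: path traversal plus $D(T,v)$ updates for item~1, an ancestor walk with membership queries in $\Vr_i$ for item~2, and a plain list for item~3. The only notable difference is how $n_{i-1}(v)$ is made available in $O(1)$: the paper snapshots $|D(T,v)|$ lazily the first time $v$ is touched in round $i$ (marking $v$ ``done'' so the snapshot is not overwritten), whereas you keep a persistent field updated across rounds via an accumulator; both schemes work and neither changes the asymptotics.
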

\begin{proof}
1. By definition $|P(T,s)| \le h$, hence computing $P(T,s)$ takes time $O(h)$. For each $v \in P(T,s)$, updating $D(T,v)$ takes time $O(d \log n)$ by assumption. To increment $\Delta_i(v)$ in time $O(1)$, create it as a new variable associated to $v$ the first time $v$ is processed by the loop of line~\ref{line:update_path} and set it to $1$, then mark $v$ as ``alive'' so that subsequent updates increment that variable. Assuming we can access $n_{i-1}(v)$ in time $O(1)$, checking the condition at line~\ref{line:cond_b} takes time $O(1)$. Finally, updating $\Vr_i$ takes time $O(\log |V(T)|)=O(\log n)$ using an associative array with logarithmic update time. It remains to show how line~\ref{line:cond_b} can  access $n_{i-1}(v)$ in $O(1)$ operations.

Consider again $D(T,v)$. Just before round $i+1$ starts, $D(T,v)$ stores $S^i(T^i,v)$. This is true for $i=0$ since $\greedy_{\splfun}$ stores explicitly $S^0(T^0,v)$ in $D(T,v)$; and it remains true for $i \ge 1$ since either $v$ is in a subtree rebuilt at round $i$, and the argument above applies, or $D(T,v)$ is updated by line~\ref{line:update}. Thus, for each $i \ge 1$, at the beginning of round $i$ we can access $n_{i-1}(v)$ in time $O(1)$ by querying $|D(T,v)|$. To make it available throughout all the round, right before executing line~\ref{line:update} query $|D(T,v)|$ and store it in a new variable $\hat n_{i-1}(v)$, then mark $v$ as ``done'' so that $\hat n_{i-1}(v)$ does not get overwritten. From this point onward, for any $v \in V(T)$ one can retrieve $n_{i-1}(v)$ in time $O(1)$ by using $\hat n_{i-1}(v)$ if it exists, and using $|D(T,v)|$ otherwise.

2,3. Initialise an empty linked list $\Vr_i^*=\emptyset$. For every $v \in \Vr_i$, list all the ancestors of $v$ in $T$ --- this takes time $O(h)$ as every vertex of $T$ keeps a pointer to its parent --- and if none of them is in $\Vr_i$ then append $v$ to $\Vr_i^*$.
\end{proof}

\begin{theorem}\label{thm:delayed_time}
\AlgoRebuild\ can be implemented to use $O\!\left( h \, \log n \cdot \left(h + \frac{d \log n}{\epsilon} + \frac{\cost_{\splfun}(m)}{\epsilon n}\right) \right)$ operations per update request where $h = \max_{i=0,\ldots,\ell}h(T_i)$ and $m=\max_{i=0,\ldots,\ell}|S^i|$.
\end{theorem}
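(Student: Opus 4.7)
The plan is to bound the total work performed in each round $i$ of \AlgoRebuild\ and charge it against the $|U_i^{i+1}|$ update requests pulled during that round; this yields the per-request budget $\tau$. I include the initial computation of $T^0 = \greedy_{\splfun}(S)$ at line~\ref{line:T0} as a ``round~$0$'' amortized over $|U_0^1| = \Theta(\epsilon n)$ pulls. A direct calculation shows $|U_{i-1}^i| = 2|U_i^{i+1}|$ for $1 \le i \le \ell-2$, so per-request costs are essentially (round work)/$|U_{i-1}^i|$, with boundary rounds handled by the convention that the $\tau$ operations after the last pull count toward the same budget.

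For round $i \ge 1$ I isolate three contributions. The update loop at line~\ref{line:update_loop} costs $O(dh\log n)$ per iteration by Lemma~\ref{lem:counters}(1), totalling $O(dh\log n \cdot |U_{i-1}^i|)$. The cleanup at line~\ref{line:clean_Vi} costs $O(|\Vr_i|\,h\log n)$ by Lemma~\ref{lem:counters}(2); since every $v \in \Vr_i$ lies on the path of some request in $U_{i-1}^i$, I bound $|\Vr_i| \le h|U_{i-1}^i|$ and get $O(h^2 \log n \cdot |U_{i-1}^i|)$. After dividing by $|U_i^{i+1}|$ these two contributions fit the $O(h \log n(h + d\log n/\epsilon))$ part of the target bound.

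The main obstacle is the rebuild work at lines~\ref{line:rebuild_for}--\ref{line:rebuild_for_end}. Two structural facts drive the analysis. First, by maximality $\Vr_i^*$ forms an antichain in $T^{i-1}$, so any update-request path crosses at most one vertex of $\Vr_i^*$: this gives $\sum_{v \in \Vr_i^*} \Delta_i(v) \le |U_{i-1}^i|$, and the sets $\{S^i(T^{i-1},v) : v \in \Vr_i^*\}$ are pairwise disjoint so $N := \sum_v n_i(v) \le m$. Second, each $v \in \Vr_i^*$ crossed the threshold of line~\ref{line:cond_b}, and Lemma~\ref{lem:n_i} then gives $n_i(v) \le n_{i-1}(v) + \Delta_i(v) = O(\log n/\epsilon) \cdot \Delta_i(v)$; summing yields $N = O(\log n/\epsilon) \cdot |U_{i-1}^i|$. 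A single call $\greedy_{\splfun}(D(T,v))$ costs $O(h(\cost_{\splfun}(n_i(v)) + d\, n_i(v) \log n))$: at each of the $\le h$ levels the set sizes sum to at most $n_i(v)$, so by superadditivity of $\cost_{\splfun}$ each level contributes $O(\cost_{\splfun}(n_i(v)))$, plus $O(d\, n_i(v) \log n)$ for partitioning and rebuilding the associative arrays $D(\cdot,\cdot)$. Summing over $v \in \Vr_i^*$ and invoking convexity of $\cost_{\splfun}$ (with $\cost_{\splfun}(0) = 0$), which implies $\sum_v \cost_{\splfun}(n_i(v)) \le \cost_{\splfun}(N) \le (N/m) \cost_{\splfun}(m)$, combined with $N = O(\log n \cdot |U_{i-1}^i|/\epsilon)$, yields a total rebuild work at round $i$ of $O\!\left(h|U_{i-1}^i|\cdot\big(d\log^2 n/\epsilon + \log n \cdot \cost_{\splfun}(m)/(\epsilon m)\big)\right)$.

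Dividing the per-round work by $|U_i^{i+1}| = \Theta(|U_{i-1}^i|)$, and using $m = \Theta(n)$ to pass from $\cost_{\splfun}(m)/m$ to $\cost_{\splfun}(m)/n$ up to constants, every round $i \ge 1$ gives a per-request cost of $O(h \log n \cdot (h + d\log n/\epsilon + \cost_{\splfun}(m)/(\epsilon n)))$, matching the claim. Round~$0$ is handled identically: $\greedy_{\splfun}(S)$ costs $O(h(\cost_{\splfun}(n) + d\, n\log n))$ by the same recursive analysis, and amortizing over $|U_0^1| = \epsilon n / 2$ pulls gives a per-request cost of $O(h \cost_{\splfun}(n)/(\epsilon n) + hd\log n/\epsilon)$, also within the claimed bound. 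The main technical subtlety, which is where I expect the bulk of the writeup to go, is the joint use of the antichain property of $\Vr_i^*$ and the threshold at line~\ref{line:cond_b} to simultaneously bound $N$ by both $m$ and $O(\log n \cdot |U_{i-1}^i|/\epsilon)$, enabling the convexity argument that turns the aggregate $\cost_{\splfun}$ contribution into the desired $\cost_{\splfun}(m)/(\epsilon n)$ factor.
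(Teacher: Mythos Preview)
Your proposal is correct and follows the same approach as the paper's proof: bound the work of each round and divide by the number of update requests pulled in that round, using Lemma~\ref{lem:counters} for the update loop and the antichain property of $\Vr_i^*$ together with the threshold condition at line~\ref{line:cond_b} and Lemma~\ref{lem:n_i} for the rebuild work. The one minor difference is that you bound $\cost_{\splfun}(N)$ in a single step via $N \le m$ and convexity (giving $\cost_{\splfun}(N) \le (N/m)\,\cost_{\splfun}(m)$), whereas the paper splits into two cases depending on whether $\frac{5\,t_i\log_2 n}{\epsilon n}$ exceeds $1$; your version cleanly unifies those two cases.
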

\begin{proof}
By Lemma~\ref{lem:greedy} and by definition of $h$, $\greedy_{\splfun}(S)$ runs in time:
\begin{align}
    O\!\left(h \cdot \Big(\cost_{\splfun}(n) + d\, n \log n \Big)\right)
\end{align}
Thus, $\greedy_{\splfun}(S)$ can be ran by using for each request in $U_0^1$ a number of operations in:
\begin{align}
    O\!\left(\frac{h}{\epsilon} \cdot \left(\frac{\cost_{\splfun}(n)}{n} + d\, \log n \right)\right) \label{eq:ops_R0}
\end{align}

Now consider round $i$. By Lemma~\ref{lem:counters}, the total number of operations taken by the loop at line~\ref{line:update_loop} together with line~\ref{line:clean_Vi} is in $O(t_i \, d h \log n + |\Vr_i| h \log n)$. As each iteration of that loop inserts at most $h$ elements in $\Vr_i$ then $|\Vr_i| \le t_i \, h$, so the bound above is in $O(t_i \cdot (d + h) h \log n)$. Hence, excluding line~\ref{line:rebuild_for}, the $i$-th round can be ran using $O((d + h) h \log n)$ operations per update request.
It remains to bound the time taken by the loop at line~\ref{line:rebuild_for}, which is dominated by the total time of the invocations of $\greedy_{\splfun}$. Let then $v \in \Vr_i^*$ and consider $D(T,v)$. By construction, $D(T,v)$ stores $S^i(T^{i-1}(v))$, which by Lemma~\ref{lem:path} equals $S^i(T^i,v)$.
Thus, by Lemma~\ref{lem:greedy}, by definition of $h$, and by the assumptions on $\cost_{\splfun}$ the total time of the invocations of $\greedy_{\splfun}$ is in:
\begin{align}
    & O\!\left(\sum_{v \in \Vr_i^*} h \cdot \Big(\cost_{\splfun}(n_i(v)) + d\,n_i(v)\log n_i(v) \Big)\right)
    \\ &=
    O\!\left(h \,  \cost_{\splfun}\left(\sum_{v \in \Vr_i^*}n_i(v)\right)\right)  +  O\!\left(d h \log n \, \sum_{v \in \Vr_i^*} n_i(v) \right) \label{eq:Vri_tot_cost}
\end{align}
Thus, we shall bound $\sum_{v \in \Vr_i^*} n_i(v)$. Fix any $v \in \Vr_i^*$. By line~\ref{line:cond_b}:
\begin{align}
    n_{i-1}(v) \le \frac{4 \lg_2 n}{\epsilon} \Delta_i(v)
\end{align}
Moreover $n_i(v) \le n_{i-1}(v) + \Delta_i(v)$ by Lemma~\ref{lem:n_i}, hence:
\begin{align}
    n_{i}(v) \le \left(1+\frac{4 \lg_2 n}{\epsilon}\right)  \Delta_i(v) \le \frac{5 \lg_2 n}{\epsilon}  \Delta_i(v)
\end{align}
Since no two vertices in $\Vr_i^*$ are in an ancestor-descendant relationship, every $(s,o) \in U_{i-1}^i$ reaches at most one vertex in $\Vr_i^*$. Therefore:
\begin{align}
    \sum_{v \in \Vr_i^*} \Delta_i(v) \le |U_{i-1}^i| = t_{i-1} - t_i = t_i
\end{align}
We conclude that:
\begin{align}
    \sum_{v \in \Vr_i^*} n_i(v) \le \frac{5 \lg_2 n}{\epsilon} \cdot t_i \label{eq:sum_Vr}
\end{align}
We can now bound the two terms of~\eqref{eq:Vri_tot_cost}. For the first term, we consider two cases. If $\frac{t_i \, 5 \log_2 n}{\epsilon n} > 1$ then we use again the fact that no two vertices in $\Vr_i^*$ are in an ancestor-descendant relationship to obtain $\sum_{v \in \Vr_i^*} n_i(v) \le |S^i| \le m$, which yields:
\begin{align}
\cost_{\splfun}\left(\sum_{v \in \Vr_i^*}n_i(v)\right) \le \cost_{\splfun}(m) < \frac{\cost_{\splfun}(m) \, t_i \, 5 \log_2 n}{\epsilon n}
\end{align}
If instead $t_i \le \frac{\epsilon n}{5 \log_2 n}$ then observe that, by its assumptions, $\cost_{\splfun}$ satisfies $\cost_{\splfun}(x) \le \frac{\cost_{\splfun}(cx)}{c}$ for all $x \in \R_{\ge 0}$ and all $c \ge 1$. Using~\eqref{eq:sum_Vr} and choosing $c=\frac{\epsilon n}{t_i 5 \log_2 n} < 1$, we obtain:
\begin{align}
\cost_{\splfun}\left(\sum_{v \in \Vr_i^*}n_i(v)\right) \le 
\cost_{\splfun}\left(\frac{5 \lg_2 n}{\epsilon} \cdot t_i\right)
\le \frac{\cost_{\splfun}(n) \, t_i \,5 \log_2 n}{\epsilon n}
\le \frac{\cost_{\splfun}(m) \, t_i \,5 \log_2 n}{\epsilon n}
\end{align}
Therefore the first term of~\eqref{eq:Vri_tot_cost} is bounded by:
\begin{align}
    O\!\left(t_i \cdot \frac{h \, \cost_{\splfun}(m) \, \log_2 n}{\epsilon n} \right)
\end{align}
For the second term of~\eqref{eq:Vri_tot_cost}, again by~\eqref{eq:sum_Vr} we obtain:
\begin{align}
    d h \log n \, \sum_{v \in \Vr_i^*} n_i(v)
    = O\left( t_i \cdot \frac{h\, d \log^2 n }{\epsilon} \right)
\end{align}
Thus the total number of operations performed by $\greedy_{\splfun}$ in the loop of line~\ref{line:rebuild_for} is in:
\begin{align}
    O\!\left( t_i \cdot \frac{h\, \log n}{\epsilon}\, \left( \frac{\cost_{\splfun}(2 n)}{n} + d \log n \right) \right) 
\end{align}
Summing all bounds and dividing by $t_i$, one obtains the following bound on the number of operations per update request for each round $i=1,\ldots,\ell$:
\begin{align}
    &~~~~ O\!\left((d + h) h \log n + \frac{h \, \log n}{\epsilon} \left( \frac{\cost_{\splfun}(m)}{n} + d \log n \right) \right)  \label{eq:ops_Ri}
\end{align}
Since~\eqref{eq:ops_Ri} dominates~\eqref{eq:ops_R0}, then it bounds the operations per request of \AlgoRebuild.
By rearranging terms, we obtain that~\eqref{eq:ops_Ri} is bounded by:
\begin{align}
    O\!\left( h \, \log n \cdot \left(h + \frac{d \log n}{\epsilon} + \frac{\cost_{\splfun}(m)}{\epsilon n}\right) \right)      \label{eq:ops_delayed}
\end{align}
which concludes the proof.
\end{proof}

\section{Smoothness, approximation, and balancedness of gains}\label{sec:smooth}
In this section we prove that $G\in\{\giniGain,\IG,\VarGain\}$ is smooth w.r.t.\ the relative edit distance; that is, that for any split rule $\spl$ we can bound $|G(S,\spl)-G(S',\spl)|$ in term of $\EDR(S,S')$. As a consequence we also prove that an $\epsilon$-approximate tree also guarantees a good approximation in terms of $G$, and that max-$G$ $\alpha$-threshold decision rules are $\gamma$-balanced where $\alpha$ depends on $\gamma$. We first prove some ancillary results on the functions $g$ for which $G$ is a conditional $g$-gain, and then move on to prove the rest.

\subsection{Ancillary results}
\begin{lemma}\label{lem:EDR_g}
    Let $g : (\scX \times \scY)^* \to \R_{\ge 0}$ and let $f : \N \to \R$ be nondecreasing and such that:
    \begin{align}
        g(S) &\le f(|S|) && \forall \, S \in (\scX \times \scY)^*
        \\
        |g(S)-g(S')| &\le \frac{f(\max(|S|,|S'|))}{\max(|S|,|S'|)} && \forall \, S,S' \in (\scX \times \scY)^*
    \end{align}
    Then for all $S,S' \in (\scX \times \scY)^*$:
    \begin{align}
        |g(S)-g(S')| \le 
        3 \EDR(S,S') f(\max(|S|,|S'|))
    \end{align}
\end{lemma}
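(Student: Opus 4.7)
The plan is to use a hybrid/telescoping argument along an edit path from $S$ to $S'$, combined with a short case split when the two sets have very small overlap.

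First I would assume WLOG that $|S| \le |S'|$ and set $N = |S'| = \max(|S|,|S'|)$. Let $a = |S\setminus S'|$ and $b = |S'\setminus S|$ so $\ED(S,S') = a+b$, and let $R$ be the multiset intersection, with $|R| = |S|-a = N-b$. I would form an edit path from $S$ to $S'$ in two phases: first delete the $a$ elements of $S\setminus S'$ (passing through sets of sizes $|S|,|S|-1,\ldots,|R|$), then insert the $b$ elements of $S'\setminus S$ (passing through sizes $|R|,|R|+1,\ldots,N$). Choosing this order ensures every intermediate set has size at most $N$, so monotonicity of $f$ gives $f(m) \le f(N)$ at every step, and every step's "max of the two consecutive sizes" is at least $|R|+1$.

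Next I would apply the second hypothesis (which I read as a per-single-edit sensitivity bound) at each of the $a+b$ steps and sum with the triangle inequality. Each deletion step contributes at most $f(m)/m \le f(N)/(|R|+1)$, and similarly for each insertion step, giving
\begin{align}
|g(S)-g(S')| \le (a+b)\,\frac{f(N)}{|R|+1} = \ED(S,S')\,\frac{f(N)}{|R|+1}.
\end{align}
Now I would split into two cases on the size of $R$. If $|R|+1 \ge N/3$, then the right-hand side is bounded by $3\,\ED(S,S')\,f(N)/N = 3\,\EDR(S,S')\,f(N)$, as desired. Otherwise $|R| < N/3$, which forces $b > 2N/3$ and hence $\EDR(S,S') \ge b/N > 2/3$; in this case I use the first hypothesis together with $g \ge 0$ to get the crude bound $|g(S)-g(S')| \le \max(g(S),g(S')) \le f(N) < \tfrac{3}{2}f(N) < 3\,\EDR(S,S')\,f(N)$, and again we are done. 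The case where $|S| \ge |S'|$ is symmetric.

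The main subtlety is not the telescoping itself but the necessity of the case split: without assuming $f(x)/x$ is monotone in $x$, the hybrid argument is only efficient when the intermediate sets remain comparable in size to $N$, i.e.\ when $|R|$ is a constant fraction of $N$. The $|R| < N/3$ regime has to be handled separately using the trivial boundedness of $g$, and that is exactly what forces the constant $3$ (or any constant larger than $3/2$) in the final bound.
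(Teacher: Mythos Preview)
Your proposal is correct and follows essentially the same approach as the paper: a case split into a ``large $\EDR$'' regime handled by the crude bound $|g(S)-g(S')|\le f(N)$, and a ``small $\EDR$'' regime handled by telescoping along an edit path. The only cosmetic differences are that the paper splits on $\EDR \gtrless \tfrac{1}{3}$ (rather than on $|R|+1 \gtrless N/3$) and, in the telescoping case, interleaves deletions with insertions so that all intermediate sizes stay in $[|S|-1,|S'|]$ instead of dipping down to $|R|$; your delete-then-insert path with the $|R|+1$ lower bound on the step denominators achieves the same constant with slightly less bookkeeping.
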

\begin{proof}
For simplicity let $k=\ED(S,S')$. If $k = 0$ then $g(S)=g(S')$ and the bound is trivial, so assume $k \ge 1$ and $|S'|\ge|S|$. 
Suppose first $\EDR(S,S') \ge \frac{1}{3}$. Then:
\begin{align}
    |g(S)-g(S')| \le \max(g(S),g(S')) \le f(\max(|S|,|S'|))
    \le 3 \EDR(S,S') f(\max(|S|,|S'|)) 
\end{align}
Now suppose instead $\EDR(S,S') < \frac{1}{3}$. Observe that this implies $|S'| \le \frac{3}{2}|S|$ and $|S|\ge 2$, and thus $|S|-1 \ge \frac{1}{3}|S'|$.
By definition of $\ED$ there exist $S_0,\ldots,S_k \in (\scX \times \scY)^*$ with $S_0=S,S_k=S'$, and such that $\ED(S_i,S_{i+1})=1$ for all $i=0,\ldots,k-1$. Note that in particular there exists such a set where $|S|-1 \le |S_i| \le |S'|$ for all $i$. By the properties of $f$ this implies:
\begin{align}
    |g(S)-g(S')| &\le \sum_{i=0}^{k-1} |g(S_i)-g(S_{i+1})|
    \\ &\le \sum_{i=0}^{k-1} \frac{f(\max(|S_i|,|S_{i+1}|))}{\max(|S_i|,|S_{i+1}|)} 
    \\ &\le \sum_{i=0}^{k-1} \frac{f(|S'|)}{|S|-1} 
    \\ &= k \frac{f(|S'|)}{|S|-1} 
    \\ &\le 3 k \frac{f(\max(|S|,|S'|))}{\max(|S|,|S'|)}
\end{align}
which equals $3 \EDR(S,S') f(\max(|S|,|S'|))$.
\end{proof}

\subsubsection{Gini impurity}
\begin{lemma}\label{lem:ED_gini_imp}
    If $\ED(S,S')\le 1$ then $|\giniImp(S)-\giniImp(S')|\le\frac{4}{\max(|S|,|S'|)}$.
\end{lemma}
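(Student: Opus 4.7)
The proof is a direct algebraic calculation; the case $\ED(S,S')=0$ is trivial, and the single-deletion case is symmetric to the single-insertion case, so I will focus on the latter. Assume $|S'|=n+1$ where $n=|S|\ge 1$, with $S'=S+(x,y^*)$ for some labeled example $(x,y^*)$. Write $n_y(S)$ for the number of occurrences of label $y$ in $S$, and rewrite the Gini impurity as
\[
    \giniImp(S) = 1 - \frac{Q(S)}{|S|^2}, \qquad Q(S) := \sum_y n_y(S)^2.
\]
The point of this rewriting is to isolate the effect of one insertion: only the term for $y^*$ changes, so $Q(S')=Q(S)+2n_{y^*}(S)+1$.

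The next step is to combine $\giniImp(S)-\giniImp(S')$ into a single fraction with denominator $n^2(n+1)^2$; a short computation gives numerator
\[
    n^2(2n_{y^*}(S)+1) - Q(S)(2n+1).
\]
Both terms in this difference are nonnegative, so its absolute value is at most the larger of them. Two elementary bounds suffice: $Q(S)\le n^2$ (since $\sum_y n_y(S)=n$ implies $\sum_y n_y(S)^2 \le n\cdot\max_y n_y(S)\le n^2$) and $n_{y^*}(S)\le n$. Both terms are thus bounded by $n^2(2n+1)$, hence
\[
    |\giniImp(S)-\giniImp(S')| \le \frac{2n+1}{(n+1)^2} \le \frac{2}{n+1} \le \frac{4}{\max(|S|,|S'|)},
\]
which is the claim (with room to spare).

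There is no real technical obstacle: the only design choice is picking the right way to expose the change. Writing $\giniImp$ as $1-Q/|S|^2$ and working with the unnormalized quantity $Q$ makes the effect of a single insertion algebraically transparent and avoids messy term-by-term manipulation of the frequencies $n_y/|S|$. The stated constant $4$ is loose by a factor of two; I read this as chosen for uniformity with the analogous bounds for entropy and variance that follow, rather than because a tighter argument is needed.
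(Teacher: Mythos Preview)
Your proof is correct. Both your argument and the paper's are direct one-step calculations, but the algebra is organized differently. The paper works with the frequencies $p_i(S)$, factors $p_i^2(S')-p_i^2(S)=(p_i(S')+p_i(S))(p_i(S')-p_i(S))$, bounds the first factor by $2$, and then computes $p_i(S')-p_i(S)$ explicitly for the inserted label versus the others, arriving at $4(1-p_1(S))/(n+1)$. You instead work with the unnormalized quantity $Q(S)=\sum_y n_y(S)^2$, combine everything over a common denominator, and bound the numerator in one stroke using $Q(S)\le n^2$ and $n_{y^*}(S)\le n$. Your route avoids the label-by-label case split and yields the sharper constant $2$; the paper's route makes the dependence on the inserted label's frequency $p_1(S)$ explicit, which is slightly more informative but not needed downstream. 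Either argument is entirely adequate here.
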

\begin{proof}
    The claim is trivial if $S=S'$, so assume $S'=S+s$. Without loss of generality we may assume $s=(x,y)$ where $y=1$. Clearly $\giniImp(S)-\giniImp(S') = \sum_{i=1}^k (p_i^2(S')-p_i^2(S))$, and since $p_i^2(S')-p_i^2(S)=(p_i(S')+p_i(S))(p_i(S')-p_i(S))$ and $(p_i(S')+p_i(S))\le 2$, then:
    \begin{align}
        |\giniImp(S)-\giniImp(S')| &= \left|p_1^2(S')-p_1^2(S) + \sum_{i=2}^k (p_i^2(S')-p_i^2(S))\right|
        \\ &\le 2|p_1(S')-p_1(S)| + 2\left|\sum_{i=2}^k (p_i(S')-p_i(S))\right|
    \end{align}
    Standard calculations give:
    \begin{align}
        p_i(S')-p_i(S) = \left\{
        \begin{array}{ll}
            \frac{1-p_i(S)}{n+1} & i=1 \\
            -\frac{p_i(S)}{n+1} & i \ge 2
        \end{array}
        \right.
    \end{align}
    Thus
    \begin{align}
        2|p_1(S')-p_1(S)| + 2\left|\sum_{i=2}^k (p_i(S')-p_i(S))\right|
        &= 2\frac{1-p_1(S)}{n+1} + 2 \sum_{i=2}^k\frac{p_i(S)}{n+1}
        \\ &= 4\frac{1-p_1(S)}{n+1}
    \end{align}
    which is at most $\frac{4}{n+1}=\frac{4}{\max(|S|,|S'|)}$, as claimed.
\end{proof}

\subsubsection{Entropy} 
\begin{claim}\label{claim:H_p}
$H(p)\le 3 p \log \frac{1}{p}$ for all $p \in [0,\nicefrac{1}{2}]$.
\end{claim}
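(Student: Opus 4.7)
The plan is to rewrite the binary entropy as $H(p) = p\log\frac{1}{p} + (1-p)\log\frac{1}{1-p}$ and argue that the second summand is already bounded by $2p\log\frac{1}{p}$; adding $p\log\frac{1}{p}$ to both sides then yields the claim. Concretely, it suffices to prove that for $p \in [0,1/2]$,
\begin{align*}
    (1-p)\log\tfrac{1}{1-p} \;\le\; 2p\log\tfrac{1}{p}.
\end{align*}

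I would prove this in two short steps. First, the function $\phi(p) := -\log(1-p)$ is convex on $[0,1)$ with $\phi(0) = 0$ and $\phi(1/2) = \log 2$, so the chord inequality gives $-\log(1-p) \le 2p\log 2$ throughout $[0,1/2]$. Multiplying by $(1-p) \le 1$ yields $(1-p)\log\frac{1}{1-p} \le 2p\log 2$. Second, for $p \in (0,1/2]$ we have $\log\frac{1}{p} \ge \log 2$, hence $2p\log 2 \le 2p\log\frac{1}{p}$. Chaining the two bounds gives the desired inequality; the degenerate case $p=0$ holds with both sides equal to $0$ by the usual $0\log 0 = 0$ convention.

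There is no real obstacle: the whole argument is a one-line convexity bound on $-\log(1-p)$ combined with the elementary observation that $\log(1/p) \ge \log 2$ when $p \le 1/2$. As a sanity check, the two key bounds are tight at $p = 1/2$ (where $-\log(1-p)=2p\log 2=\log 2$ and $\log(1/p)=\log 2$), so the constant $3$ in the statement is essentially the smallest compatible with this style of proof, consistent with $H(1/2)/[\,(1/2)\log 2\,] = 2$ and with the $\log(1/p)\to\infty$ behavior as $p\to 0$ that makes the inequality loose away from $p=1/2$. An alternative route would be to set $f(p) = 3p\log\frac{1}{p} - H(p)$, verify $f(0) = 0$ and $f(1/2) > 0$, and use $f''(p) = -\tfrac{2}{p} + \tfrac{1}{1-p} < 0$ on $(0,2/3)$ to conclude that $f'$ has a unique zero on $(0,1/2)$, so $f$ is unimodal on $[0,1/2]$ and non-negative there; but the convexity-of-chord argument above is cleaner and I would present that one.
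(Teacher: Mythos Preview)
Your proof is correct. The convexity-of-chord argument for $-\log(1-p)\le 2p\log 2$ on $[0,\nicefrac{1}{2}]$ is valid ($\phi''(p)=(1-p)^{-2}>0$, so the graph lies below the secant through $(0,0)$ and $(\nicefrac{1}{2},\log 2)$), and combining it with $\log\frac{1}{p}\ge\log 2$ and $(1-p)\le 1$ yields $(1-p)\log\frac{1}{1-p}\le 2p\log\frac{1}{p}$ as claimed.

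The paper takes a different route: it invokes a known sharp bound on the binary entropy, $H(p)\le \frac{\log p\,\log(1-p)}{\log 2}$ due to Tops{\o}e, and then simplifies via $\log(1+x)\le x$ and $\frac{1}{(1-p)\log 2}\le 3$ for $p\le\nicefrac{1}{2}$. Your argument is more self-contained---it needs no external reference and only the convexity of $-\log(1-p)$---whereas the paper's route leans on a ready-made inequality but is one line once that inequality is granted. Both give exactly the constant $3$, and in both cases the bottleneck is at $p=\nicefrac{1}{2}$. Incidentally, your alternative route can be shortened: since $f''<0$ on $(0,\nicefrac{2}{3})$, $f$ is concave on $[0,\nicefrac{1}{2}]$, hence lies above the chord through $(0,f(0))=(0,0)$ and $(\nicefrac{1}{2},f(\nicefrac{1}{2}))$, which is already nonnegative---no need to locate the critical point of $f'$.
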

\begin{proof}
\cite[Theorem 1.1]{Topsoe2001} and easy manipulations yield:
\begin{align}
    H(p) \le \frac{\log p \, \log(1-p)}{\log 2} = \frac{\log \frac{1}{p} \log\left(1+\frac{p}{1-p}\right)}{\log 2} \le \frac{\log \frac{1}{p}}{\log 2} \frac{p}{1-p} \le 3 p \log \frac{1}{p}
\end{align}
\end{proof}

\begin{claim}\label{claim:H_tvd}
For any two random variables $X,X'$ defined on the same space of events and taking on at most $n$ distinct values:
\begin{align}
    |H(X) - H(X')| \le \tvd{X}{X'} \lg_2(n-1) + H(\tvd{X}{X'})
\end{align}
\end{claim}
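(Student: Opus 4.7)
The plan is to combine a maximal coupling of $X$ and $X'$ with a Fano-type inequality, exploiting the symmetry of joint entropy to convert bounds on conditional entropies into a bound on the absolute difference of marginal entropies.

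\textbf{Step 1 (maximal coupling).} Let $\delta = \tvd{X}{X'}$. By the coupling characterization of total variation distance, there exists a joint distribution $(Y, Y')$ on the common event space whose marginals agree with those of $X$ and $X'$ and that satisfies $\Pr[Y \ne Y'] = \delta$. Since Shannon entropy depends only on the marginal distribution, $H(X) = H(Y)$ and $H(X') = H(Y')$, so it suffices to bound $|H(Y) - H(Y')|$.

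\textbf{Step 2 (symmetric chain rule).} Expanding the joint entropy in both orders,
\begin{align}
H(Y, Y') \;=\; H(Y) + H(Y' \mid Y) \;=\; H(Y') + H(Y \mid Y'),
\end{align}
which yields $H(Y) - H(Y') = H(Y \mid Y') - H(Y' \mid Y)$. Since conditional entropies are nonnegative, this gives
\begin{align}
|H(Y) - H(Y')| \;\le\; \max\bigl(H(Y \mid Y'),\; H(Y' \mid Y)\bigr).
\end{align}

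\textbf{Step 3 (Fano-type estimate).} Let $E = \Ind[Y \ne Y']$, so $\Pr[E = 1] = \delta$. Because $E$ is a deterministic function of $(Y, Y')$,
\begin{align}
H(Y \mid Y') \;=\; H(Y, E \mid Y') \;=\; H(E \mid Y') + H(Y \mid Y', E).
\end{align}
The first summand is at most the unconditional entropy $H(E) = H(\delta)$. For the second, fix $Y' = y'$: when $E = 0$ we have $Y = y'$, contributing zero entropy; when $E = 1$, the random variable $Y$ ranges over at most $n - 1$ values (its support with $y'$ excluded), so $H(Y \mid Y' = y', E = 1) \le \log_2(n - 1)$. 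Averaging over $y'$ gives $H(Y \mid Y', E) \le \delta \log_2(n - 1)$, and combining the two terms yields $H(Y \mid Y') \le H(\delta) + \delta \log_2(n - 1)$. A symmetric argument gives the same bound on $H(Y' \mid Y)$. Plugging back into Step 2 proves the claim.

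\textbf{Main obstacle.} The conceptual step is the identity in Step 2: bounding $|H(Y) - H(Y')|$ via a \emph{difference} of conditional entropies, rather than the weaker triangle-style bound via the individual marginals, is what allows the Fano estimate to kick in tightly. The remaining subtlety is purely bookkeeping: ensuring that the ``$n-1$ values'' bound in Step 3 holds uniformly over $y'$, which follows because $Y$ and $Y'$ share a support of at most $n$ elements so excluding any $y'$ always leaves at most $n-1$ options. One checks that the resulting bound is tight, e.g., for $X$ uniform on $\{1,\ldots,n\}$ and $X' \equiv 1$ deterministic.
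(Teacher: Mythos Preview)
Your argument is correct and complete. The paper, however, does not prove this claim at all: it simply invokes the Fannes--Audenaert inequality for density matrices and observes that the classical statement is the diagonal case. Your route is therefore genuinely different and more self-contained: you avoid the detour through quantum information theory by using a maximal coupling to realise the total variation distance as a disagreement probability, then apply the chain-rule identity $H(Y)-H(Y')=H(Y\mid Y')-H(Y'\mid Y)$ together with Fano's inequality. This is exactly the elementary argument that recovers the sharp $\log_2(n-1)$ constant (and your tightness example confirms it). The only point that deserves a word of care is the one you already flagged: the ``at most $n-1$ remaining values'' step in Fano requires that $X$ and $X'$ take values in a common alphabet of size at most $n$, not merely that each individually has support of size at most $n$; this is how the claim is used downstream in the paper, so your reading is the intended one.
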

\begin{proof}
This is a special case of the Fannes–Audenaert inequality~\cite{Audenaert_2007} for diagonal matrices.
\end{proof}

\begin{lemma}~\label{lem:ED_entropy}
If $\ED(S,S') \le 1$ then $|H(S)-H(S')| < \frac{5\log n}{n}$ where $n=\max(|S|,|S'|)$.
\end{lemma}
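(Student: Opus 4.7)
The plan is to leverage the Fannes--Audenaert inequality (Claim~\ref{claim:H_tvd}) together with the bound on binary entropy (Claim~\ref{claim:H_p}) from above. The idea is that editing $S$ by one example changes the empirical label distribution by at most $1/n$ in total variation, and the Fannes--Audenaert inequality converts this TV change into an entropy change of order $\frac{\log n}{n}$.

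First I would reduce to the nontrivial case: if $S=S'$ the bound is immediate, so assume (without loss of generality) $|S'|=n$ and $S'=S+s$ with $s=(x,y)$; in particular $|S|=n-1$. Next I would compute the total variation distance $\delta$ between the empirical label distributions of $S$ and $S'$. Repeating the standard calculation already used inside the proof of Lemma~\ref{lem:ED_gini_imp}, one gets $p_y(S')-p_y(S)=\frac{1-p_y(S)}{n}$ and $p_i(S)-p_i(S')=\frac{p_i(S)}{n}$ for $i\neq y$, which after summing absolute values and dividing by two yields $\delta=\frac{1-p_y(S)}{n}\le\frac{1}{n}$.

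Now I would apply Claim~\ref{claim:H_tvd} to the two label distributions, letting $k$ denote the number of distinct labels appearing in $S\cup S'$. Since each label is carried by at least one example of $S'$ we have $k\le n$, so the claim gives
\begin{align}
|H(S)-H(S')|\;\le\;\delta\,\log_2(k-1)+H(\delta)\;\le\;\frac{\log_2(n-1)}{n}+H(1/n).
\end{align}
For $n\ge 2$ we have $1/n\in[0,\nicefrac12]$, so Claim~\ref{claim:H_p} yields $H(1/n)\le 3\cdot\frac{1}{n}\log n=\frac{3\log n}{n}$. Combining,
\begin{align}
|H(S)-H(S')|\;\le\;\frac{\log n}{n}+\frac{3\log n}{n}\;=\;\frac{4\log n}{n}\;<\;\frac{5\log n}{n},
\end{align}
as required. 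The case $n=1$ is degenerate ($H(S)=H(S')=0$) and would be handled separately as a boundary remark.

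There is no real obstacle here; the only subtlety is lining up notation cleanly (using $k$ for the alphabet size so it does not clash with $n=\max(|S|,|S'|)$ in Claim~\ref{claim:H_tvd}) and verifying that $1/n\le \nicefrac12$ so that Claim~\ref{claim:H_p} is applicable. The slack between $\frac{4\log n}{n}$ and the stated $\frac{5\log n}{n}$ comfortably absorbs the $n=1,2$ corner cases and the use of $\log$ vs $\log_2$ in the two auxiliary claims.
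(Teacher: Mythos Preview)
Your proposal is correct and follows essentially the same route as the paper: reduce to $S'=S+s$, bound the total variation between the empirical label distributions by $\tfrac{1}{n}$, apply the Fannes--Audenaert inequality (Claim~\ref{claim:H_tvd}), and then bound $H(1/n)$ via Claim~\ref{claim:H_p}. If anything, your version is slightly more careful, since you compute $\delta=(1-p_y(S))/n\le 1/n$ explicitly and flag the $\log$ vs.\ $\log_2$ discrepancy that the slack in the constant $5$ absorbs.
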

\begin{proof}
    The claim is trivial if $S=S'$, hence assume $S'=S+s$, and let $(X,Y)$ be a uniform random element of $S$ and $(X',Y')$ a uniform random element of $S'$. Clearly $\tvd{Y}{Y'} = \frac{1}{n}$, hence by the definition of $H(S),H(S')$, and by Claim~\ref{claim:H_tvd} and Claim~\ref{claim:H_p}, and since $n\ge 2$ and so $\frac{1}{n}\le \frac{1}{2}$:
    \begin{align}
        |H(S) - H(S')| &\le \frac{1}{n} \lg_2(n-1) + H\left(\frac{1}{n}\right)
        < \frac{2}{n} \log n + \frac{3}{n} \log n
    \end{align}
    concluding the proof.
\end{proof}

\subsubsection{Variance}
Let $c = \sum_{y \in \scY} \frac{c}{2}$ for some $c \in \R_{\ge 0}$. Clearly this implies $\Var(S) \le c^2$ for all $S$.
\begin{lemma}\label{lem:ED_variance}
    If $\ED(S,S')\le 1$ then $|\Var(S)-\Var(S')|\le\frac{3 c^2}{\max(|S|,|S'|)}$.
\end{lemma}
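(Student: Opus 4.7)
The plan is to mirror the proofs of Lemma~\ref{lem:ED_gini_imp} and Lemma~\ref{lem:ED_entropy}: reduce to the case of a unit insertion and then exploit a closed-form recurrence for how the variance changes when one new label is added. The trivial case $S=S'$ gives nothing, so WLOG $S'=S+(x,y)$ with $|S|=n$ and $|S'|=n+1$. Writing $\bar S$ and $\bar{S'}$ for the label means, the standard identity $\bar{S'}=\bar S+\frac{y-\bar S}{n+1}$ combined with the direct expansion of $\frac{1}{n+1}\sum_{(x',y')\in S'}(y'-\bar{S'})^2$ (splitting the sum over $S$ and the new point, and using $y-\bar{S'}=\frac{n}{n+1}(y-\bar S)$) gives the closed form
\begin{align}
    \Var(S') \;=\; \frac{n}{n+1}\Var(S)+\frac{n(y-\bar S)^2}{(n+1)^2}.
\end{align}

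The key observation is then that the two summands of
\begin{align}
    \Var(S')-\Var(S) \;=\; -\frac{\Var(S)}{n+1}+\frac{n(y-\bar S)^2}{(n+1)^2}
\end{align}
have opposite signs, so the absolute change is at most the larger of the two magnitudes, rather than their sum. The first magnitude is bounded by $\Var(S)/(n+1)\le c^2/(n+1)$ directly from the hypothesis $\Var(S)\le c^2$. For the second magnitude, the convention placing $c$ as a bound on the label range implies $(y-\bar S)^2$ is a small multiple of $c^2$, so $\frac{n(y-\bar S)^2}{(n+1)^2}\le \frac{3c^2}{n+1}$. Taking the max over the two terms and using $\max(|S|,|S'|)=n+1$ yields the claim. (Alternatively, the same unit-edit bound can be fed into a Lemma~\ref{lem:EDR_g}-style chain argument with $f(n)\equiv c^2$, though here we need only the single-step version.)

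The main obstacle is tracking constants. The negative term is clean, but pinning the positive term to $3c^2/(n+1)$ depends on how tightly the convention on $c$ controls $|y-\bar S|$: a crude bound $|y-\bar S|\le 2c$ would give a worse leading constant, and to reach $3$ one should combine the trivial bound on $(y-\bar S)^2$ with the moment identity $\bar S^2=\E[y^2]-\Var(S)\le c^2-\Var(S)$, which forces the two summands in the variance change to be simultaneously small. Once that tradeoff is made explicit, the rest is algebra.
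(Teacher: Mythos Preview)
Your route is genuinely different from the paper's. The paper works directly from its pairwise-sum definition $\Var(S)=\frac{1}{|S|^2}\sum_{s,s'}(y-y')^2$, derives
\[
\Var(S+s)=\Var(S)\frac{(n-1)^2}{n^2}+\frac{1}{n^2}\sum_{(x,y)\in S}(y-y_s)^2
\]
with $n=|S'|$, and then bounds $|\Var(S+s)-\Var(S)|$ by the \emph{sum} of the two magnitudes via the triangle inequality, using $\Var(S)\le c^2$ and $(y-y_s)^2\le c^2$ termwise to land at $\frac{2c^2}{n}+\frac{c^2}{n}=\frac{3c^2}{n}$. Your mean-based recurrence is equivalent (expand $\sum_{y'\in S}(y'-y_s)^2=(n-1)\Var(S)+(n-1)(y_s-\bar S)^2$ to see they match), but your opposite-signs observation is a real improvement: taking the max rather than the sum is sharper.

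Where you go astray is the last paragraph. The paper's convention, visible in its own proof, is that $c$ bounds the \emph{diameter} of the label set, so $(y-y')^2\le c^2$ for any two labels. Since $\bar S$ is a convex combination of labels in $S$, this immediately gives $|y-\bar S|\le c$ and hence $(y-\bar S)^2\le c^2$, with no tradeoff needed. Your second term is then at most $\frac{n c^2}{(n+1)^2}<\frac{c^2}{n+1}$, so your argument actually delivers $|\Var(S)-\Var(S')|\le \frac{c^2}{\max(|S|,|S'|)}$, strictly better than the paper's $3c^2$. The moment-identity tradeoff you propose is unnecessary under this reading; and under the alternative reading $c=\sup|y|$ that you seem to have in mind, it does not rescue the constant $3$ either: take $\Var(S)=0$ with all labels at $c$ and the new label $y=-c$, so $(y-\bar S)^2=4c^2$ and the second term is $\frac{4nc^2}{(n+1)^2}$, which exceeds $\frac{3c^2}{n+1}$ for all $n\ge 4$. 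So drop the tradeoff discussion and simply invoke $|y-\bar S|\le c$.
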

\begin{proof}
The claim is trivial if $S=S'$, so assume $S'=S+s$ for some $s=(x_s,y_s)$ and let $n=|S'|$. Standard calculations show that:
\begin{align}
    \Var(S+s) &= \Var(S) \frac{(n-1)^2}{n^2} + \frac{1}{n^2} \sum_{(x,y) \in S}(y-y_s)^2
    \\
    &= \Var(S)  -\frac{2n-1}{n^2}\Var(S) + \frac{1}{n^2} \sum_{(x,y) \in S}(y-y_s)^2
\end{align}
Thus:
\begin{align}
    |\Var(S+s)-\Var(S)| &\le \frac{2n-1}{n^2}\Var(S) + \frac{1}{n^2} \sum_{(x,y) \in S}(y-y_s)^2
    \\ &\le \frac{2}{n} c^2 + \frac{1}{n^2} |S| c^2
\end{align}
which is at most $\frac{3}{n}c^2$, concluding the proof.
\end{proof}

\subsection{Smoothness results}
\begin{lemma}\label{lem:ED_G}
Let $G$ be a conditional $g$-gain where $g$ satisfies the hypotheses of Lemma~\ref{lem:EDR_g}. Then for all $S,S'$ with $\ED(S,S') \le 1$:
\begin{align}
    |G(S,\spl)-G(S',\spl)| \le 4 \frac{f(n)}{n}
\end{align}
where $n=\max(|S|,|S'|)$ and $f$ is as in Lemma~\ref{lem:EDR_g}.
\end{lemma}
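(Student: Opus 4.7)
The plan is to handle the trivial case $S=S'$ separately, then reduce to $|S'|=|S|+1=n$ with $S'=S+s$; by symmetry this costs no generality. Say the new example $s$ goes to side $0$ of the split, so $S'_0=S_0+s$ and $S'_1=S_1$. Let $p_i=|S_i|/|S|$ and $p'_i=|S'_i|/|S'|$. Expand
\begin{align}
G(S,\spl)-G(S',\spl) = (g(S)-g(S')) + (p'_0 g(S'_0) - p_0 g(S_0)) + (p'_1 g(S'_1) - p_1 g(S_1)).
\end{align}
The first term is bounded by $f(n)/n$ directly from the hypothesis of Lemma~\ref{lem:EDR_g}, so it remains to control the two weighted differences.

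For the index-$1$ term, the set itself is unchanged, so only the weights matter:
\begin{align}
|p'_1 - p_1| = \left|\frac{|S_1|}{|S|+1}-\frac{|S_1|}{|S|}\right| = \frac{|S_1|}{|S|(|S|+1)} \le \frac{1}{|S|+1} = \frac{1}{n},
\end{align}
and since $g(S_1)\le f(|S_1|)\le f(n)$, this contributes at most $f(n)/n$. For the index-$0$ term, add and subtract $p'_0 g(S_0)$:
\begin{align}
p'_0 g(S'_0) - p_0 g(S_0) = p'_0\bigl(g(S'_0)-g(S_0)\bigr) + (p'_0-p_0)g(S_0).
\end{align}
By the hypothesis of Lemma~\ref{lem:EDR_g} applied to the singleton-difference pair $(S_0,S'_0)$, $|g(S'_0)-g(S_0)|\le f(|S'_0|)/|S'_0|\le f(n)/|S'_0|$, so multiplying by $p'_0=|S'_0|/n$ gives a contribution of at most $f(n)/n$. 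The weight change $|p'_0-p_0|$ equals $|S_1|/(|S|(|S|+1))\le 1/n$ by the same calculation as above, and $g(S_0)\le f(n)$, contributing another $f(n)/n$.

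Summing the four pieces yields $|G(S,\spl)-G(S',\spl)| \le 4 f(n)/n$. The main obstacle is purely bookkeeping: being careful that the add-and-subtract is performed on the side that actually changed (so that the $g$-difference is bounded by the Lipschitz-like hypothesis in terms of $|S'_0|$, which then cancels against $p'_0$) while on the unchanged side only the weight perturbation matters; once that split is made, each of the four terms is individually at most $f(n)/n$, giving the claimed constant $4$.
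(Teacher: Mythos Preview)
Your proof is correct and follows essentially the same approach as the paper: reduce to $S'=S+s$, expand $G(S,\spl)-G(S',\spl)$ into the $g(S)-g(S')$ term plus the two weighted pieces, and bound each by $f(n)/n$. The only cosmetic difference is that on the changed side you add and subtract $p'_0 g(S_0)$ to separate the set-perturbation from the weight-perturbation, whereas the paper bounds that side via the inequality $\tfrac{|S_1'|}{n}\ge\tfrac{|S_1|}{n-1}$ and a direct estimate; both routes yield a contribution of $2f(n)/n$ on the changed side and $f(n)/n$ on each of the other two pieces.
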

\begin{proof}
Let $(S_0,S_1)=\spl(S)$ and $(S_0',S_1')=\spl(S')$.
Without loss of generality let $S'=S+s$ and $S_0'=S_0$, and let $n=|S'|=\max(|S|,|S'|)$. Standard calculations show that:
\begin{align}
    |G(S,\spl)-G(S',\spl)| &\le |g(S)-g(S')| + \frac{g(S_0) |S_0|}{n(n-1)} + \left|g(S_1')\frac{|S_1'|}{n} - g(S_1)\frac{|S_1|}{n-1} \right|
\end{align}
By the hypotheses of Lemma~\ref{lem:EDR_g} the first term is bounded by $\frac{f(n)}{n}$ and the second term is bounded by $\frac{f(|S_0|)|S_0|}{n(n-1)} \le \frac{f(|S_0|)}{n} \le \frac{f(n)}{n}$ too.
For the third term, since $\frac{|S_1'|}{n} > \frac{|S_1|}{n-1}$ and again by the hypotheses of Lemma~\ref{lem:EDR_g},
\begin{align}
    \left|g(S_1')\frac{|S_1'|}{n} - g(S_1)\frac{|S_1|}{n-1} \right| &\le 
    \left(g(S_1)+\frac{f(|S_1'|)}{|S_1'|}\right)\frac{|S_1'|}{n} - g(S_1)\frac{|S_1|}{n-1}
    \\ & = g(S_1)\left(\frac{|S_1'|}{n}-\frac{|S_1|}{n-1}\right) + \frac{f(|S_1'|)}{n}
    \\ &= g(S_1)\frac{n-|S_1'|}{n(n-1)} + \frac{f(|S_1'|)}{n}
    \\ &\le f(|S_1|)\frac{n-|S_1'|}{n(n-1)} + \frac{f(|S_1'|)}{n}
\end{align}
which, since $f$ is nondecreasing and $|S_1'| \ge 1$, is bounded from above by $2 \frac{f(n)}{n}$.
We conclude that $|G(S,\spl)-G(S',\spl)| \le 4 \frac{f(n)}{n}$, as claimed.
\end{proof}

\begin{lemma}\label{lem:smooth_G}
Let $G$ be a conditional $g$-gain where $g$ satisfies the hypotheses of Lemma~\ref{lem:EDR_g}. Then for all $S,S'$:
\begin{align}
    |G(S,\spl)-G(S',\spl)| \le 12 \EDR(S,S') f(\max(|S|,|S'|))
\end{align}
with $f$ is as in Lemma~\ref{lem:EDR_g}.
\end{lemma}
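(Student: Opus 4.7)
\medskip
\noindent\textbf{Proof plan for Lemma~\ref{lem:smooth_G}.} The plan is to mimic the two-case structure of Lemma~\ref{lem:EDR_g}, but applied at the level of gains by invoking the single-edit bound from Lemma~\ref{lem:ED_G}. Without loss of generality assume $|S'|\ge|S|$, and set $n=\max(|S|,|S'|)=|S'|$ and $k=\ED(S,S')$; if $k=0$ the bound is trivial, so assume $k\ge 1$.

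First I would dispose of the ``large edit distance'' regime $\EDR(S,S')\ge\tfrac{1}{3}$ by a direct crude bound. Since $G$ is a conditional $g$-gain with $g\ge 0$ and $g(\,\cdot\,)\le f(|\cdot|)$, and $f$ is nondecreasing, one has $0\le G(T,\spl)\le g(T)\le f(|T|)\le f(n)$ for any $T\in\{S,S'\}$, hence $|G(S,\spl)-G(S',\spl)|\le f(n)\le 12\,\EDR(S,S')\,f(n)$, where the last inequality uses $12\EDR\ge 4\ge 1$.

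Next I would handle the ``small edit distance'' regime $\EDR(S,S')<\tfrac{1}{3}$ by chaining Lemma~\ref{lem:ED_G}. As in the proof of Lemma~\ref{lem:EDR_g}, pick a sequence $S=S_0,S_1,\ldots,S_k=S'$ with $\ED(S_i,S_{i+1})=1$ and $|S|-1\le|S_i|\le|S'|$ for every $i$; such a chain exists because one may first perform all the insertions needed to reach $S\cap S'$ or first the deletions, staying within the size window $[|S|-1,|S'|]$ throughout. For each $i$ Lemma~\ref{lem:ED_G} gives $|G(S_i,\spl)-G(S_{i+1},\spl)|\le 4\,f(n_i)/n_i$ where $n_i=\max(|S_i|,|S_{i+1}|)\in[|S|-1,|S'|]$. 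Using that $f$ is nondecreasing and summing with the triangle inequality yields
\begin{align}
    |G(S,\spl)-G(S',\spl)| \;\le\; \sum_{i=0}^{k-1}\frac{4\,f(n_i)}{n_i} \;\le\; \frac{4k\,f(n)}{|S|-1}.
\end{align}
Finally I would convert $|S|-1$ into a constant fraction of $n$: the hypothesis $\EDR(S,S')<\tfrac{1}{3}$ combined with $|S'|-|S|\le k$ gives $|S|>\tfrac{2}{3}n$, so $|S|-1\ge\tfrac{n}{3}$ for all $n\ge 3$ (the tiny cases $n\in\{1,2\}$ either force $k=0$ or are already covered by Case~1 since $k/n\ge\tfrac{1}{2}$). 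Thus the chain bound becomes $\le 12\,k\,f(n)/n=12\,\EDR(S,S')\,f(n)$, finishing the proof.

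The only delicate step is the chain-sum bookkeeping: verifying that one can route the sequence of single-element edits so that every intermediate $|S_i|$ remains in $[|S|-1,|S'|]$, and then checking that in the small-$\EDR$ regime this lower endpoint is $\Omega(n)$ (which is precisely why the $1/3$ threshold is chosen). Everything else is a direct application of Lemma~\ref{lem:ED_G} plus the triangle inequality, exactly paralleling Lemma~\ref{lem:EDR_g}.
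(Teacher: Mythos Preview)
Your argument is correct and morally identical to the paper's, but you have unrolled a step that the paper keeps black-boxed. The paper simply observes that the function $G(\cdot)=G(\cdot,\spl)$ itself satisfies the hypotheses of Lemma~\ref{lem:EDR_g}: the upper bound $G(S)\le g(S)\le f(|S|)$ is immediate from the form of a conditional $g$-gain, and the single-edit bound $|G(S)-G(S')|\le 4f(n)/n$ is exactly Lemma~\ref{lem:ED_G}. Hence $G$ plays the role of $g$ with $f_G=4f$, and one invocation of Lemma~\ref{lem:EDR_g} gives $|G(S)-G(S')|\le 3\,\EDR(S,S')\cdot 4f(\max(|S|,|S'|))=12\,\EDR(S,S')f(\max(|S|,|S'|))$ directly. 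Your two-case argument and chain-sum bookkeeping are precisely the proof of Lemma~\ref{lem:EDR_g} replayed for $G$, so nothing is lost or gained mathematically; the paper's version is just shorter because it reuses the lemma rather than reproving it.
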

\begin{proof}
Fix $\spl$ and let $G(\cdot)=G(\cdot,\spl)$. By the form of $G$, we have that:
\begin{align}
    G(S) \le g(S) \le f(|S|)
\end{align}
Moreover by Lemma~\ref{lem:ED_G} we have that for all $S,S'$ with $\ED(S,S')\le 1$:
\begin{align}
    |G(S)-G(S')| \le 4 \frac{f(n)}{n}
\end{align}
where $n=\max(|S|,|S'|)$. Therefore $g_G = G$ satisfies the hypotheses of Lemma~\ref{lem:EDR_g} with $f_G = 4 f$, yielding:
\begin{align}
    |g_G(S)-g_G(S')| \le 3 \EDR(S,S') f_G(\max(|S|,|S'))
\end{align}
that is, $|G(S)-G(S')| \le 12 \EDR(S,S') f(\max(|S|,|S'))$.
\end{proof}

\begin{theorem}\label{thm:smooth_gains}
Let $\spl$ be any split rule. For all $S,S' \in (\scX \times \scY)^*$:
\begin{align}
    |\gain(S,\spl)-\gain(S',\spl)| \le \left\{
    \begin{array}{ll}
        \EDR(S,S') \cdot 48 & G=\giniGain \\
        \EDR(S,S') \cdot 60 \, \log \max(|S|,|S'|) ~~~& G=\IG \\
        \EDR(S,S') \cdot 36\, c^2 & G=\VarGain
    \end{array}
    \right.
\end{align}
where $c=\sup_{y \in \scY} |y|/2$.
\end{theorem}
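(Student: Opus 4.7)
The plan is to observe that the theorem is now a direct corollary of Lemma~\ref{lem:smooth_G}: for each of the three gains, I only need to choose a function $f$ for which the underlying $g \in \{\giniImp, H, \Var\}$ satisfies the two hypotheses of Lemma~\ref{lem:EDR_g}, and then read off the final constant as $12$ times $f(\max(|S|,|S'|))$. The single-element stability bounds already proved in Section~\ref{sec:smooth} (Lemma~\ref{lem:ED_gini_imp}, Lemma~\ref{lem:ED_entropy}, Lemma~\ref{lem:ED_variance}) each have exactly the shape $f(n)/n$ required by the second hypothesis of Lemma~\ref{lem:EDR_g}, so the only task is to pick an $f$ that simultaneously dominates $g$ itself and absorbs the constants appearing in those lemmas.

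Concretely, for $G = \giniGain$ I would take $f(n) = 4$: since $\giniImp(S) \le 1 \le 4$ and Lemma~\ref{lem:ED_gini_imp} gives $|\giniImp(S) - \giniImp(S')| \le 4/\max(|S|,|S'|)$, both hypotheses of Lemma~\ref{lem:EDR_g} are met, and Lemma~\ref{lem:smooth_G} yields $|\giniGain(S,\spl) - \giniGain(S',\spl)| \le 12 \cdot 4 \cdot \EDR(S,S') = 48 \, \EDR(S,S')$. For $G = \IG$ I would take $f(n) = 5 \log n$, which is nondecreasing and dominates $H(S) \le \log |S|$, and matches the $5\log n / n$ bound of Lemma~\ref{lem:ED_entropy}; Lemma~\ref{lem:smooth_G} then gives the factor $12 \cdot 5 = 60$ with the $\log \max(|S|,|S'|)$ coming from evaluating $f$ at $\max(|S|,|S'|)$. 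For $G = \VarGain$ I would take $f(n) = 3c^2$, which dominates $\Var(S) \le c^2$ and matches the $3c^2/n$ bound of Lemma~\ref{lem:ED_variance}; Lemma~\ref{lem:smooth_G} then yields the constant $12 \cdot 3 \, c^2 = 36\, c^2$.

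Since the heavy lifting (the telescoping from unit edits to arbitrary edits, and the reduction from $g$-smoothness to conditional-$g$-gain smoothness) has already been carried out in Lemma~\ref{lem:EDR_g}, Lemma~\ref{lem:ED_G}, and Lemma~\ref{lem:smooth_G}, the proof reduces to three short instantiations with no further obstacle beyond bookkeeping of constants. The only subtlety worth flagging is a couple of degenerate corner cases — namely $|S|=1$ for the entropy bound, where $f(1) = 0$ but both $H(S)$ and the relevant edit differences are $0$ as well — which are trivially consistent with the hypotheses; beyond that, the three cases of the theorem each fit on a single line.
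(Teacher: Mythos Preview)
Your proposal is correct and is essentially identical to the paper's own proof: the paper also derives each case by invoking Lemma~\ref{lem:smooth_G} with exactly the choices $f=4$ for $\giniGain$, $f(n)=5\log n$ for $\IG$, and $f(n)=3c^2$ for $\VarGain$, citing Lemmas~\ref{lem:ED_gini_imp}, \ref{lem:ED_entropy}, and \ref{lem:ED_variance} respectively.
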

\begin{proof}
By Lemma~\ref{lem:ED_gini_imp}, $G=\giniGain$ satisfies the hypotheses of Lemma~\ref{lem:smooth_G} with $g=\giniImp$ and $f=4$, hence $|\giniGain(S,\spl)-\giniGain(S',\spl)| \le \EDR(S,S') \cdot 48$.
By Lemma~\ref{lem:ED_entropy}, $G=\IG$ satisfies the hypotheses of Lemma~\ref{lem:smooth_G} with $g=H$ and $f(n) = 5 \log n$, hence $|\IG(S,\spl)-\IG(S',\spl)| \le \EDR(S,S') \cdot 60 \log \max(|S|,|S'|)$.
By Lemma~\ref{lem:ED_variance}, $G=\VarGain$ satisfies the hypotheses of Lemma~\ref{lem:smooth_G} with $g=\Var$ and $f(n) = 3c^2$, hence $|\VarGain(S,\spl)-\VarGain(S',\spl)| \le \EDR(S,S') \cdot 36 c^2$.
\end{proof}

\subsection{Approximation of maximum gain}
The next result says that, for maximum-gain decision rules, an $\epsilon$-approximate guarantees split rules whose gain is close to the maximum possible.
\begin{lemma}\label{lem:gain_approx}
Let $\splfun$ be a max-$\gain$ decision rule with $G\in\{\giniGain,\IG,\VarGain\}$. If a decision tree $T$ is $\epsilon$-approximate w.r.t.\ $(S,\splfun)$, then every internal vertex $v \in V(T)$ satisfies:
    \begin{align}
        G(S(T,v),\spl_v) \ge G(S(T,v),\spl^*_v) -
        \left\{
        \begin{array}{ll}
         \epsilon \cdot 96 & \text{if }G=\giniGain
        \\ \epsilon \cdot 120 \, \log |S(T,v)| & \text{if }G=\IG
        \\ \epsilon \cdot 72 c^2  & \text{if }G=\VarGain
        \end{array}
        \right.
    \end{align}
    where $\spl^*_v = \splfun(S(T,v))$.
\end{lemma}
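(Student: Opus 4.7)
The plan is to exploit the $\epsilon$-approximation to pin down a certifying set $S_v$ close to $S(T,v)$ in relative edit distance on which $\splfun$ returns $\spl_v$, to use the max-$G$ property of $\splfun$ to get a gain inequality on $S_v$, and then to transport gains between $S_v$ and $S(T,v)$ using the smoothness estimates of Theorem~\ref{thm:smooth_gains}. The whole argument should be a chain of three one-line inequalities.

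First I would fix an internal vertex $v$ and invoke Definition~\ref{def:approx_tree} to pick any $S_v \in (\scX \times \scY)^*$ with $\EDR(S(T,v),S_v) \le \epsilon$ and $\splfun(S_v) = \spl_v$. Since $v$ is internal, $\spl_v \in \Spl$, and because $\splfun$ is a max-$G$ rule this forces $\spl_v \in \arg\max_{\spl \in \Spl} G(S_v,\spl)$. In particular, taking $\spl_v^* \in \arg\max_{\spl \in \Spl} G(S(T,v),\spl)$ as in the statement of the lemma,
\[
G(S_v,\spl_v) \;\ge\; G(S_v,\spl_v^*).
\]

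Next, I would apply Theorem~\ref{thm:smooth_gains} twice, to the fixed split rules $\spl_v$ and $\spl_v^*$, obtaining $|G(S(T,v),\spl) - G(S_v,\spl)| \le \epsilon K_G$ with $K_G = 48$ for $\giniGain$, $K_G = 60 \log \max(|S(T,v)|,|S_v|)$ for $\IG$, and $K_G = 36 c^2$ for $\VarGain$. Chaining these with the inequality from the previous step gives
\[
G(S(T,v),\spl_v) \;\ge\; G(S_v,\spl_v) - \epsilon K_G \;\ge\; G(S_v,\spl_v^*) - \epsilon K_G \;\ge\; G(S(T,v),\spl_v^*) - 2\epsilon K_G,
\]
which produces the stated constants $96$, $120 \log |S(T,v)|$, and $72 c^2$ respectively.

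The only mildly delicate bookkeeping I expect is the Info gain case, where Theorem~\ref{thm:smooth_gains} naturally gives $\log \max(|S(T,v)|,|S_v|)$ but the lemma asks for $\log |S(T,v)|$. This is handled by observing that $\EDR(S(T,v),S_v) \le \epsilon$ with $\epsilon$ bounded away from $1$ forces $|S_v| \le |S(T,v)|/(1-\epsilon)$ (in the nontrivial case $|S_v| \ge |S(T,v)|$), so $\log |S_v|$ exceeds $\log |S(T,v)|$ only by a small additive term that can be absorbed into the constant (or one interprets the lemma as quoting $\log \max$). This is the only place where the constants need to be tracked with any care; the rest is essentially automatic once the right $S_v$ is fixed.
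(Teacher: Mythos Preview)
Your proposal is correct and matches the paper's proof essentially line for line: fix $S_v$ from Definition~\ref{def:approx_tree}, use the max-$G$ property to get $G(S_v,\spl_v)\ge G(S_v,\spl_v^*)$, and sandwich with two applications of Theorem~\ref{thm:smooth_gains}. Your observation about the $\log\max(|S(T,v)|,|S_v|)$ vs.\ $\log|S(T,v)|$ discrepancy in the $\IG$ case is more careful than the paper, which simply declares the $\IG$ and $\VarGain$ cases ``similar'' without addressing it.
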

\begin{proof}
Suppose $G=\giniGain$; the proof for $G=\IG$ and $G=\VarGain$ is similar. Let $T$ be $\epsilon$-approximate w.r.t.\ $(S,\splfun)$ and let $v \in V(T)$ be any internal vertex of $T$. By Definition~\ref{def:approx_tree} there exists $S_v \in (\scX \times \scY)^*$ such that $\spl_v=\splfun(S_v)$ and $\EDR(S(T,v),S_v) \le \epsilon$. Then by a double application of Theorem~\ref{thm:smooth_gains}, and since $G(S_v,\spl_v) \ge G(S_v,\spl)$ for every $\spl \in \Spl$:
\begin{align}
    G(S(T,v),\spl_v)
       &\ge G(S_v,\spl_v) - 48 \, \EDR(S(T,v),S_v)
    \\ &\ge G(S_v,\spl_v^*) - 48 \, \EDR(S(T,v),S_v) \label{eq:XXX1}
    \\ &\ge G(S(T,v),\spl_v^*) - 96 \, \EDR(S(T,v),S_v)
\end{align}
which proves the claim since $\EDR(S(T,v),S_v) \le \epsilon$.
\end{proof}

\subsection{Balancedness of threshold decision rules}
\begin{lemma}\label{lem:bal_G}
Let $G$ be a conditional $g$-gain where $g$ satisfies the hypotheses of Lemma~\ref{lem:EDR_g}. Then for every $S \in (\scX \times \scY)^*$:
\begin{align}
    \frac{\min(|S_0|,|S_1|)}{|S|} \ge \frac{G(S,\spl)}{4f(|S|)}
\end{align}
where $(S_0,S_1)=\spl(S)$ and $f$ is as in Lemma~\ref{lem:EDR_g}.
\end{lemma}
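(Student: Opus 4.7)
The plan is to rewrite the conditional $g$-gain as a weighted sum of two ``stability'' terms, each comparing $g$ on $S$ to $g$ on one of the subsets produced by $\spl$. Since $|S|=|S_0|+|S_1|$, we have
\begin{align*}
G(S,\spl) = g(S) - \tfrac{|S_0|}{|S|}g(S_0) - \tfrac{|S_1|}{|S|}g(S_1)
= \tfrac{|S_0|}{|S|}\bigl(g(S)-g(S_0)\bigr) + \tfrac{|S_1|}{|S|}\bigl(g(S)-g(S_1)\bigr).
\end{align*}
So the gain is controlled entirely by how much $g$ changes when we restrict $S$ to $S_0$ or $S_1$.

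Next, assume without loss of generality that $|S_0|\le|S_1|$, so that $\min(|S_0|,|S_1|)=|S_0|$. The key observation is that passing from $S$ to the \emph{larger} subset $S_1$ only removes $|S_0|$ elements, hence $\ED(S,S_1)=|S_0|$ and $\EDR(S,S_1)=|S_0|/|S|$. Applying Lemma~\ref{lem:EDR_g} gives
$$|g(S)-g(S_1)| \le 3\,\EDR(S,S_1)\,f(|S|) = 3\,\tfrac{|S_0|}{|S|}\,f(|S|).$$
For the smaller side $S_0$ we only need the trivial bound: since $g\ge 0$ and $g(S),g(S_0)\le f(|S|)$ by the first hypothesis of Lemma~\ref{lem:EDR_g} (and the monotonicity of $f$), we have $|g(S)-g(S_0)|\le f(|S|)$.

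Substituting both bounds into the decomposition above yields
\begin{align*}
G(S,\spl) \le \tfrac{|S_0|}{|S|}\,f(|S|) + \tfrac{|S_1|}{|S|}\cdot 3\,\tfrac{|S_0|}{|S|}\,f(|S|)
\le \tfrac{|S_0|}{|S|}\,f(|S|)\bigl(1 + 3\bigr)
= 4\,\tfrac{\min(|S_0|,|S_1|)}{|S|}\,f(|S|),
\end{align*}
which rearranges to the desired inequality. There is no real obstacle: the only subtlety is remembering to apply the smoothness bound on the \emph{large} side (where the relative edit distance is small and Lemma~\ref{lem:EDR_g} is tight) and the trivial bound on the small side, rather than trying to use Lemma~\ref{lem:EDR_g} symmetrically.
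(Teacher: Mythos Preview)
Your proof is correct and follows essentially the same route as the paper's: both assume $|S_0|\le|S_1|$, apply Lemma~\ref{lem:EDR_g} to control $g(S)-g(S_1)$ via $\EDR(S,S_1)=|S_0|/|S|$, and dispose of the small side trivially (the paper simply drops the nonnegative term $\tfrac{|S_0|}{|S|}g(S_0)$ and uses $g(S)\le f(|S|)$, which is exactly your ``trivial bound'' $g(S)-g(S_0)\le f(|S|)$). Your explicit decomposition $G(S,\spl)=\tfrac{|S_0|}{|S|}(g(S)-g(S_0))+\tfrac{|S_1|}{|S|}(g(S)-g(S_1))$ makes the structure a bit more transparent, but the argument is the same.
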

\begin{proof}
Without loss of generality assume $|S_0| \le |S_1|$. Let $\eta=\EDR(S,S_1)$ and note that $\eta=\frac{|S_0|}{|S|}$ and $1-\eta=\frac{|S_1|}{|S|}$. Then:
    \begin{align}
        \frac{|S_0|}{|S|} g(S_0) +  \frac{|S_1|}{|S|} g(S_1) & \ge (1-\eta) g(S_1)
        \\ &\ge (1-\eta)(g(S) - 3 \eta f(|S|)) && \text{Lemma~\ref{lem:EDR_g}}
        \\ &\ge g(S) - \eta(g(S)+3f(|S|))
        \\ &\ge g(S) - 4\eta f(|S|)
    \end{align}
    We conclude that $G(S,\spl) \le 4 \eta f(n)$ and therefore 
    \begin{align}
    \frac{\min(|S_0|,|S_1|)}{|S|} = \frac{|S_0|}{|S|} = \eta \ge \frac{G(S,\spl)}{4 f(n)}
    \end{align}
    concluding the proof.
\end{proof}

\begin{theorem}\label{thm:balanced_rules}
Let $\Spl$ be any family of split rules and let $G \in \{\giniGain, \IG, \VarGain\}$. Then any max-$\gain$ decision rule $\splfun : (\scX\times\scY)^* \to \Spl \cup \Lab$ with threshold $\alpha$ is $\gamma$-balanced, where:
\begin{align}
    \gamma(|S|) =
    \left\{
    \begin{array}{ll}
        \frac{\alpha}{16} & \text{if }G=\giniGain \\
        \frac{\alpha}{20 \log |S|} & \text{if }G=\IG \\
        \frac{\alpha}{12c^2} & \text{if }G=\VarGain \\
    \end{array}
    \right.
\end{align}
where $c=\sup_{y \in \scY}|y|/2$.
\end{theorem}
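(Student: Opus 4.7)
The proof is a direct corollary of Lemma~\ref{lem:bal_G} combined with the fact that each of the three gains $\giniGain, \IG, \VarGain$ is a conditional $g$-gain whose $g$ satisfies the hypotheses of Lemma~\ref{lem:EDR_g}, with $f$ already identified in the smoothness lemmas of Section~\ref{sec:smooth}. The plan is to fix an arbitrary $S \in (\scX\times\scY)^*$ such that $\splfun(S) \in \Spl$, let $\spl = \splfun(S)$ and $(S_0,S_1) = \spl(S)$, and show that $\min(|S_0|,|S_1|)/|S|$ is bounded below by the stated $\gamma(|S|)$.

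The first step is to note that, because $\splfun$ is a max-$G$ decision rule with threshold $\alpha$, the assumption $\splfun(S) \in \Spl$ immediately gives $G(S,\spl) = \max_{\spl' \in \Spl} G(S,\spl') \ge \alpha$. The second step is to apply Lemma~\ref{lem:bal_G}, which yields
\begin{align}
    \frac{\min(|S_0|,|S_1|)}{|S|} \ge \frac{G(S,\spl)}{4 f(|S|)} \ge \frac{\alpha}{4 f(|S|)},
\end{align}
provided we supply, for each choice of $G$, a function $f$ that satisfies the hypotheses of Lemma~\ref{lem:EDR_g} w.r.t.\ the corresponding $g$.

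The third step is to read off $f$ from the three smoothness lemmas proved earlier in this section. For $G = \giniGain$, the function $g = \giniImp$ is bounded by $1$ and Lemma~\ref{lem:ED_gini_imp} gives $|\giniImp(S)-\giniImp(S')| \le 4/\max(|S|,|S'|)$ for $\ED(S,S') \le 1$, so we can take $f \equiv 4$; plugging in yields $\alpha/16$. For $G = \IG$, we have $H(S) \le \log|S|$ and Lemma~\ref{lem:ED_entropy} gives the unit-edit bound $5\log n / n$, so $f(n) = 5\log n$ works and yields $\alpha/(20 \log|S|)$. For $G = \VarGain$, we have $\Var(S) \le c^2$ and Lemma~\ref{lem:ED_variance} gives the unit-edit bound $3c^2/n$, so $f \equiv 3c^2$ works and yields $\alpha/(12 c^2)$.

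There is essentially no obstacle here: the heavy lifting has already been done in Lemma~\ref{lem:bal_G} and in the unit-edit smoothness lemmas, and the only content is matching up $f$ for each of the three gains and substituting. The only thing to check carefully is that the chosen $f$ simultaneously satisfies \emph{both} hypotheses of Lemma~\ref{lem:EDR_g} (pointwise upper bound on $g$, and the $f(n)/n$ Lipschitz bound on single-element edits), which is the case for all three choices above.
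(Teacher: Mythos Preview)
Your proposal is correct and follows exactly the same approach as the paper's proof: invoke Lemma~\ref{lem:bal_G} with the functions $f\equiv 4$, $f(n)=5\log n$, and $f\equiv 3c^2$ for $\giniGain$, $\IG$, $\VarGain$ respectively, and use the threshold property to get $G(S,\spl)\ge\alpha$. Your write-up is in fact slightly more explicit than the paper's, since you note that both hypotheses of Lemma~\ref{lem:EDR_g} (the pointwise bound $g\le f$ and the unit-edit Lipschitz bound) must be verified for each choice of $f$.
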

\begin{proof}
Let:
\begin{align}
    \begin{array}{ll}
        g=\giniImp \text{ and } f(n) = 4 & \text{if }G=\giniGain \\
        g=H \text{ and } f(n) = 5 \log n & \text{if }G=\IG \\
        g=\Var \text{ and } f(n) = 3c^2 & \text{if }G=\VarGain \\
    \end{array}
\end{align}
Then the claim follows by Lemma~\ref{lem:bal_G} by noting that, when $\spl=\splfun(S) \in \Spl$, by definition of rule with threshold $\alpha$ we have $G(S,\spl) \ge \alpha$.
\end{proof}

\section{Complexity of decision rules}\label{sec:complexity}
Recall that $\cost_{\splfun}$ is the complexity of computing $\splfun$ as a function of the length of the input. In this section we bound $\cost_{\splfun}$ for some common decision rules, proving:
\begin{theorem}\label{thm:splfun_cost}
Let $\Spl$ be the set of all split rules in the form $\spl(x)=\Ind_{x_j < t}$ or $\spl(x)=\Ind_{x_j = t}$, let $\gain \in \{\giniGain,\IG,\VarGain\}$, and let $\splfun : (\scX \times \scY) \to \Spl \cup \Lab$ be a max-$\gain$ threshold decision rule that assigns majority or average labels. Then $\cost_{\splfun}(n) \in O(d\, n \log n)$.
\end{theorem}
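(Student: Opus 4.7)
The plan is to exhibit an implementation of $\splfun$ running in $O(d\,n\log n)$. The algorithm has two phases: (i) a per-feature sweep to find the best split rule in $\Spl$, and (ii) a final threshold test that returns either that split rule or a majority/average label. Phase (ii) is trivially $O(n)$ via a single pass, using an associative array on labels for the majority. Phase (i) is the interesting part.

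For phase (i), the key identity is that any conditional $g$-gain has the form $G(S,\spl)=g(S) - \frac{|S_0|}{|S|}g(S_0) - \frac{|S_1|}{|S|}g(S_1)$, so with $S$ fixed, evaluating $G(S,\spl)$ reduces to computing $g$ on $S_0$ and $S_1$. I would maintain \emph{side statistics} for $(S_0,S_1)$ that allow recomputing $g(S_b)$ in $O(1)$ whenever a single example is moved between the sides. For $g=\giniImp$ these are $|S_b|$ together with $\sum_y n_y(S_b)^2$, updated via $(n_y{\pm}1)^2 - n_y^2 = \pm(2n_y{\pm}1)$. For $g=H$ they are $|S_b|$ and $\sum_y n_y(S_b)\log n_y(S_b)$, with a precomputed table $k\mapsto k\log k$ for $k\le n$ built in $O(n)$. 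For $g=\Var$ they are $|S_b|$, $\sum_{(x,y)\in S_b} y$, and $\sum_{(x,y)\in S_b} y^2$. Under the standard real-RAM model each single-example move thus costs $O(1)$.

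With these statistics in hand, for each feature $j$ I would sort $S$ by the $j$-th coordinate in $O(n\log n)$ (orderings on $\scX_j$ cost $O(1)$ per comparison by assumption). Starting from $S_0=\emptyset,\ S_1=S$ and sweeping the sorted list, moving one example at a time from $S_1$ to $S_0$ enumerates every candidate threshold for $\Ind_{x_j<t}$ in $O(n)$ total moves. For the rule $\Ind_{x_j=t}$ I would process the blocks of equal $j$-th coordinate one at a time: move the block to $S_0$, evaluate the gain, then move it back; summed over blocks the total mass moved is $O(n)$. Tracking the running argmax of $G(S,\spl)$ takes $O(1)$ per evaluation, so phase (i) costs $O(n\log n)$ per feature and $O(d\,n\log n)$ in total, which, together with the $O(n)$ cost of phase (ii), yields $\cost_{\splfun}(n)=O(d\,n\log n)$.

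The only delicate point is verifying the genuine $O(1)$ cost of each incremental update on the side statistics; this is immediate for Gini and variance by simple arithmetic, and for information gain is handled by either an $O(1)$-time $\log$ or the precomputed $k\log k$ table mentioned above.
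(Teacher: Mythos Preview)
Your proposal is correct and follows essentially the same sort-and-sweep strategy as the paper: for each feature $j$, order the examples by $x_j$ and maintain sufficient statistics of the two sides so that each candidate split's gain can be evaluated in (near-)constant time, giving $O(n\log n)$ per feature and $O(d\,n\log n)$ overall. The paper phrases the incremental updates via precomputed associative arrays $C,C_L,L$ and gain-specific algebraic identities rather than your uniform running-sum side statistics, but this is a cosmetic difference and does not affect the bound.
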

\noindent As a majority/average label can be computed in time $O(|S| \log |S|)$, to prove Theorem~\ref{thm:splfun_cost} it is sufficient to show that $\arg\max_{\spl \in \Spl}G(S,\spl)$ can be computed in time $O(d |S| \log |S|)$, which we do in Lemma~\ref{lem:cost_GiniGain}, Lemma~\ref{lem:cost_IG} and Lemma~\ref{lem:cost_Var}. Note that split rules in the form $\Ind_{x_j \le t}, \Ind_{x_j \ge t}, \Ind_{x_j > t}$ are captured by Theorem~\ref{thm:splfun_cost} by replacing $x_j$ with $-x_j$ and/or $\spl$ with $1-\spl$.
\newcommand{\dictN}{\text{C}}
\newcommand{\dictL}{\text{L}}
\newcommand{\dictNL}{\text{C}_{\text{L}}}
\begin{lemma}\label{lem:cost_GiniGain}
    Let $\Spl_j$ be the set of all split rules in the form $\spl(x)=\Ind_{x_j < t}$ or $\spl(x)=\Ind_{x_j = t}$. Then $\arg\max_{\spl \in \Spl_j}\giniGain(S,\spl)$ can be computed in time $O(|S| \log |S|)$.
\end{lemma}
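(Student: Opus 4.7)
The plan is to sort $S$ by the $j$-th feature and then evaluate every candidate split by a single pass through the sorted order, using incremental counters so that each evaluation costs $O(1)$ arithmetic on top of the initial $O(n \log n)$ sort. Sorting the $n = |S|$ examples by $x_j$ produces both the at-most-$n$ candidate thresholds for the rules $\Ind_{x_j < t}$ (the midpoints between consecutive distinct values) and the at-most-$n$ candidate values for the rules $\Ind_{x_j = t}$ (the distinct values themselves), so the rest of the argument only needs to show that each of these $O(n)$ candidates can be evaluated at amortized cost $O(\log n)$.

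The key observation is that if we set $Q(S') = \sum_y n_y(S')^2$, where $n_y(S')$ counts occurrences of label $y$ in $S'$, then $\giniImp(S') = 1 - Q(S')/|S'|^2$, so $\giniGain(S,\spl)$ is a constant-time arithmetic function of $|S_0|, |S_1|, Q(S_0), Q(S_1)$ and of the constant $\giniImp(S)$. Hence it suffices to maintain $|\cdot|$ and $Q$ on both sides of the current candidate split under single-element moves. Moving an example of class $y$ between the sides updates $n_y$ in an associative array of per-class counts in $O(\log n)$ (Section~\ref{sec:prelim}) and each $Q$-counter in $O(1)$ arithmetic, since $(n_y \pm 1)^2 - n_y^2 = \pm 2 n_y + 1$.

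For the threshold rules $\Ind_{x_j < t}$, I initialize all examples on the right (with $Q(S)$ and $|S|$ precomputed in a single pass through the sorted list) and then sweep from left to right, moving each example one at a time to the left; at every boundary between distinct $x_j$ values I record the current gain in $O(1)$. For the equality rules $\Ind_{x_j = t}$, the sort groups equal values into maximal runs; for each run of value $t$ I temporarily move its elements to a separate side playing the role of ``equal to $t$'', evaluate the corresponding gain in $O(1)$ using the $Q$-counters, and then move them back. Each example is involved in $O(1)$ moves across the whole procedure, so the total sweep cost is $O(n \log n)$, dominated by the initial sort.

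Returning a split that attains the recorded maximum proves the lemma. The only subtleties are triggering a threshold evaluation only at true boundaries between distinct $x_j$ values (so ties in $x_j$ are not split inconsistently) and cleanly resetting the ``equals $t$'' counters between runs; both are handled in $O(1)$ per example by tracking the value of $x_j$ of the previous example during the sweep. No single step constitutes a conceptual obstacle beyond orchestrating this incremental bookkeeping correctly.
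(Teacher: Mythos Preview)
Your proposal is correct and follows essentially the same approach as the paper: both rely on the observation that $\giniImp(S')=1-Q(S')/|S'|^2$ with $Q(S')=\sum_y n_y(S')^2$, and both maintain the $Q$-counters incrementally while sweeping over the distinct values of $x_j$ after an $O(n\log n)$ preprocessing step. The only cosmetic difference is that the paper builds per-value dictionaries first and treats the equality case without an explicit sort, whereas you sort once and handle both cases by moving elements during a single sweep; the underlying bookkeeping and running-time analysis are the same.
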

\begin{proof}
    Suppose $\Spl_j$ is the set of all split rules in the form $\Ind_{x_j = t}$. Let $S \in (\scX \times \scY)^*$, and for every value $t$ in the domain of the $j$-th feature let $\spl_t$ be the rule defined by $\spl_t(x)=\Ind_{x_j = t}$. Let $\dictN,\dictNL,\dictL$ be associative arrays with logarithmic access/update time and linear enumeration time. %First, we compute the set $X_j^S$ of distinct values $x_j$ and the set $Y^S$ of distinct labels $y$ appearing in $S$.
    First, in time $O(|S| \log |S|)$, go through every $(x,y) \in S$ and increase $\dictN[x_j]$, $\dictNL[x_j][y]$, $\dictL[y]$, and compute $|S|$ and $Q=\sum_{y \in L} (\dictL[y])^2$. Then in time $O(1)$ compute:
    \begin{align}
        \giniImp(S) &= 1 - \frac{Q}{|S|^2}
    \end{align}
    Now let $t \in \dictN$. Observe that, if $(S_{t,0},S_{t,1})=\spl_t(S)$, then:
    \begin{align}
        \giniImp(S_{t,0}) = 1 - \frac{Q - \sum_{y \in \dictNL[t]} \left( (L[y])^2 - (L[y]-\dictNL[t][y])^2 \right)}{(|S|-\dictN[t])^2}
    \end{align}
    and:
    \begin{align}
        \giniImp(S_{t,1}) &= 1 - \sum_{y \in \dictNL[t]} \frac{(\dictNL[t][y])^2}{(\dictN[t])^2}
    \end{align}
    Note that $\giniImp(S_{t,0})$ and $\giniImp(S_{t,1})$ can be computed in time $O(|\dictNL[t]|\log |S|)$ by iterating on $\dictNL[t]$. Since $|\dictNL[t]| \le \dictN[t]$ and $\sum_{t} \dictN[t] = |S|$, then in time $O(|S| \log |S|)$ one can compute $\giniImp(S_{t,0})$ and $\giniImp(S_{t,1})$ for all $t \in \dictN$ and therefore (using $\giniImp(S)$ and $\dictN[t]$) also $\giniGain(S,\spl_t)$. In time $O(|S|)$ one then finds and returns $t^* = \arg\max_{t \in \dictN}\giniGain(S,\spl_t)$.

    For the case $\Ind_{x_j < t}$, sort the distinct keys of $\dictN$ by increasing value in time $O(|S|\log|S|)$; let them be $t_1 < \ldots < t_k$. For $i=1,\ldots,k$ we keep track of cumulative versions of $\dictN$ and $\dictNL$ that store:
    \begin{align}
        \dictNL^{\le}[t_i][y] &= \sum_{j=1}^i \dictNL[t_j][y]
        \\
        \dictN^{\le}[t_i] &= \sum_{j=1}^i \dictN[t_j]
    \end{align}
    Let $N_{0,0}=0$ and $N_{0,1}=1$, and for all $i=1,\ldots,k$ define:
    \begin{align}
        N_{i,0} &= \sum_{y \in L} \left(\sum_{j=1}^i \dictNL[t_j][y]\right)^2
        = \sum_{y \in L} \left(\dictNL^{\le}[t_i][y]\right)^2
        \\
        N_{i,1} &= \sum_{y \in L} \left(L[y] - \sum_{j=1}^i \dictNL[t_j][y]\right)^2 
        = \sum_{y \in L} \left(L[y]-\dictNL^{\le}[t_i][y]\right)^2
    \end{align}
    Note that, if $(S_{t_i,0},S_{t_i,1})=\spl_{t_i}(S)$, then:
    \begin{align}
        \giniImp(S_{t_i,0}) &= 1 - \frac{N_{i,0}}{\left(|S|-\dictNL^{\le}[t_i]\right)^2}
        \\
        \giniImp(S_{t_i,1}) &= 1 - \frac{N_{i,1}}{\left(\dictNL^{\le}[t_i]\right)^2}
    \end{align}
    It is not hard to compute $\dictNL^{\le}[t_{i+1}][y]$ and $\dictN^{\le}[t_{i+1}]$ from  $\dictNL^{\le}[t_{i}][y]$ and $\dictN^{\le}[t_{i}]$ in time $O(|\dictNL[t_{i+1}]|)$, and therefore to compute $N_{i,0}$ and $N_{i,1}$ and thus $\giniImp(S_{t_i,0})$ and $\giniImp(S_{t_i,1})$ for all $i$ in total time $O(|S|\log |S|)$. This implies the claim in the same way as in the previous case.
\end{proof}
\begin{lemma}\label{lem:cost_IG}
    Let $\Spl_j$ be the set of all split rules in the form $\spl(x)=\Ind_{x_j < t}$ or $\spl(x)=\Ind_{x_j = t}$. Then $\arg\max_{\spl \in \Spl_j}\IG(S,\spl)$ can be computed in time $O(|S| \log |S|)$.\end{lemma}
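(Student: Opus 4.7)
The plan is to mirror the proof of Lemma~\ref{lem:cost_GiniGain}, replacing the Gini sum of squares with the ``unnormalized entropy'' $\tilde H(S) = |S|\,H(S) = -\sum_{y}\dictL[y]\log(\dictL[y]/|S|)$, which satisfies
\begin{align}
    \IG(S,\spl) = \frac{1}{|S|}\bigl(\tilde H(S) - \tilde H(S_0) - \tilde H(S_1)\bigr)
\end{align}
where $(S_0,S_1)=\spl(S)$. Since $\tilde H(S) = -\sum_y \dictL[y]\log \dictL[y] + |S|\log|S|$, it suffices to maintain the quantity $A(S)=\sum_y \dictL[y]\log \dictL[y]$ for the two sides of each candidate split: once $A(S_0)$ and $A(S_1)$ are known, $\tilde H(S_0)$, $\tilde H(S_1)$ and hence $\IG(S,\spl)$ are obtained in $O(1)$ extra time.

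First I would build, in time $O(|S|\log|S|)$, the same three associative arrays $\dictN$, $\dictNL$, $\dictL$ as in the proof of Lemma~\ref{lem:cost_GiniGain}, and precompute $A(S)=\sum_{y \in \dictL}\dictL[y]\log \dictL[y]$ in time $O(|S|)$ along with $|S|$. For the equality case $\spl_t(x)=\Ind_{x_j=t}$, the labels appearing in $S_{t,1}$ are exactly those in $\dictNL[t]$ with multiplicities $\dictNL[t][y]$, so $A(S_{t,1})=\sum_{y\in\dictNL[t]}\dictNL[t][y]\log \dictNL[t][y]$ can be computed in $O(|\dictNL[t]|\log|S|)$ time by enumerating $\dictNL[t]$. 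For $S_{t,0}$ the only labels whose counts differ from $\dictL$ are again those in $\dictNL[t]$, so
\begin{align}
    A(S_{t,0}) = A(S) - \sum_{y\in\dictNL[t]}\dictL[y]\log \dictL[y] + \sum_{y\in\dictNL[t]}(\dictL[y]-\dictNL[t][y])\log(\dictL[y]-\dictNL[t][y]),
\end{align}
also computable in $O(|\dictNL[t]|\log|S|)$ time. Summing over $t$ and using $\sum_t|\dictNL[t]|\le|S|$ yields a total of $O(|S|\log|S|)$.

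For the ordered case $\spl_{t_i}(x)=\Ind_{x_j<t_i}$, I would sort the distinct keys of $\dictN$ in increasing order $t_1<\ldots<t_k$ in time $O(|S|\log|S|)$ and, going from $i$ to $i{+}1$, maintain $A_0^i := A(S_{t_i,0})$ and $A_1^i := A(S_{t_i,1})$ incrementally: only the labels $y\in\dictNL[t_{i+1}]$ change, with cumulative counts $c_{i+1}[y]=c_i[y]+\dictNL[t_{i+1}][y]$ on the ``$<$'' side and $r_{i+1}[y]=r_i[y]-\dictNL[t_{i+1}][y]$ on the ``$\ge$'' side, so
\begin{align}
    A_0^{i+1} = A_0^i + \sum_{y\in\dictNL[t_{i+1}]}\bigl(c_{i+1}[y]\log c_{i+1}[y] - c_i[y]\log c_i[y]\bigr),
\end{align}
and analogously for $A_1^{i+1}$ using $r_i, r_{i+1}$. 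Each update costs $O(|\dictNL[t_{i+1}]|\log|S|)$, which telescopes to $O(|S|\log|S|)$ overall. At every step we also know $|S_{t_i,0}|$ and $|S_{t_i,1}|$ from the cumulative version of $\dictN$, so $\tilde H(S_{t_i,0})$, $\tilde H(S_{t_i,1})$, and $\IG(S,\spl_{t_i})$ are derived in $O(1)$ and the $\arg\max$ is taken by a final scan in $O(|S|)$.

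The only delicate point is justifying that the entropy update really does decompose over only the changing labels; but since $x\log x$ is evaluated pointwise and labels outside $\dictNL[t_{i+1}]$ keep identical multiplicities on both sides of the split, their contribution to $A_0$ and $A_1$ is unchanged, so the incremental update is exact. All other operations (associative array access, arithmetic and logarithm evaluation) are $O(\log|S|)$ or $O(1)$ in the standard RAM model, giving the claimed bound $O(|S|\log|S|)$.
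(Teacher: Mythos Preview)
Your proof is correct and follows essentially the same approach as the paper: both exploit that between consecutive candidate thresholds only the labels in $\dictNL[t]$ change their counts, so each update costs $O(|\dictNL[t]|\log|S|)$ and the total telescopes to $O(|S|\log|S|)$. Your reformulation via the unnormalized entropy $\tilde H(S)=|S|H(S)$ and the auxiliary $A(S)=\sum_y \dictL[y]\log\dictL[y]$ is a slightly cleaner algebraic packaging than the paper's direct manipulation of $H(S_{t,0})$, but the underlying computation is the same.
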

\begin{proof}
    Suppose $\Spl_j$ is the set of all split rules in the form $\Ind_{x_j = t}$.
    Using the same notation of Lemma~\ref{lem:cost_GiniGain}, note that:
    \begin{align}
        H(S) = \sum_{y \in \dictL} \frac{\dictL[y]}{|S|} \log \frac{|S|}{\dictL[y]}
    \end{align}
    Moreover, for every $t \in \dictN$:
    \begin{align}
        H(S_{t,1}) = \sum_{y \in \dictNL[t]} \frac{\dictNL[t][y]}{\dictN[t]} \log \frac{\dictN[t]}{\dictNL[t][y]}
    \end{align}
    Thus we can compute $H(S_{t,1})$ in time $O(|\dictNL[t]| \log |S|)$. To show that the same holds for $H(S_{t,0})$ we need some more manipulations. Note that:
    \begin{align}
        H(S_{t,0}) &= \sum_{y \in \dictL} \frac{\dictL[y] - \dictNL[t][y]}{|S| - \dictN[t]} \log \frac{|S| - \dictN[t]}{\dictL[y] - \dictNL[t][y]} \\
        &= \sum_{y \in \dictL} \frac{\dictL[y]}{|S| - \dictN[t]} \log \frac{|S| - \dictN[t]}{\dictL[y]} \\&\quad+ \sum_{y \in \dictNL[t]} \left(\frac{\dictL[y] - \dictNL[t][y]}{|S| - \dictN[t]} \log \frac{|S| - \dictN[t]}{\dictL[y] - \dictNL[t][y]} - \frac{\dictL[y]}{|S| - \dictN[t]} \log \frac{|S| - \dictN[t]}{\dictL[y]} \right) \nonumber
    \end{align}
    The second summation can clearly be computed in time $O(|\dictNL[t]| \log |S|)$. The first summation can instead be written as:
    \begin{align}
        & \frac{\log (|S| - \dictN[t])}{|S| - \dictN[t]} \sum_{y \in \dictL} \dictL[y] 
        -
        \frac{1}{|S| - \dictN[t]} \sum_{y \in \dictL} \dictL[y] \log \dictL[y]
    \end{align}
    which can be computed in time $O(\log |S|)$ if we precompute $\sum_{y \in \dictL} \dictL[y] \log \dictL[y]$ --- note that $\sum_{y \in \dictL} \dictL[y] =|S|$.  We conclude that $\IG(S,\spl_t)$ can be computed in time $O(|\dictNL[t]| \log |S|)$. Since $\sum_t |\dictNL[t]| \le |S|$, one can compute $\IG(S,\spl_t)$ for all $t \in \dictN$ in time $O(|S| \log |S|)$. In time $O(|S|)$ one then finds and returns $t^* = \arg\max_{t}\IG(S,\spl_t)$.

    The case $\Ind_{x_j < t}$ is similar, see the proof of Lemma~\ref{lem:cost_GiniGain}.
\end{proof}

\begin{lemma}\label{lem:cost_Var}
    Let $\Spl_j$ be the set of split rules in the form $\spl(x)=\Ind_{x_j < t}$ or $\spl(x)=\Ind_{x_j = t}$. Then $\spl^*_{\Spl_j}(S,\VarGain)$ can be computed in time $O(|S| \log |S|)$.
\end{lemma}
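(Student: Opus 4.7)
The plan mirrors the proofs of Lemma~\ref{lem:cost_GiniGain} and Lemma~\ref{lem:cost_IG}, but is simpler because the variance depends on $S$ only through three scalars: $|S|$, $Y(S) = \sum_{(x,y)\in S} y$, and $Q(S) = \sum_{(x,y)\in S} y^2$. Indeed, a standard identity gives
\begin{align}
    \Var(S) = \frac{Q(S)}{|S|} - \left(\frac{Y(S)}{|S|}\right)^2,
\end{align}
so $\Var(S)$ can be computed in $O(1)$ from $(|S|,Y(S),Q(S))$, and the same is true of $\Var(S_{t,0})$ and $\Var(S_{t,1})$ once we know the analogous three quantities for the two sides of the split. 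Since $\VarGain$ is the conditional $\Var$-gain, this also yields $\VarGain(S,\spl_t)$ in $O(1)$ per candidate $t$, after a suitable preprocessing.

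First I would handle the case $\spl(x) = \Ind_{x_j = t}$. Using associative arrays $\dictN$, $Y_{\dictN}$, $Q_{\dictN}$ with logarithmic update time, in a single pass over $S$ (cost $O(|S|\log|S|)$) compute $\dictN[t] = |\{(x,y)\in S : x_j = t\}|$, $Y_{\dictN}[t] = \sum_{(x,y) \in S : x_j = t} y$, and $Q_{\dictN}[t] = \sum_{(x,y) \in S : x_j = t} y^2$, together with the totals $|S|$, $Y(S)$, $Q(S)$. Then for each key $t \in \dictN$ we have
\begin{align}
    (|S_{t,1}|,Y(S_{t,1}),Q(S_{t,1})) &= (\dictN[t],Y_{\dictN}[t],Q_{\dictN}[t]), \\
    (|S_{t,0}|,Y(S_{t,0}),Q(S_{t,0})) &= (|S|-\dictN[t],\,Y(S)-Y_{\dictN}[t],\,Q(S)-Q_{\dictN}[t]),
\end{align}
so $\VarGain(S,\spl_t)$ is computed in $O(1)$ per $t$ and hence in $O(|S|)$ overall. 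One extra linear scan returns $t^* = \arg\max_{t \in \dictN} \VarGain(S,\spl_t)$, for a total of $O(|S|\log|S|)$.

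For the case $\spl(x) = \Ind_{x_j < t}$, sort the distinct values of $x_j$ occurring in $S$ as $t_1 < \ldots < t_k$ in time $O(|S|\log|S|)$, and compute for each $t_i$ the triple $(\dictN[t_i],Y_{\dictN}[t_i],Q_{\dictN}[t_i])$ as above. Now maintain three running cumulative sums
\begin{align}
    \dictN^{\le}[t_i] = \sum_{j\le i}\dictN[t_j], \quad Y^{\le}[t_i]=\sum_{j\le i}Y_{\dictN}[t_j], \quad Q^{\le}[t_i]=\sum_{j\le i}Q_{\dictN}[t_j],
\end{align}
updating each from its value at $t_{i-1}$ in $O(1)$. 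Since $\spl_{t_{i+1}}(S)$ sends to the ``$<$'' side exactly those $(x,y)$ with $x_j \le t_i$, the triple for $S_{t_{i+1},0}$ is $(\dictN^{\le}[t_i],Y^{\le}[t_i],Q^{\le}[t_i])$, and the triple for $S_{t_{i+1},1}$ is obtained from the totals by subtraction. Hence $\VarGain(S,\spl_{t_{i+1}})$ is computed in $O(1)$ per index $i$, giving $O(|S|\log|S|)$ in total including the initial sort; a final linear scan returns the best threshold, concluding the proof.

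The only potential obstacle is numerical: the running sum $Q^{\le}$ could in principle be updated destructively and then prove inaccessible at later rounds, but this is avoided because we only read cumulative values in order of increasing $i$, and the identity $Q(S_{t,0})+Q(S_{t,1})=Q(S)$ lets us recover the complementary triple from the totals without separate bookkeeping. All arithmetic is elementary and each array operation costs $O(\log|S|)$, so the bound $O(|S|\log|S|)$ holds.
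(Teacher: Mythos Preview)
Your proposal is correct and follows exactly the approach the paper intends: it defers to the argument of Lemma~\ref{lem:cost_GiniGain} (the paper's own proof merely says ``similar to that of Lemma~\ref{lem:cost_GiniGain}''), instantiating it with the obvious simplification that $\Var$ depends on each side of the split only through the triple $(|\cdot|,\sum y,\sum y^2)$, which makes each candidate evaluable in $O(1)$ after an $O(|S|\log|S|)$ preprocessing.
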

\begin{proof}
    The proof is similar to that of Lemma~\ref{lem:cost_GiniGain}.
\end{proof}

\appendix

\section{Gain measures}\label{apx:gains}
We recall the definitions of Gini gain, information gain and variance gain. 

\subsection{Gini gain}
Let $\scY=\{1,\ldots,k\}$. The \emph{Gini Impurity} of $S\in(\scX\times\scY)^*$ is:
\begin{align}
    \giniImp(S) = 1 - \sum_{i=1}^k p_i^2(S)
\end{align}
where $p_i(S)$ is the fraction of examples of $S$ having label $i$.
The \emph{Gini gain} $\giniGain : (\scX \times \scY)^* \times \Spl$ is the conditional $\giniImp$-gain.

\subsection{Information gain}
Let $X \in \N$ be a random variable. The entropy of $X$ is:
\begin{align}
    H(X) = \sum_{x} \Pr(X=x) \log \frac{1}{\Pr(X=x)}
\end{align}
If $X$ takes on at most $k$ distinct values with positive probability then $H(X) \le \log k$. For $p \in [0,1]$ let $H(p)$ be the entropy of a Bernoulli random variable of parameter $p$. Let $Y,Z \in \N$ be two random variables defined on the same space of events. The conditional entropy of $Y$ given $Z$ is:
\begin{align}
    H(Y|Z) = \sum_{z} \Pr(Z=z) H(Y|Z=z)
\end{align}
The \emph{mutual information} or \emph{information gain} between $Y$ and $Z$ is:
\begin{align}
    \IG(Y,Z) = H(Y) - H(Y|Z) = H(Z) - H(Z|Y) = H(Y)+H(Z)-H((Y,Z)) 
\end{align}

Now let $S \in (\scX \times \scY)^*$. The entropy of $S$ and the information gain of a split rule $\spl$ on $S$ are defined as follows: letting $(X,Y)$ be a random uniform element of $S$,
\begin{align}
    H(S) &=H(Y)   
    \\
    \IG(S,\spl) &=\IG(Y,\spl(X))
\end{align}
One can see that $\IG : (\scX \times \scY)^* \times \Spl$ is the conditional $H$-gain.

\subsection{Variance gain}
Let $\scY = \R$. The variance of the labels of a set $S$ is:
\begin{align}
    \Var(S) = \frac{1}{|S|^2}\sum_{s,s'\in S} (y-y')^2
\end{align}
where $s=(x,y)$ and $s'=(x',y')$. The \emph{variance gain} $\VarGain : (\scX \times \scY)^* \times \Spl$ is the conditional $\Var$-gain.

\section{Proofs for Section~\ref{sec:dyn}}
\label{apx:approx}
\subsection{Proof of Theorem~\ref{thm:fudy_eps_feasible}}
We prove a slightly different statement, from which Theorem~\ref{thm:fudy_eps_feasible} follows by substituting $\epsilon$.
\begin{theorem}\label{thm:fudy_eps_feasible_full}
    Let $\Spl$ be the set of split rules in the form $\spl(x)=\Ind_{x_j < t}$ or $\spl(x)=\Ind_{x_j = t}$, let $G\in\{\giniGain,\IG,\VarGain\}$, and let $\beps=(\alpha,\beta) \in (0,1]^2$.
    %There is an algorithm that maintains an $\beps$-feasible decision tree using a worst-case number of operations per update request in:
    One can maintain an $\beps$-feasible decision tree using $O\!\left( \frac{d \log^4 n}{\epsilon^3} \right)$ operations per update request in the worst case, where
    \begin{align*}
        \epsilon = \left\{
        \begin{array}{ll}
            \frac{\min(\alpha,\beta)}{100} & G=\giniGain \\[5pt]
            \frac{\min(\alpha,\beta)}{130 \log n} ~~& G=\IG \\[5pt]
            \frac{\min(\alpha,\beta)}{80c^2} & G=\VarGain, \; c = \sup_{y \in \scY} |y|/2
        \end{array}
        \right.
    \end{align*}
\end{theorem}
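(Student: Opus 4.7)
The plan is to instantiate the general dynamic $\epsilon$-approximate decision tree algorithm of Theorem~\ref{thm:main} with a carefully chosen max-$G$ decision rule, and then use Theorem~\ref{thm:eps_to_beps} as a ``post-processing'' lemma that upgrades $\epsilon$-approximation to $\beps$-feasibility for free. Concretely, I would fix $\splfun$ to be a max-$G$ decision rule with threshold $\tfrac{\alpha}{2}$ on the family $\Spl$ (assigning majority/average labels when the threshold is not met). This is exactly the kind of rule that Theorem~\ref{thm:eps_to_beps} is designed to promote into an $\beps$-feasibility guarantee, and Theorem~\ref{thm:splfun_cost} already tells us that $\cost_{\splfun}(n)=O(d\,n\log n)$ for such a rule on the split families in question.

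Given this choice, the next step is to check that the two hypotheses of Theorem~\ref{thm:main}, namely that $\splfun$ is $\gamma$-balanced for some $\gamma$ and that $\epsilon\le \tfrac{\gamma}{3}$, are satisfied. Balancedness comes from Theorem~\ref{thm:balanced_rules} applied with threshold $\tfrac{\alpha}{2}$: this gives $\gamma=\Theta(\alpha)$ for Gini, $\gamma=\Theta(\alpha/\log n)$ for information gain, and $\gamma=\Theta(\alpha/c^2)$ for variance gain. A direct numerical check of the constants shows that the $\epsilon$ chosen in the statement --- $\min(\alpha,\beta)/100$, $\min(\alpha,\beta)/(130\log n)$, $\min(\alpha,\beta)/(80c^2)$ --- is in each case bounded above by $\tfrac{\gamma}{3}$, so Theorem~\ref{thm:main} is applicable. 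In parallel, the same $\epsilon$ is exactly the value appearing in Theorem~\ref{thm:eps_to_beps}, so any $\epsilon$-approximate tree maintained by \AlgoMain\ is automatically $\beps$-feasible w.r.t.\ $(\Spl,G,S^i)$ at every time $i$.

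Once the hypotheses are verified, the running time essentially writes itself. Plugging $\cost_{\splfun}(n)=O(d\,n\log n)$ into the bound of Theorem~\ref{thm:main} gives
\begin{align*}
    O\!\left(\frac{\log^3 n}{\epsilon^3}\cdot\Big(d\log n + d\log n\Big)\right)
    = O\!\left(\frac{d\log^4 n}{\epsilon^3}\right),
\end{align*}
which is precisely the bound claimed in Theorem~\ref{thm:fudy_eps_feasible_full}. Substituting the three expressions for $\epsilon$ then yields the three gain-specific bounds of Theorem~\ref{thm:fudy_eps_feasible} (the $c^6$ and the $\log^7 n$ factors arise entirely from the $1/\epsilon^3$ term).

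The main obstacle, if any, is really bookkeeping rather than mathematics: one must make sure that the constants in Theorem~\ref{thm:balanced_rules} (which give $\gamma$) and in Theorem~\ref{thm:eps_to_beps} (which fixes $\epsilon$) are mutually consistent, so that the chosen $\epsilon$ simultaneously (i) lies below $\tfrac{\gamma}{3}$, (ii) satisfies the hypotheses of Theorem~\ref{thm:eps_to_beps}, and (iii) is large enough that $1/\epsilon^3$ does not introduce unexpected extra factors. I would verify these three inequalities explicitly for each of $G\in\{\giniGain,\IG,\VarGain\}$ at the start of the proof, and then let Theorem~\ref{thm:main} and Theorem~\ref{thm:eps_to_beps} do the rest. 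No new algorithmic ideas or smoothness arguments are needed beyond those already established in Sections~\ref{sec:dyn}--\ref{sec:complexity}.
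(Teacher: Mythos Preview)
Your proposal is correct and follows essentially the same route as the paper: choose a max-$G$ threshold-$\tfrac{\alpha}{2}$ rule, invoke Theorem~\ref{thm:balanced_rules} for $\gamma$-balancedness and Theorem~\ref{thm:splfun_cost} for $\cost_{\splfun}(n)=O(d\,n\log n)$, verify $\epsilon\le\gamma/3$, apply Theorem~\ref{thm:main}, and finish with Theorem~\ref{thm:eps_to_beps}. The only detail the paper adds that you omit is a one-line remark that a priority queue at each leaf suffices to return the majority/average label in $O(1)$ without affecting the bound.
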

\begin{proof}
    Let us first consider the case $G=\giniGain$. Define $\splfun$ to be a max-gain decision rule with threshold $\frac{\alpha}{2}$ that assigns majority/average labels. By Theorem~\ref{thm:balanced_rules}, $\splfun$ is $\gamma$-balanced for $\gamma=\frac{\alpha}{32}$. By Theorem~\ref{thm:splfun_cost}, $\cost_{\splfun}(n) = O(d n \log n)$. Let $\epsilon=\frac{\min(\alpha,\beta)}{100}$; clearly $\epsilon < \frac{\gamma}{2}$ and $\gamma-2\epsilon=\Theta(\epsilon)$. By Theorem~\ref{thm:main}, there is an algorithm that maintains an $\epsilon$-approximate tree using $O\!\left(\frac{d \log^4 n}{\epsilon^3}\right)$ operations per update request. It is straightforward to keep at every leaf $v$ of $T$ a priority queue that maps every label to its count and, in constant time, returns a majority label or average label of $S(T,v)$ without altering the running time.
    To show that the tree maintained by the algorithm is $\beps$-feasible, consider the current active set $S^i$ and the current tree $T^i$ and apply Theorem~\ref{thm:eps_to_beps}.

    The proof for $G=\IG$ is similar. Define $\splfun$ as above. By Theorem~\ref{thm:balanced_rules}, $\splfun$ is $\gamma$-balanced where $\gamma(n)=\frac{\alpha}{40 \log n}$, by Theorem~\ref{thm:splfun_cost} $\cost_{\splfun}(n) = O(d n \log n)$, and letting $\epsilon=\frac{\min(\alpha,\beta)}{130 \log n}$ ensures $\epsilon < \frac{\gamma}{2}$ and $\gamma-2\epsilon=\Theta(\epsilon)$. The $O\!\left(\frac{d \log^4 n}{\epsilon^3}\right)$ bound and the $\beps$-feasibility follows like above.
    
    The proof for $G=\VarGain$ is completely analogous.
\end{proof}

\section{Ancillary results}
\begin{lemma}\label{lem:greedy}
$\greedy_{\splfun}(S)$ runs in time $O\!\left(\hat h \cdot \Big(\cost_{\splfun}(|S|) + d\,|S|\log|S| \Big)\right)$ where $\hat h$ is the height of the returned tree.
\end{lemma}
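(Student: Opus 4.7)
The plan is to analyze the total cost of $\greedy_{\splfun}(S)$ by a level-by-level accounting in the output tree $T$. At each internal vertex $v$ with $n_v = |S(T,v)|$, the recursive call performs: (i) computing $\splfun(S(T,v))$ together with the induced split $(S_0,S_1) = \spl(S(T,v))$, which by assumption on $\cost_{\splfun}$ takes $O(\cost_{\splfun}(n_v))$ time; and (ii) inserting the elements of $S_0$ and $S_1$ into the associative arrays $D(T,v_0)$ and $D(T,v_1)$ of the two children. Since each associative-array operation involves comparing labeled examples (which costs $O(d)$) and descending a self-balancing search tree of size $O(n_v)$ (which costs $O(\log n_v)$), the insertions cost $O(d\, n_v \log n_v)$ in total. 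A leaf vertex incurs only the $\splfun$ cost. So the per-vertex cost is $O\!\left(\cost_{\splfun}(n_v) + d\, n_v \log n_v\right)$.

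Next, I would fix a depth $k \in \{0,\ldots,\hat h\}$ and let $V_k$ be the vertices of $T$ at depth $k$. Because the sets $\{S(T,v) : v \in V_k\}$ form a partition of $S$, we have $\sum_{v \in V_k} n_v \le |S|$. The paper's assumptions on $\cost_{\splfun}$ (twice differentiable with $\cost_{\splfun}', \cost_{\splfun}'' \ge 0$) make $\cost_{\splfun}$ convex and nondecreasing with $\cost_{\splfun}(0) \ge 0$, which gives superadditivity and hence $\sum_{v \in V_k} \cost_{\splfun}(n_v) \le \cost_{\splfun}(|S|)$. Combined with $\sum_{v \in V_k} d\, n_v \log n_v \le d\,|S|\log|S|$, the total work at depth $k$ is $O\!\left(\cost_{\splfun}(|S|) + d\,|S|\log|S|\right)$. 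Summing over the $\hat h + 1 = O(\hat h)$ depths of $T$ yields the claimed bound.

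The only minor bookkeeping point is the construction of $D(T,\text{root})$ itself: if $S$ is supplied as an iterator rather than preloaded, one builds $D(T,\text{root})$ from scratch by inserting its elements one by one, taking $O(d\,|S|\log |S|)$, which is absorbed into the depth-$0$ budget. There is no real obstacle here; the argument is a clean level-by-level summation in which superadditivity of $\cost_{\splfun}$ controls term (i) and the partition property $\sum n_v \le |S|$ controls term (ii).
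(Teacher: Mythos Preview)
Your proposal is correct and follows essentially the same approach as the paper: bound the per-vertex work by $O(\cost_{\splfun}(n_v) + d\,n_v\log n_v)$, use the partition property $\sum_{v \in V_k} n_v \le |S|$ at each depth together with superadditivity of $\cost_{\splfun}$, and sum over the $O(\hat h)$ levels. The only cosmetic difference is that the paper charges populating $D(T,v)$ to $v$ itself rather than to its parent, but the accounting is equivalent.
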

\begin{proof}
For each $v\in V(T)$, $\greedy_{\splfun}$ spends time $\cost_{\splfun}(S(T,v)) + O(d|S(T,v)|\log |S(T,v)|)$ to compute $\splfun(S(T,v))$ and populate $D(T,v)$. If $v$ is internal then $\greedy_{\splfun}$ also spends time $O(d|S(T,v)|)$ to compute $(S(T,v_1),S(T,v_2))=\spl(S(T,v))$ where $v_1,v_2$ are the children of $v$; this is dominated by the bound above. The claim follows by the assumptions on $\cost_{\splfun}$ and since $\sum_{v \in V_i} |S(T,v)| \le |S|$ where $V_i$ by the subset of $V$ at depth $i$.
\end{proof}

\begin{lemma}\label{lem:balanced_approx}
If $T$ is $\epsilon$-approximate w.r.t.\ $(S,\splfun)$ and $\splfun$ is $\gamma$-balanced, then every $v,w \in V(T)$ with $w$ child of $v$ satisfy $|S(T,w)| \ge (\gamma - 2\epsilon) |S(T,v)|$, and therefore $h(T) = O\!\left(\frac{\log |S|}{\gamma-2\epsilon}\right)$.
\end{lemma}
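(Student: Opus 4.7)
The plan is to exploit the $\epsilon$-approximation to transfer the $\gamma$-balancedness from the ``nearby'' set $S_v$ back to the actual set $S(T,v)$, and then to deduce the height bound by observing that both children of every internal vertex shrink by a constant factor.

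First I would fix an internal vertex $v \in V(T)$ with child $w$. By Definition~\ref{def:approx_tree} there exists $S_v$ with $\EDR(S(T,v),S_v) \le \epsilon$ and $\splfun(S_v) = \spl_v \in \Spl$, so by $\gamma$-balancedness both parts of $\spl_v(S_v)=(B_0,B_1)$ satisfy $|B_b| \ge \gamma|S_v|$. Write $\spl_v(S(T,v)) = (A_0,A_1)$, so $|S(T,w)|=|A_b|$ for some $b$. The key observation is that partitioning two sets with the \emph{same} split rule preserves their edit distance on each side, i.e.\ $\ED(A_b,B_b) \le \ED(S(T,v),S_v)$, because each example counted in the symmetric difference of $A_b,B_b$ is also counted in the symmetric difference of $S(T,v),S_v$.

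Next, let $n_v=|S(T,v)|$, $m_v=|S_v|$, $M=\max(n_v,m_v)$, and $k=\ED(S(T,v),S_v) \le \epsilon M$. The bound I would derive is
\begin{align}
    |S(T,w)| = |A_b| \ge |B_b| - k \ge \gamma m_v - \epsilon M.
\end{align}
A two-case analysis then yields the lemma. If $M=n_v$, then $m_v \ge n_v - k \ge (1-\epsilon)n_v$, so $|A_b| \ge \gamma(1-\epsilon)n_v - \epsilon n_v \ge (\gamma-2\epsilon)n_v$ using $\gamma \le 1$. If $M=m_v \ge n_v$, then directly $|A_b| \ge (\gamma-\epsilon)m_v \ge (\gamma-\epsilon)n_v \ge (\gamma-2\epsilon)n_v$. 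Either way, $|S(T,w)| \ge (\gamma-2\epsilon)|S(T,v)|$.

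Finally, for the height bound, apply the inequality above to \emph{both} children of every internal vertex: since their sizes sum to $|S(T,v)|$, the larger child has size at most $(1-(\gamma-2\epsilon))|S(T,v)|$. Hence, along any root-to-leaf path, the set size shrinks by a factor at least $1-(\gamma-2\epsilon)$ per step, and since all set sizes are at least $1$, we obtain $h(T) \le \log_{1/(1-(\gamma-2\epsilon))}|S| = O\!\left(\frac{\log|S|}{\gamma-2\epsilon}\right)$ by the standard bound $\log(1/(1-x)) \ge x$ for $x \in (0,1)$. The only place where care is needed is handling the asymmetry in $\EDR$ (its normalization by $\max(|S(T,v)|,|S_v|)$), which is what forces the two-case analysis; otherwise everything is routine.
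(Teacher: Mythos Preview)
Your proof is correct and follows essentially the same approach as the paper: pass from $S(T,v)$ to the nearby $S_v$ where $\spl_v$ is $\gamma$-balanced, bound the loss by the edit distance, and then deduce the height bound from the resulting shrinkage factor. Your two-case analysis on which of $|S(T,v)|,|S_v|$ realizes the $\max$ in $\EDR$ is in fact more careful than the paper's chain of inequalities, which tacitly uses $\ED(S_v,S(T,v)) \le \epsilon\,|S(T,v)|$ without isolating the case $|S_v|>|S(T,v)|$.
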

\begin{proof}
Let $S_v$ be the set for which $v$ satisfies the definition of $\epsilon$-approximate tree. Then:
\begin{align}
|S(T,w)| &\ge |S_v(T,w)| - \ED(S_v,S(T,v))
\\ &\ge |S_v(T,w)| - \epsilon |S(T,v)|
\\ &\ge \gamma |S_v| - \epsilon |S(T,v)|
\\ &\ge \gamma (1-\epsilon) |S(T,v)| - \epsilon |S(T,v)|
\\ &\ge (\gamma - 2\epsilon) |S(T,v)|
\end{align}
This implies $|S(T,w)| \le (1-(\gamma-2\epsilon))|S(T,v)|$ and thus the claim on $h(T)$.
\end{proof}

\bibliographystyle{plain}
\bibliography{references}

\end{document}